\pgfplotsset{width=5cm,compat=1.15}
\definecolor{zzttqq}{rgb}{0.6,0.2,0.0}
\newcommand{\bl}[1]{{\color{blue} [BL: #1]}}
\newcommand{\newbl}[1]{{\color{blue}  #1}}
\newtheorem {theorem}{Theorem}
\newtheorem {corollary}{Corollary}
\newtheorem {definition}{Definition}
\newtheorem {lemma}{Lemma}
\newtheorem {proposition}{Proposition}
\newenvironment {proof}[1][Proof]{\noindent \textbf {#1.} }{\ \rule {0.5em}{0.5em}}
\newcommand{\coef}{\text{coef}}
\newcommand{\diag}{\text{diag}}
\newcommand{\sgn}{\text{sgn}}
\newcommand{\R}{\mathbb{R}}
\begin{document}

\title{Social Learning under Platform Influence:\\ Consensus and Persistent Disagreement}

\author{Ozan Candogan\protect\footnote{  University of Chicago Booth School of Business,  e-mail: \textsf{ozan.candogan@chicagobooth.edu}}   ~  Nicole Immorlica\protect\footnote{  Microsoft Research, e-mail: \textsf{nicimm@gmail.com}} ~ Bar Light\protect\footnote{ Business School and Institute of Operations Research and Analytics, National University of Singapore,  e-mail: \textsf{barlight@nus.edu.sg}} ~ Jerry Anunrojwong\protect\footnote{  Columbia University,  e-mail: \textsf{jerryanunroj@gmail.com}\ }~ ~} 
\maketitle
\thispagestyle{empty}

\begin{abstract}

Individuals increasingly rely on social networking platforms to form opinions. However, these platforms typically aim to maximize engagement, which may not align with social good. In this paper, we introduce an opinion dynamics model where agents are connected in a social network, and update their opinions based on their  neighbors' opinions and on the content shown to them by the platform. 
We focus on a  stochastic block model with two blocks, where the initial opinions of the individuals in different blocks are different.
We prove that for large and dense enough networks
the trajectory of opinion dynamics in such networks
can be approximated well by a simple two-agent system. The latter  admits tractable analytical analysis, which we leverage to provide  interesting insights into the platform's impact on the social learning outcome in our original  two-block model.  Specifically, by using our approximation result, we show that agents' opinions approximately converge to some limiting opinion, which is either: consensus, where all agents agree, or persistent disagreement, where agents' opinions differ.
We find that when the platform is weak and there is a high number of connections between agents with different initial opinions, a consensus equilibrium is likely. In this case, even if a persistent disagreement equilibrium arises, the polarization in this equilibrium, i.e., the degree of  disagreement, is low. When the platform is strong,  a persistent disagreement equilibrium is likely and the equilibrium polarization is high. A moderate platform typically leads to a persistent disagreement equilibrium with moderate polarization. We analyze the effect of initial polarization on consensus and  explore various extensions including a three block stochastic model and a correlation between initial opinions and agents' connection probabilities, thereby providing a more robust and comprehensive analysis of the social learning dynamics. 
\end{abstract}

\newpage
\section{Introduction}\label{sec:intro}




Social ties -- friends and acquaintances -- shape our opinions and beliefs.  With the rise of major social networking platforms such as Facebook, these social interactions are increasingly mediated by a platform. According to the Pew Research Center survey in 2016 \citep{pew-news-social-media}, 62\% of adults get their news from social media, a sharp rise from 49\% just four years earlier. As individuals increasingly rely on social networking platforms to get informed about politics and form opinions \citep{bakshyMA15-ideologically-diverse-facebook}, there are concerns over  whether or how social media lead to political polarization \citep{boxellGS17-pnas, levy2021social}, selective exposure \citep{boxellGS18-internet-2016-election}, extremism \citep{benigniKC17-extremism-twitter}, and disagreement \citep{klofstadSM17-ajps-disagree, barnidge15-news-disagree}.  On the flip side, disengaging with social networking platforms has been shown to decrease polarization and exposure to polarizing news~\citep{allcott2020welfare}.

While offline social learning is often unplanned and decentralized, online social learning is significantly shaped by key decisions made by a centralized platform. For instance, your friends write posts and share articles on a platform, and the News Feed algorithm decides which posts and articles are shown to you. The platform has its own objective -- for example, maximizing engagement -- and it uses its algorithms, which take into account existing user behavior, to choose content for your consumption. While these algorithms are proprietary, there is evidence that they are more likely to expose individuals to news matching their ideology~\citep{bakshyMA15-ideologically-diverse-facebook, levy2021social}; indeed individuals are more likely to share posts matching their ideology,  suggesting that algorithms that promote such posts increase engagement~\citep{halberstam-knight-16}. 
In a recent theoretical paper, \citep{acemoglu2022model} find that  social media platforms interested in maximizing engagement tend to design their algorithms to create more homophilic communication patterns. In a recent empirical paper,  \cite{levy2021social} finds that exposure to social media substantially affects online news consumption and that Facebook's algorithm is less likely to supply individuals with content from counter-attitudinal media outlets. 
The objective of this paper is to investigate the effect of such platform decisions on opinion dynamics. In particular, we explore whether participants reach  consensus and, if not, how much individuals disagree.

We introduce a naive social learning model that explicitly captures the platform's influence. Each agent's opinion is modeled as a scalar, where a lower (higher) number means a more left-leaning (right-leaning) opinion. We focus on a simple average-based opinion updating process, in which agents set their next-period opinion based on the weighted average of sources that influence them. In standard social learning, those sources are the opinions of the agents and their neighbors in a network, and the opinion dynamics, commonly known as \textit{DeGroot}, converge to a weighted average of the initial opinions. In our model, we add an additional source of influence: the \textit{platform}. We assume that there are media outlets publishing content with known opinions, a left-leaning opinion and a right-leaning opinion. At each time step, the platform can give a personalized content recommendation to each agent. Consistent with the evidence 
discussed above, we assume that
the agents are more likely to engage more with content that is close to their opinion, so that at each time period, the platform shows content from the media outlet whose stance is closest to the agent's opinion. In other words, it shows left-leaning content to left-leaning agents, and right-leaning content to right-leaning agents.

Importantly, the platform's content recommendation depends on the agent's \textit{current} opinion, so the platform's content recommendation can change from left to right, say, if the peer influence  brings the agent's opinion from left-leaning to right-leaning. We therefore see that the two sources of influence on opinions may counteract each other: social learning brings   opinions closer together, while the platform's content recommendation pushes opinions toward the media outlets' opinions.  

We analyze the effect of the platform on the social learning outcome, and study how its strength, the agents' initial opinions, and the structure of the underlying social network, impact the outcome. We first analyze a two-agent system with only one left-leaning and one right-leaning agent. We later show that this simple two-agent system approximates a general stochastic block model with two blocks. We categorize the equilibria into two separate groups. The first group is the \textit{consensus} equilibria where the agents agree:  either all are left-leaning agents or all are right-leaning agents. The second group is the \textit{persistent disagreement} equilibria, which are defined to be all the other equilibria that are not consensus. Agents have different limiting opinions in persistent disagreement equilibria and the degree of disagreement can be thought of as the \textit{polarization} of the population. In Theorem \ref{thm:equal-a-init-char} we provide conditions on the two-agent system's primitives that characterize which type of equilibrium arises as a function of the primitives.

We then analyze  a general stochastic block model.
 In this model there are  two blocks of agents: a block of left-leaning agents and a block of  right-leaning agents. Each block has $n$ agents. Each pair of agents from the same block are connected with probability $p$, and
each pair of agents from different blocks are connected with probability $q$. The initial opinions of agents  in different blocks are assumed to be random variables. 
Stochastic block models are prevalent  in the study of opinion dynamics in networks. 
Having two blocks is also a natural starting point,  considering the bimodal distribution of opinions in various settings of practical interest,  including, e.g., political discourse.

 We focus on a regime with a large number of agents; i.e., $n$ is large. This regime is typical of many social networks where there are many interacting agents. Our main theoretical contribution in this part (see Theorem \ref{Thm:Main})  is to show that when the network is dense enough and the number of agents is large, the two-block  model behaves approximately like the two-agent system. Therefore, the results and insights that we obtain on the two-agent system, apply (approximately) to the general two-block  model. This result on the stochastic block model with two blocks is of independent interest as 
 it is often challenging to theoretically analyze 
 dynamic processes on  large (random) networks.
 Our paper contributes to the nascent literature on 
  analyzing such processes through the  mean-field approximations of such networks/processes.\footnote{The most related work in this direction is \cite{huangMW19-random-network-price-disc}, which shows that price discrimination on \textit{realizations} of the Erdos-Renyi network adds negligible value to the optimal strategy on the mean-field regular graph approximation: uniform pricing. The emerging literature on graphon games \citep{pariseO19-graphon-games} compares equilibria of games on networks with their continuous limits. All of these works study static settings , whereas our work study a dynamic setting.}

Leveraging this approximation result and the in-depth analysis of the two-agent system we derive insights into the effect of both the platform and the agents' initial opinions on opinion dynamics and limiting opinions. 
 We show that when the platform is weak and there is a high number of connections between agents with different initial opinions, a consensus equilibrium is likely as the model behaves in a similar manner to a DeGroot model. In this case, even if a persistent disagreement equilibrium arises, the polarization in this equilibrium, i.e., the degree of  disagreement, is low. When the platform is strong,  a persistent disagreement equilibrium is likely and the equilibrium polarization is high. A moderate platform typically leads to a persistent disagreement equilibrium with moderate polarization. 
Given the level of the platform's influence, we show that there are two types of initial opinions that lead to persistent disagreement: (i) the low-polarization initial opinions, and (ii) the high-polarization initial opinions that are ``balanced'' (that is, the agents have opposite opinions that average out to roughly neutral). The remaining type of initial opinion, the high-polarization and imbalanced type, leads to consensus. For example, if we start with far-right and moderate-left agents, both agents become right in agreement. We also show that the polarization is monotonic as it converges to the limiting opinion's polarization.

We then provide two important extensions to our baseline model.  Firstly, while our baseline model is based on a binary media outlet system, we recognize that real-world scenarios often feature a diverse spectrum of media sources, each bringing its own perspective. Hence, we introduce a model with three media outlets, allowing for a more comprehensive representation of the multitude of influences agents might be exposed to. We introduce a moderation parameter $c$ that determines how the platform classifies agents to right-leaning, moderate and left-leaning agents and derive comparative results regarding this parameter. 
We provide both analytical and numerical results that show that our approximation results continue to hold in this more complex setting with three media outlets. Building on techniques developed for the two-block model, we establish conditions under which the dynamics of a stochastic block model with three blocks can be approximated by a three-agent deterministic system. The extension, however, involves more intricate analysis due to the richer interaction structure among the groups and the platform. 
 We then examine the impact of platform recommendations on shaping final polarization. In particular, we show in numerical simulations that a higher moderation parameter leads to lower final polarization.
Secondly, drawing inspiration from observable patterns in social media platforms, we introduce a correlation between agents' initial opinions and the likelihood of them forming connections. We analyze the effect of this correlation on the final polarization and provide a numerical comparative statics analysis.

Lastly, we address the convergence of opinions to an equilibrium. Convergence of opinions is \textit{a priori} not obvious in our setting; it is conceivable that the platform's influence can alternately push agents to the left and to the right when social learning changes the agents' opinions back and forth to the right and to the left, leading to some form of limit cycles or chaotic dynamics. Nonetheless, we prove that under a symmetric influence matrix, i.e., agent $i$'s influence on agent $j$'s opinion equals agent $j$'s influence on agent $i$'s opinion, the opinions of all agents converge to some limiting opinion (see Theorem \ref{thm:opinions-converge}). To show this, we explicitly construct a  Lyapunov function that   decreases along the trajectory, which  establishes that dynamics converge to an equilibrium. When the influence matrix is not symmetric, we provide a simple numerical example that shows that the agents' opinions do not necessarily converge to an equilibrium. 
Interestingly, for a typical realization of  our stochastic block model, the influence structure is not necessarily symmetric. That said, in expectation the symmetry condition holds. Our results indicate that this fact is  sufficient to show that the  opinions of most agents converge to an arbitrarily small neighborhood of a limiting opinion (characterized in terms of the approximating two-agent system) in the large network regime.

In Section \ref{sec:sim}, we illustrate our results via numerical simulations. We consider two sets of simulations.  The first set is based on a stochastic block model with two blocks. We consider a small stochastic block model with random initial opinions. The second set is based on the political blogs data collected by \cite{adamic2005political}. \cite{adamic2005political} labeled the agents to liberal (left) and conservative (right) and constructed the connections between agents using blog activity. Our simulations suggest that the main theoretical predictions and insights we obtain for large and dense stochastic block models hold also for small stochastic block models and for the political blogs network.  Hence, our simulations show that our results, derived via an approximation to  the two-agent system, are predictive even after substantially relaxing  the assumptions made in the analysis. 

Section \ref{subsec:sim-disagree-polarize} shows that, conditional on persistent disagreement, the mean equilibrium opinion and polarization quantitatively agree with the theoretical predictions of Section \ref{sec:theory} and that polarization is monotonic along the trajectory. 
Section \ref{subsec:sim-balance} analyzes the effect of initial opinions on polarization.
Lastly, Section \ref{subsec:extremism} analyzes the extremism of the limiting opinion as a function of the platform's influence.

\subsection{Related Work}


This paper is closely related to a growing literature on opinion dynamics in social
networks with boundedly rational agents. Most works in this strand of literature \citep{ellisonF93, ellisonF95, dmz03, acemogluDLO11-bayesian-learning, golub2012homophily,  jadbabaieMST12-non-bayesian-social-learning, shi2016evolution, anunrojwongS18-naive-bayesian, grabisch2018strategic} are based on a model of DeGroot \citep{degroot74} and assume that individuals use heuristic rules to update opinions such as taking the weighted average of beliefs. Under standard assumptions, the limiting opinion is a consensus where all agents have the same opinion, which is a weighted average of the agents' initial opinions. Therefore, we cannot meaningfully analyze polarization, persistent disagreement, and extremism in such models. By contrast, in our model, agents generally have different limiting opinions and can also have opinions that are higher or lower than any agent's initial opinion. We can therefore analyze the conditions that lead to an increase (or a decrease) in extremism and polarization. Moreover, standard DeGroot-based models give rise to the same limiting opinion equilibrium independently of the initial opinions whereas our model admits multiple equilibria depending on the initial opinions. We can therefore analyze what distribution of beliefs in a society gives rise to less extremism or less polarization. Lastly, \cite{golubJ10-wisdom-of-crowds} show that homophily is an important network attribute in DeGroot learning. In our model, between-group strength determines the extremism and polarization of the limiting opinion, whereas homophily does not, since it depends also on within-group strength, which is irrelevant in our model.

We also depart from the social learning literature by introducing the platform as a major player in opinion dynamics. The prevalence of the platform in our model  departs from DeGroot opinion dynamics models with stubborn agents (for example, see \cite{mobilia2003does}, \cite{ghaderi2014opinion}, \cite{wai2016active}, and \cite{hunter2022optimizing}) or non-linear models that produce polarization (see \cite{baumann2020modeling} and references therein).   The platform is different from a stubborn agent in the DeGroot dynamics because it influences the agents' opinions not through its own fixed opinion but through personalized content recommendation that differs between agents and can change over time with the evolution of the agents' opinions. \cite{perraR19-scientific-reports} also model opinion dynamics under algorithmic personalization. Our paper differs from theirs in a few ways. First, their focus is on the effect of different recommendation algorithms used by the platform, whereas we assume a simpler platform behavior and focus more on the effects of the strength of the platform, initial opinions, and network structure. Second, they model opinions as binary, whereas we model opinions as scalars, so we can talk not only about agreement but also about whether agents have moderate or extreme beliefs. Lastly, they show all of their results through simulations without theoretical analysis, whereas we theoretically analyze a general stochastic block model with two blocks and also provide insights using simulations.

Our paper is also related to the literature on the spread of information and polarization in social networks. \cite{dandekarGL13-biased-assimilation-pnas} show that when people weight confirming evidence more than disconfirming evidence when they update their beliefs, society becomes polarized, whereas homophily alone cannot increase polarization.  Their key theoretical result is based on the two-island model, a deterministic variant of the two-block stochastic block model. \cite{amelkinBS17} analyze polar opinion dynamics in social networks. As in our work, they formulate the opinion dynamics in continuous time with differential equations and prove convergence with a Lyapunov function and LaSalle's invariance principle. However, they focus only on convergence whereas we mainly focus on analyzing the stochastic block model with two blocks and on properties of the limiting opinion and its dependence on the model's primitives.

There is also a growing body of empirical literature on social media, news consumption, and polarization. \cite{flaxman2016filter}  find that social networks are associated with an increase in the ideological distance between individuals. \cite{levy2021social} finds that Facebook's algorithm is more likely to expose individuals to news matching their ideology using experimental variation. \cite{davisGK20}  investigate how different types of social information affect how firms compete through service quality.  \cite{allcott2020welfare} find that Facebook may increase polarization and \cite{lelkes2017hostile} find that access to the Internet may increase polarization.  \cite{bakshyMA15-ideologically-diverse-facebook} study how much cross-cutting content a
user sees in their Facebok's News Feed and show that agents are more likely to see content that matches their own opinions.  Overall, this empirical literature supports our assumption that the platform shows the agents content that matches their opinions. 

Lastly, our model has no notion of truth, only opinions, so we can talk about extremism and polarization but not about whether agents learn the true state of the world, or about misinformation and ``fake news.'' Nevertheless, the literature on truth and misinformation in social networks is conceptually related to our work. \cite{azzimontiF18-fake-news-polarization-nber} study the effect of internet bots on the spread of fake news in social networks. \cite{blochDK18-rumors-social-networks} analyze the strategic transmission of possibly false information by rational agents who seek the truth. \cite{papanastasiou18-fake-news-sequential} analyzes the propagation of fake news and derives a detection policy. \cite{mostagirOS20} consider a social learning model where the principal can also send signals to agents, and characterize when the principal can get the agents to take an action in the principal's interest. \cite{acemogluMMO19} analyze how the design of rating systems affect whether and how quickly customers can learn about product quality.

\section{Opinion Dynamics}\label{sec:model}



Agents are in a social network, modeled as a graph $G$ with $k \geq 2$ nodes. The influence matrix of this graph is given by a matrix $A$. We interpret the $(i,j)$-th entry $a_{ij} \geq 0$ of $A$ as the strength of agent $j$'s influence on agent $i$.  In this paper we mainly focus on the stochastic block model with two blocks where the network is random and the number of agents is large (see Section \ref{sec:theory}). In this section we start by describing the evolution of the agents' opinions for a given realized influence matrix $A$.

We model each agent $i$'s opinion at time $t \geq 0$ as a scalar $x_i(t) \in \mathbb{R}$. Let $\mathbf{x}(t) = (x_1(t), \dots, x_n(t))^\top \in \R^{n}$ be the vector of all agents' opinions at time $t$. Time is continuous and we denote the vector of initial opinions at $t=0$ by $\mathbf{x}(0)$.

Each agent's opinion is influenced by her neighbors and by a platform. Let $b_i \geq 0$ be the strength of the platform on agent $i$. We assume that the opinions evolve according to the equation
\begin{align}\label{eqn:main-dynamics}
    \frac{d}{dt} x_i(t) \equiv \dot{x}_i(t) = \sum_{j \neq i} a_{ij} (x_j(t) - x_i(t)) + b_i (s(x_i(t)) - x_i(t)),
\end{align}
for some function $s: \R \to \R$, for all $t \geq 0$. We interpret $s(x_i)$ as the opinion of the media outlet shown to agent $i$ by the platform.

Qualitatively, the above opinion dynamics imply that the change in agent $i$'s opinion comes from two sources: social learning and the platform's influence,  which are the first and second terms on the right-hand side, respectively. 
The first term alone constitutes a continuous version of naive (DeGroot-like) social learning, and, in fact,
having only this term recovers a  formulation that is studied in the literature \citep{abelson64,proskurnikovT17-tutorial-dynamic-networks}.\footnote{In the discrete-time DeGroot learning model, for any agent $i$, the next period opinion is a linear combination of the neighboring opinions in the current period, i.e., 
\begin{align*}
    x_i(t+1) = \sum_{j=1}^{n} a_{ij} x_j(t).
\end{align*}
The influence matrix is often assumed to be row-stochastic, i.e., $\sum_{j=1}^{n} a_{ij} = 1$ for each $i$. Substituting $a_{ii} = 1 - \sum_{j \neq i} a_{ij}$ in the above equation we get
\begin{align*}
    x_i(t+1) - x_i(t) = \sum_{j \neq i} a_{ij} (x_j(t) - x_i(t)).
\end{align*}
By interpreting the difference $x_i(t+1) - x_i(t)$ as a time derivative $\dot{x}_i(t)$, we get the continuous-time DeGroot dynamics. }
Note that in \eqref{eqn:main-dynamics}
the term $x_j(t) - x_i(t)$ brings agent $i$'s opinion $x_i$ closer to her neighbor $j$'s opinion $x_j$ because if $x_j(t) > x_i(t)$, the aforementioned term is positive, which contributes positively to the time derivative of $i$'s opinion (and vice versa for $x_j(t) < x_i(t)$). In the subsequent analysis, unless stated otherwise, we allow for  $a_{ij}\geq 0$  to take an arbitrary nonnegative value.\footnote{Thus, there are two important differences relative to the  discrete-time DeGroot dynamics: (i) the term $a_{ii}$ does not appear in the equations describing the continuous-time opinion dynamics,  and (ii) the sum of $\{a_{ij}\}_j$ is no longer normalized to one.} 

Analogously, the second term $b_i(s(x_i) - x_i)$ can be interpreted as the platform's influence. We let $b_i$ be the strength of the platform's influence on agent $i$. When agent $i$'s opinion is $x_i$, the platform shows content that nudges her opinion toward $s(x_i)$, which depends on $x_i$.

The linear update rules for neighbors and for the platform have the same form, so we can think of the platform as a ``neighbor'' to everyone, with an important difference: while each agent has one opinion at any given time and nudges all her neighbors toward that opinion, the platform does not have its own opinion; it simply nudges each agent's opinion toward the content that matches their current opinion.

We assume that all right-leaning media outlets have the same opinion and all left-leaning media outlets have the same opinion. Without loss of generality let the left-leaning and right-leaning outlets' opinions be at $-1$ and $1$, respectively. We further assume that
the platform shows to an agent the content from the media outlet whose opinion is closest to the agent's opinion.
As we discussed in the Introduction, this assumption is motivated by a growing body of evidence that suggests that in order to increase engagement, social media platforms often show  agents content that aligns with their current opinions. Hence,   $s$  can be modeled as the sign function, i.e.,  $s(x_{i})= 1$ if $x_{i}$ is positive and $s(x_{i})= -1$ if $x_{i}$ is negative. This means that the platform shows content from right-leaning (left-leaning) outlets to right-leaning (left-leaning) agents. 
For technical reasons (in particular to ensure that the right-hand side of \eqref{eqn:main-dynamics} is continuous in the opinions), we consider  the following continuous interpolation of the sign function for a small $\epsilon > 0$:
\begin{equation}\label{eqn:sgn-eps-def}
    \sgn_{\epsilon}(x_i) = \begin{cases}
    -1 &\text{ if } x_i \leq -\epsilon, \\
    x_i/\epsilon &\text{ if } -\epsilon \leq x_i \leq \epsilon, \\
    1 &\text{ if } x_i \geq \epsilon.
    \end{cases}
\end{equation}
For the majority of our theoretical results, we will assume that $s(x_i) = \sgn_{\epsilon}(x_i)$. We numerically study a more complicated case that includes a centrist media platform in Section \ref{sec:three}.

Under this functional-form assumption on $s(x_i)$, Equation (\ref{eqn:main-dynamics}) reduces to 
\begin{equation}\label{eqn:main-dynamics-b-original}
    \dot{x}_i(t) = \sum_{j \neq i} a_{ij} (x_j(t) - x_i(t)) + b_i (\sgn_{\epsilon}(x_i(t)) - x_i(t)),
\end{equation}
Note that even though the initial opinions $\mathbf{x}(0) \in \R^{n}$ can be arbitrary scalars, it is easy to see that the limiting opinions $\mathbf{x}(\infty) = \lim_{t \to \infty} \mathbf{x}(t)$ (provided they exist), must be in between
the opinions of the media outlets
($-1$ and $1$) because of averaging that takes place due to the social learning term.\footnote{To see this, suppose that  agent $i$ has the highest opinion across all agents at equilibrium. If agent $i$'s opinion is above $1$, then all the other agents and the platform strictly push agent $i$'s opinion lower contradicting the equilibrium assumption.} Also note that if for some period $t$ all the agents' opinions are positive (negative) then all the agents' limiting opinions should  be exactly $1$ ($-1$) as the platform pushes the agents toward the opinions of the media outlets.

Lastly, in most of the subsequent analysis, our focus will be on understanding  how  the platform influences opinion dynamics. To provide cleaner insights into this question,  we will often assume that $b_i = b$ for all agents $i$. Accordingly, we will write the opinion dynamics as 
\begin{align}\label{eqn:main-dynamics-b}
    \dot{\mathbf{x}} = - L \mathbf{x} + b(\sgn_{\epsilon}(\mathbf{x}) - \mathbf{x}), 
\end{align}
where $L$ is the Laplacian of\footnote{Recall that given a network $A$, the Laplacian $L \equiv L[A]$ of $A$ is given by 
\begin{align*}
    L[A]_{ij} = \begin{cases}
   -a_{ij} &\text{ for } j \neq i, \\
    \sum_{j' \neq i} a_{ij'} &\text{ for } j = i.
  \end{cases}
\end{align*}} $A$.

\subsection{Remarks On The Assumptions}

We now provide a few remarks on our assumptions.

\noindent \textbf{DeGroot social learning model.}
We assume a DeGroot social learning model as opposed to other social learning models such as Bayesian learning. The first reason to study a DeGroot model is that it provides a tractable way to describe learning dynamics. Its simplicity typically allows for mathematical analysis, especially in network settings. Second, the DeGroot model elegantly captures the concept of individuals learning from their peers. Each individual assigns a certain weight to the opinions of their neighbors, and these weights dictate the influence each peer exerts. Third, this model is considered the canonical non-Bayesian social learning model (see \cite{molavi2018theory}). Our goal in this paper is to integrate a platform as a major player within a foundational and tractable social learning model. Hence, the DeGroot model is a reasonable choice for our purposes. Future research exploring whether our insights remain valid in more complex social learning models would be intriguing.

\noindent \textbf{Platform's objective.} In our model we assume that the platform's objective is maximizing engagement. This assumption is grounded  in empirical observations detailed in the related literature section. Many contemporary platforms, especially in the social media domain, rely on user engagement as a primary metric to maximize. Higher engagement often translates to increased screen time, more frequent user visits, and greater user activity, all of which are correlated with higher revenues for the platform. Therefore, it is reasonable to assume that the platform maximizes engagement. 
However, we acknowledge that platforms may have  multifaceted objectives. For instance, a platform might also be concerned about its reputation and reliability especially in the longer term. A platform that is only seen as promoting extreme or polarized views might face backlash from certain user groups or even regulatory bodies. In such scenarios, while short-term engagement might be maximized by reinforcing users' existing beliefs, in the longer term, a platform might prioritize its reputation, which could mean promoting diverse views. We think that a more complicated platform's objective is an  interesting future research direction.

\noindent \textbf{Showing a mix of left and right opinions.} In the model formulation above, we assume that if $x_i \geq \epsilon$, the platform always shows the right-leaning content $s(x_i) = 1$, and if $x_i \leq -\epsilon$, the platform always shows the left-leaning content $s(x_i) = -1$. If, instead, we let the platform show a mix of different content types with bias toward each agent's opinion, we can reinterpret $s(x_i)$ as the average slant of the content shown by the platform, and the same framework still applies. In particular, let $0 \leq \alpha \leq 1$ be a constant, and suppose that the platform shows one unit of content depending on each agent $i$'s opinion $x_i$ as follows. For $x_i \geq \epsilon$, the platform shows $(1+\alpha)/2$ units of right-leaning content $+1$ and $(1-\alpha)/2$ units of left-leaning content $-1$. Similarly, for $x_i \leq \epsilon$, the platform shows $(1-\alpha)/2$ units of right-leaning content $+1$ and $(1+\alpha)/2$ units of left-leaning content $-1$. Then the average content slant $s(x_i)$ is $\alpha$ for $x_i > \epsilon$ and $-\alpha$ for $x_i < -\epsilon$. We can then write $s(x_i) = \alpha \mathrm{sgn}_{\epsilon}(x_i)$ and the opinion dynamics becomes $\dot{\mathbf{x}} = - L \mathbf{x} + B(\alpha \mathrm{sgn}_{\epsilon}(\mathbf{x}) - \mathbf{x})$, where $B=\diag ( { \boldsymbol{b} })$ is a diagonal matrix with entries $b_{i}$. This is the same dynamics as in (\ref{eqn:main-dynamics-b-original}) but scaled by $\alpha$, that is, if we let $\mathbf{x} = \alpha \tilde{\mathbf{x}}$ and $\epsilon = \alpha \tilde{\epsilon}$, the dynamics becomes $\dot{\tilde{\mathbf{x}}} = - L \tilde{\mathbf{x}} + B( \mathrm{sgn}_{\tilde{\epsilon}}(\tilde{\mathbf{x}}) - \tilde{\mathbf{x}})$, which has the same form as (\ref{eqn:main-dynamics-b-original}). Hence   all our results hold for this case as well.  One can consider a more complicated structure for the function $s$ where there are different opinions between the  right-leaning and left-leaning outlets.

\noindent \textbf{Two media outlets. }
We study a model with two media outlets. This choice  allows us to provide a  characterization of the system's dynamics and equilibria as we show in Section \ref{sec:theory}. In addition, in some real world applications, 
 there is noticeable polarization, often split into two major ideological camps. The binary model captures the dynamics in such polarized settings in a stylized way. On the other hand, the two media outlets model comes at some expense of realism. In Section \ref{sec:three} we extend our model and consider a more realistic scenario where the media outlets have more diverse opinions.

\section{Theoretical Analysis} \label{sec:theory}

In this section we present our main theoretical results.  We study a stochastic block model with two blocks, which is commonly employed    in the study of opinion dynamics in networks. 
Agents in different blocks differ in their initial opinions.
Motivated by 
social media platforms that have a large number of agents,
we focus on the case of large networks  with many nodes and obtain asymptotic results. 
More specifically, we show that when the network is large and dense enough, the opinion dynamics for the stochastic block model with two blocks can be approximated by a simple two-agent system (see Theorem \ref{Thm:Main} for a precise statement).   We show that the two-agent system  admits a tractable theoretical  analysis and provide a detailed characterization of the limiting opinions in this system.
Leveraging this characterization, we then provide sharp results and predictions for the general stochastic block model. 

We start by presenting the results for  the two-agent system (see Section \ref{subsec:2-agent}). In Section \ref{sec:main-result} we formally show that this system can be used to approximate and study the stochastic block model with two blocks. 
Lastly, we discuss the convergence of opinions in more general settings (see Section \ref{sec:converge}).

\subsection{The Two-Agent System} \label{subsec:2-agent}

We first analyze the system with   only two agents, referred to as   left and right agents. The agents are connected and symmetric in their influence, so each agent's influence on the other agent is  $a_{12}=a_{21}=a$ for some $a> 0$.  We refer to the quadrants $\{x_1 > 0, x_2 < 0\}$ and $\{x_1 < 0, x_2 > 0\}$ as $(+,-)$ and $(-,+)$ quadrants, and
denote by   $(x_{1}(t),x_{2}(t))$ the agents' opinions at time $t$.
We show that the two-agent system always has an equilibrium. That is,
the limit
$\lim_{t \rightarrow \infty } (x_{1}(t),x_{2}(t)) = (e_{1},e_{2})$ exists, and  $(e_{1},e_{2})$ is called the equilibrium of the dynamical system describing the opinion dynamics. We refer to the equilibria 
 in 
 $(+,-)$ and $(-,+)$ quadrants
 as   $(+,-)$ and $(-,+)$ equilibria, respectively.

We distinguish between two types of equilibria. In the \textit{consensus} equilibria, the agents agree, and the consensus opinion is $+1$ or $-1$, which is equal to one of the media outlets' opinions.\footnote{In any equilibria where all agents' opinions are the same, all the opinions must equal to $1$, $0$ or $-1$. Theorem \ref{thm:equal-a-init-char} shows that for almost all possible model's primitives,  an equilibrium where the opinion of each agent is $0$ does not arise. Hence, consensus equilibria refer to those  equilibria where all agents have the same opinion.} We call all other non-consensus equilibria \textit{persistent disagreement} equilibria because the agents  do not have the same opinion in the limit.\footnote{This is in contrast to most naive and DeGroot learning models where opinions converge to an equilibrium where all the agents have the same opinion. Agents' opinions in a persistent disagreement equilibrium are also generally not equal to the media outlets' opinions, meaning that they are bounded away from $-1$ and $+1$.   } In addition, in any  persistent disagreement equilibria, one agent is a left agent and the other agent is a right agent. This follows because if both agents have positive (negative) opinions at some period $t$, then the  consensus equilibrium $(1,1)$ ($(-1,-1)$) would arise.

The main result of this section is the characterization of conditions on the parameters that describe the two-agent system $(a, b, \mathbf{x}(0))$ such that the trajectory converges to a persistent disagreement equilibrium or a consensus equilibrium. For ease of exposition, we first state the conditions.

\begin{definition}\label{def:cond-pd-ec}
Let  $a,b>0$ and $\mathbf{x}(0) = (x_{1}(0),x_{2}(0)) \in \R^2$ be such that $x_1(0) x_2(0) \neq 0$.
\begin{itemize}
    \item $(a,b,\mathbf{x}(0))$ satisfies {\em condition PD} if $x_1(0)x_2(0) < 0$ and at least one of C1 or C2 below holds
\begin{itemize}
    \item[C1.] $(2a+b) \left| x_1(0) - x_2(0) \right| - 2b < b|x_1(0)+x_2(0)|$,
    \item[C2.] $(2a+b)|x_1(0)-x_2(0)|-2b > b^{1-2a/b} a^{2a/b} |x_1(0)+x_2(0)|^{1+2a/b}$.

\end{itemize}
    \item $(a,b,\mathbf{x}(0))$ satisfies {\em condition CO} if either $x_1(0)x_2(0) > 0$ or $$b|x_1(0)+x_2(0)| < (2a+b)|x_1(0)-x_2(0)|-2b < b^{1-2a/b} a^{2a/b} |x_1(0)+x_2(0)|^{1+2a/b}.$$
\end{itemize}

\end{definition}

Theorem~\ref{thm:equal-a-init-char} shows that trajectories with conditions PD and CO converge to persistent disagreement and consensus equilibria, respectively. Note that conditions PD and CO together cover all possible values of $(a,b,\mathbf{x}(0))$ except where the above inequalities are replaced by equalities. The theorem, therefore, gives an essentially complete characterization of the two-agent system.

\begin{theorem}\label{thm:equal-a-init-char}

Consider a two-agent system $(a,b,\mathbf{x}(0))$; then the dynamical system is
\begin{align*}
    \dot{x}_1 &= a(x_2-x_1) + b(\textup{sgn}_{\epsilon}(x_1) - x_1) \\
    \dot{x}_2 &= a(x_1-x_2) + b(\textup{sgn}_{\epsilon}(x_2) - x_2).
\end{align*}
 If $(a, b, \mathbf{x}(0))$ satisfies condition PD,
then the trajectory converges to a persistent disagreement equilibrium for all sufficiently small\footnote{When we say ``for all sufficiently small $\epsilon > 0$,"  we mean that there exists $\delta > 0$ such that the statement holds for all $\epsilon \in (0, \delta)$ (note that $\delta$ can depend on the parameters $(a,b,\mathbf{x}(0))$). } $\epsilon > 0$. If $(a, b, \mathbf{x}(0))$ satisfies condition CO, then the trajectory converges to a consensus equilibrium for all sufficiently small $\epsilon > 0$.

\end{theorem}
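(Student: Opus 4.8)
The proof strategy is to exploit the piecewise‑linear structure of the dynamics. Using that $(x_1,x_2)\mapsto(-x_1,-x_2)$ is a symmetry (since $\textup{sgn}_\epsilon$ is odd) and that relabelling the two agents is a symmetry — both of which preserve conditions PD, CO and the conclusion — I would first reduce to two cases: (i) $x_1(0)x_2(0)>0$; and (ii) $x_1(0)>0>x_2(0)$ with $u_0:=x_1(0)+x_2(0)\ge0$ (the sub‑case $u_0=0$, where $u\equiv 0$, being trivially subsumed). In case (i), for $\epsilon$ smaller than $\min_i|x_i(0)|$ the trajectory starts in the orthant where both opinions share a sign $\sigma\in\{+1,-1\}$; there $\textup{sgn}_\epsilon\equiv\sigma$, the opinion nearer $0$ has derivative of sign $\sigma$ whenever it lies strictly between $0$ and $\sigma$, and the orthant is forward‑invariant, so the (linear) dynamics force convergence to $(\sigma,\sigma)$ — a consensus equilibrium, as required by CO.

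For case (ii), on the region $R_\epsilon:=\{x_1\ge\epsilon,\ x_2\le-\epsilon\}$ the dynamics are linear, and in the coordinates $u=x_1+x_2$, $v=x_1-x_2$ they decouple into $\dot u=-bu$, $\dot v=-(2a+b)v+2b$; hence while the trajectory stays in $R_\epsilon$, $u(t)=u_0e^{-bt}\ge0$ and $v(t)=v^*+(v_0-v^*)e^{-(2a+b)t}$ with $v^*:=2b/(2a+b)$ and $v_0:=x_1(0)-x_2(0)$. Since $v(t)\ge\min(v_0,v^*)>0$, the coordinate $x_1=(u+v)/2$ is bounded below by a positive constant, so for small $\epsilon$ the trajectory can leave $R_\epsilon$ only across $\{x_2=-\epsilon\}$, and it does so iff $\sup_{t\ge0}x_2(t)\ge-\epsilon$, where $x_2=-g/2$, $g:=v-u$. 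Writing $D:=(2a+b)|x_1(0)-x_2(0)|-2b$ and $S:=b|x_1(0)+x_2(0)|$, one checks that (for $u_0>0$) $g'$ changes sign at most once, from $-$ to $+$, at $t^*=\tfrac1{2a}\log(D/S)$ when $D>S$, and is nonnegative throughout when $D\le S$ (in which case $x_2$ is non‑increasing and $\sup_t x_2=x_2(0)<0$). When $D>S$, substituting $e^{-2at^*}=S/D$ into $x_2(t^*)$ and using $1+\tfrac b{2a}=\tfrac{2a+b}{2a}$ — which collapses the two emerging power terms into one — yields the closed form
\[
  \sup_{t\ge0}x_2(t)=x_2(t^*)=\frac1{2a+b}\Big(\frac ab\,D\,(S/D)^{(2a+b)/(2a)}-b\Big),
\]
whence $x_2(t^*)<0\iff D>S'$ and $x_2(t^*)>0\iff S<D<S'$, with $S':=b^{1-2a/b}a^{2a/b}|x_1(0)+x_2(0)|^{1+2a/b}$. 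Since C1 is $D<S$ and C2 is $D>S'$, this gives: condition PD $\iff\sup_t x_2(t)<0$, and (in case (ii)) condition CO $\iff\sup_t x_2(t)>0$.

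To conclude: if PD holds then $\sup_t x_2(t)$ equals a negative constant $c$, and for $\epsilon<\min\big(|c|,\ \tfrac12\min(v_0,v^*)\big)$ the trajectory never leaves $R_\epsilon$; the linear solution, now globally valid, converges to $(v^*/2,-v^*/2)$, which one checks is a persistent‑disagreement equilibrium of the $\textup{sgn}_\epsilon$‑dynamics lying in the interior of $R_\epsilon$. If CO holds and $x_1(0)x_2(0)>0$ we are in case (i). Otherwise $\sup_t x_2(t)=x_2(t^*)>0$; the linear solution meets $\{x_2=-\epsilon\}$ at some $t_\epsilon<t^*$ on the increasing branch of $x_2$, so $\dot x_2(t_\epsilon)>0$. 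As $\epsilon\to0$ one has $t_\epsilon\to t_c$ (the idealized crossing time $x_2(t_c)=0$, with $t_c<t^*$) and $x_1(t_\epsilon)\to x_1(t_c)$, and $\dot x_2(t_c)=ax_1(t_c)-b>0$ forces $x_1(t_c)>b/a$ strictly. Entering the band $\{-\epsilon\le x_2\le\epsilon,\ x_1\ge\epsilon\}$, where $\dot x_2=ax_1+(b/\epsilon-a-b)x_2$ with $x_1$ bounded below by a constant exceeding $b/a$ and the $x_2$‑coefficient $b/\epsilon-a-b>0$, a one‑sided comparison with $\dot y=(b/\epsilon-a-b)y+d'$ (for a positive constant $d'>b$) shows $x_2$ is driven monotonically up through $0$ and out of the band in time $O(\epsilon)$, during which $x_1$ moves by $O(\epsilon)$ and stays above $\epsilon$. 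Thus the trajectory enters the forward‑invariant orthant $\{x_1>\epsilon,\ x_2>\epsilon\}$, and by case (i) converges to $(1,1)$, a consensus equilibrium. Gr\"onwall‑type continuous dependence on $\epsilon$ legitimizes the limiting comparisons.

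The main obstacle I anticipate is this last item: making the $\textup{sgn}_\epsilon$‑band analysis in the CO case fully rigorous and uniform in $\epsilon$ — namely, showing that the trajectory traverses the band rather than being reflected back into $R_\epsilon$, and enters the consensus basin with both coordinates bounded away from the band. The only other nontrivial step is the algebraic identification of $\mathrm{sign}(\sup_t x_2(t))$ with C1/C2 versus CO, which hinges on noticing that the two competing exponential terms in $x_2(t^*)$ share the exponent $(2a+b)/(2a)$ and hence combine into the single clean expression above.
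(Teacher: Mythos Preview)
Your proposal is correct and follows the same overall strategy as the paper's proof: reduce by symmetry to a single opposite-sign quadrant, solve the linear ODE there explicitly (you use the decoupled coordinates $u=x_1+x_2$, $v=x_1-x_2$, whereas the paper writes $x_1(t),x_2(t)$ directly and then splits into four sign cases; your extra symmetry reduction to $u_0\ge 0$ halves that casework), identify PD versus CO with whether $\sup_t x_2(t)$ is negative or positive via the same closed-form computation, observe that under CO the idealized trajectory crosses the axis at a point with the ``far'' coordinate strictly exceeding $b/a$, and finally show the actual $\mathrm{sgn}_\epsilon$-trajectory traverses the $\epsilon$-band into the consensus orthant.

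The one genuine methodological difference is the band-crossing step. The paper (its Lemma on the $\epsilon$-band) solves the linear system inside the band exactly---computing the eigenpairs, solving for the coefficients $c_\pm$, and Taylor-expanding these in $\epsilon$ to verify $c_\pm>0$ for small $\epsilon$, which forces $x_2>\epsilon$ throughout and $x_1$ monotone. Your comparison argument instead bounds $\dot x_2\ge (b/\epsilon-a-b)x_2+d'$ using $x_1>b/a$ and deduces an $O(\epsilon)$ transit time during which $x_1$ moves only $O(\epsilon)$. Both are valid; your route is shorter and more robust to perturbations of the model, while the paper's explicit solution avoids the mild bootstrap (maintaining $x_1\ge d'/a$ while using it) that you correctly flag as the point requiring care. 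That bootstrap closes cleanly by the standard ``first exit time'' argument you sketch, so there is no gap.
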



The proof sketch is as follows. If the starting opinion is in the $(+,+)$ or $(-,-)$ quadrants, the trajectory will lead to a consensus at $(1,1)$ or $(-1,-1)$. We can therefore focus the characterization effort on the $(-,+)$ and the $(+,-)$ quadrant. By symmetry, and without loss of generality, we focus on the $(-,+)$ quadrant. In the ``interior'' region $x_1 \leq -\epsilon, x_2 \geq \epsilon$ the system is linear with $\sgn_{\epsilon}(x_1) = -1, \sgn_{\epsilon}(x_2) = 1$ so we can explicitly compute the closed form of the trajectory. If the trajectory is always confined in this interior $(-,+)$ region, then we can use linear ODE theory to conclude that the trajectory converges to an equilibrium inside this region: a persistent disagreement equilibrium. If the trajectory is not always confined in this region it will eventually hit the ``$\epsilon$-band'' that surrounds the region and the axis; without loss of generality this is the point where $x_1=-\epsilon$. Inside this band with $-\epsilon \leq x_1 \leq \epsilon$, $x_2 \geq \epsilon$, we have a different linear ODE with $\sgn_{\epsilon}(x_1) = x_1/\epsilon, \sgn_{\epsilon}(x_2) = 1$. We then analyze the behavior of the trajectory once it enters the band. 

We can show that once the trajectory enters the band from the $(-,+)$ interior region, it will always cross the band to the adjacent $(+,+)$ interior region and hence converges to consensus at $(1,1)$. We conclude that if the trajectory determined by the linear system in $(-,+)$ is always bounded away from the two axes $x_1=0$ and $x_2=0$, then it will converge to a persistent disagreement, and if it is not, then it will converge to a consensus. We have a closed-form formula for the above trajectory, and we can use this formula to show that the conditions for persistent disagreement and consensus correspond to conditions PD and CO in Definition~\ref{def:cond-pd-ec} respectively, and we are done.

We observe that the conditions in Theorem \ref{thm:equal-a-init-char}  depend only on $b/a$ and $\mathbf{x}(0)$. This is expected, because we can rescale the solution along with the time by a factor of $a$. To facilitate an understanding of Theorem \ref{thm:equal-a-init-char}, we  plot the region of all initial opinions that lead to persistent disagreement as a function of $b/a$. See Figure~\ref{fig:plot-init-disagree} for plots when $b/a=0.5$ and $b/a=1.0$. We see that there are two different regions of initial opinions that lead to persistent disagreement: the green region corresponding to the first condition C1 and the blue region corresponding to the second condition C2. 

For the rest of the section we assume that the initial opinions have different signs, i.e., $x_{1}(0)x_{2}(0) < 0$. As mentioned above this is the interesting case in our model as if the signs are the same the equilibrium would be a consensus equilibrium. 

In the two-agent system we can define \textit{polarization} at time $t$ by $|x_2(t)-x_1(t)|$, and \textit{imbalance} at time $t$ by $|x_1(t)+x_2(t)|$. Let $\mu^* = b/(2a+b)$ and $p^* = 2\mu^* = 2b/(2a+b)$. As described in the proof of Theorem \ref{thm:equal-a-init-char}, the persistent disagreement equilibria are $\left(  \mu^*, - \mu^* \right)$ and $\left(  -\mu^*, \mu^* \right)$, and the polarization in these equilibria is $p^*$. Intuitively, initial opinions satisfying condition C1 have low initial polarization when compared to the imbalance. In particular, the initial polarization is lower than the persistent disagreement equilibrium polarization, i.e., $|x_1(0)-x_2(0)| < p^* = 2b/(2a+b)$. 
Initial opinions satisfying condition C2 have high polarization but are ``balanced'' (imbalance is small relative to initial polarization). When the average opinion of the left and right agents is close to the neutral opinion of $0$, the imbalance measure is small, and when their average opinion is far from $0$, say,  left-leaning with $x_1$ slightly negative and $x_2$ positive and large, the imbalance measure is high. Both the green and blue regions are symmetric with respect to the axis $x_1(0)=-x_2(0)$, where opinions are perfectly balanced. In particular, when opinions are perfectly balanced, i.e., $x_1(0) = -x^*$ and $x_2(0) = x^*$ for some $x^*$, we always have a persistent disagreement equilibrium. Overall,   low-polarization initial opinions, and high-polarization initial opinions that are ``balanced'' lead to persistent disagreement. We show in Theorem \ref{Thm:Main} that this insight also further generalizes to the  stochastic block model with two blocks for large and dense networks. 

For a fixed influence parameter $a$ and initial opinions, we now describe the effect of the platform's influence on the limiting opinions. When the platform's    influence is weak ($b$ is small),  the polarization in the persistent disagreement equilibria $(\pm \mu^*, \mp \mu^*)$ is close to $0$ because $\mu^* = b/(2a+b)$ is small, so the difference   between the agents' opinions is small conditional on persistent disagreement. In addition, the consensus equilibria are likely under a weak platform for a wide range of initial opinions that are not too polarized. When the platform's influence is strong ($b$ is large), the  polarization in the persistent disagreement equilibria $(\pm \mu^*, \mp \mu^*)$ is large because  $\mu^* = b/(2a+b)$ can be large, so the difference between the agents' opinions is large   conditional on persistent disagreement.  In addition,  persistent disagreement equilibria are likely under a strong platform for a wide range of initial opinions that are not too imbalanced.

To conclude, either when the platform is very weak or very strong, the society becomes more ``extreme" in the sense that opinions are further away from the origin, but for different reasons. When the platform is weak, all the agents' opinions are the same as that of a single media outlet and we have a consensus equilibrium.  When the platform is strong, we have a disagreement between the agents, one agent is left-leaning and the other agent is right-leaning and polarization, i.e., the degree of  disagreement is high.  Arguably, both situations are undesirable from a societal perspective. By contrast, when the platform is between weak and strong, we get  ``moderate disagreement'': agents typically disagree in the limit, but not by much (so polarization is low). We consider this a good outcome as diversity of opinions  typically fosters healthy debate and civic engagement when the degree of disagreement is not too high.

We illustrate the insights we obtain via simulations (see Section \ref{sec:sim}) and  summarize them as the following takeaways. 
\begin{itemize}
\item [\textbf{Takeaways}]
    \item Assume $a = 1$. When the platform is weak (i.e., $b$ is small such that condition CO holds), the system admits a consensus equilibrium. In this regime, even if the PD condition is also satisfied, the resulting equilibrium exhibits low polarization, specifically $2b/(2 + b)$.

In contrast, when the platform is strong (i.e., $b$ is large), the system admits a persistent disagreement equilibrium with a higher polarization level of $2b/(2 + b)$.

    \item When the polarization of the initial opinions, $|x_{1} - x_{0}|$, is small relative to the imbalance $|x_{1} + x_{0}|$ so condition C1 is satisfied then the system converges to a persistent disagreement equilibrium. 

    \item When the imbalance $|x_{1} + x_{0}|$ is small relative to the polarization $|x_{1} - x_{0}|$, so that condition C2 is satisfied,  the system again converges to a persistent disagreement equilibrium.

\end{itemize}

Lastly, as we mentioned above, the amount of polarization in a persistent disagreement equilibrium depends only on the platform's influence and the agents'  influence and not on the initial opinions (as long as they generate a persistent disagreement equilibrium). Therefore, if the initial opinion polarization is lower than the persistent disagreement polarization, the final polarization is necessarily higher than the initial polarization, whereas if the initial opinion polarization is higher than the persistent disagreement polarization, the final polarization is lower than the initial polarization. The next proposition shows that a stronger statement is true: \textit{polarization is monotonic}. If the initial polarization is lower than the persistent disagreement polarization, polarization is always increasing, whereas if the initial polarization is higher than the persistent disagreement polarization, polarization is always decreasing. Under the conditions of Theorem \ref{Thm:Main}, a similar version of this insight also generalizes to the stochastic block model (see Corollary \ref{Cor:polarization} for a precise statement). 

\begin{proposition}\label{prop:polar-monotonic}
Consider a two-agent system $(a,b,\boldsymbol{x}(0))$ that satisfies the PD condition. 
If $|x_1(0)-x_2(0)| < 2b/(2a+b)$ then the polarization $|x_1(t)-x_2(t)|$ is increasing in $t$. If $|x_1(0)-x_2(0)| > 2b/(2a+b)$ then the polarization $|x_1(t)-x_2(t)|$ is decreasing in $t$. 
\end{proposition}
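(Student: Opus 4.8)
The plan is to reduce the statement to a one-dimensional linear ODE for the polarization. First I would invoke the analysis in the proof of Theorem~\ref{thm:equal-a-init-char}: under condition PD and for all sufficiently small $\epsilon>0$, the trajectory remains forever in the interior of a single quadrant, either $\{x_1<0,x_2>0\}$ or $\{x_1>0,x_2>0\}$ — equivalently, it stays bounded away from both coordinate axes — so that $\sgn_\epsilon(x_1)$ and $\sgn_\epsilon(x_2)$ are constant ($\pm 1$) along the whole trajectory and the dynamics is the associated linear system. By the $x_1\leftrightarrow x_2$ symmetry it suffices to handle the $(-,+)$ quadrant, where $\sgn_\epsilon(x_1)=-1$ and $\sgn_\epsilon(x_2)=1$ for all $t\ge 0$, so $\dot x_1 = a x_2 - (a+b)x_1 - b$ and $\dot x_2 = a x_1 - (a+b)x_2 + b$.

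Next I would set $P(t) := x_2(t)-x_1(t)>0$, the polarization, and subtract the two equations to get $\dot P = \dot x_2 - \dot x_1 = -(2a+b)(x_2-x_1) + 2b = -(2a+b)P + 2b$. This is a scalar linear ODE with unique equilibrium $P^* = 2b/(2a+b)$, whose solution is $P(t) = P^* + \big(P(0)-P^*\big)e^{-(2a+b)t}$. Since $2a+b>0$, the function $t\mapsto e^{-(2a+b)t}$ is strictly decreasing, so $P(t)$ is strictly increasing when $P(0)<P^*$ and strictly decreasing when $P(0)>P^*$. Because the trajectory stays in the $(-,+)$ quadrant we have $|x_1(t)-x_2(t)| = P(t)$ for all $t$, and in particular $|x_1(0)-x_2(0)| = P(0)$ while $2b/(2a+b)=P^*$; this is exactly the claimed dichotomy. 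The $(+,-)$ quadrant is identical after relabeling, with $|x_1(t)-x_2(t)| = x_1(t)-x_2(t)$ satisfying the same ODE.

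The only nontrivial step is the first reduction — knowing that under PD the trajectory never enters the $\epsilon$-band around the axes, so that the $\sgn_\epsilon$ terms are genuinely constant and the linear computation above is valid for all $t$, not just locally. But this is already established within the proof of Theorem~\ref{thm:equal-a-init-char}: PD is characterized there precisely as the regime in which the linear $(-,+)$-trajectory stays bounded away from both axes by some positive margin depending only on $(a,b,\mathbf{x}(0))$, and choosing $\epsilon$ smaller than that margin forces the actual $\sgn_\epsilon$-trajectory to coincide with the linear one for all times. Given that, the remainder is the short explicit integration above, so I do not anticipate any further obstacle.
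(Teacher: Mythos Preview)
Your proposal is correct and is essentially the paper's own proof: both invoke Theorem~\ref{thm:equal-a-init-char} to confine the trajectory to a single off-diagonal quadrant so that the dynamics is linear, and both arrive at the explicit formula $P(t)=P^*+(P(0)-P^*)e^{-(2a+b)t}$ with $P^*=2b/(2a+b)$ and read off monotonicity from the sign of $P(0)-P^*$ (the paper subtracts the closed-form solutions (\ref{eqn:x1-u-v})--(\ref{eqn:x2-u-v}) rather than the ODEs, but this is the same computation). One small typo: your ``$\{x_1>0,x_2>0\}$'' should read $\{x_1>0,x_2<0\}$.
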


\begin{figure}
\begin{subfigure}{.5\textwidth}
  \centering
  \includegraphics[width=1.0\linewidth]{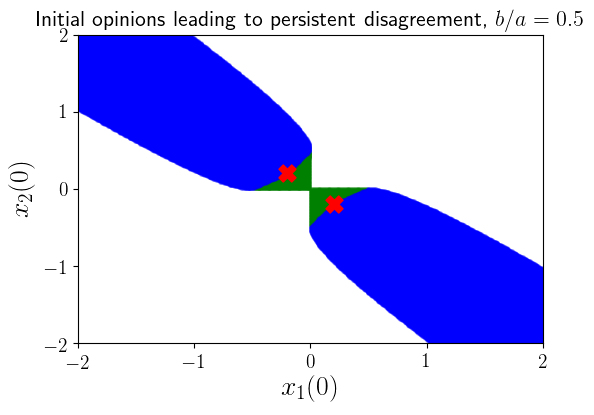}
  \subcaption{$b/a=0.5$}
  \label{fig:sfig1}
\end{subfigure}%
\hfill
\begin{subfigure}{.5\textwidth}
  \centering
  \includegraphics[width=1.0\linewidth]{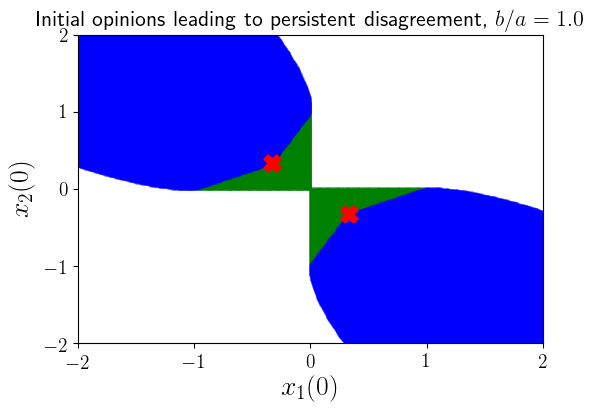}
  \subcaption{$b/a=1.0$}
  \label{fig:sfig2}
\end{subfigure}
\caption{Initial opinions $\mathbf{x}(0) = (x_1(0),x_2(0))$ that lead to persistent disagreement when $b/a=0.5$ and $b/a=1.0$. The red crosses represent the persistent disagreement equilibria. The green and blue regions correspond to condition C1 (low initial polarization) and condition C2 (balanced initial opinion) respectively. The white regions correspond to consensus equilibrium. When the initial polarization is lower (higher) than persistent disagreement polarization, polarization increases (decreases) as the opinion trajectory converges to the equilibrium at the red cross.  }
\label{fig:plot-init-disagree}
\end{figure}

\subsection{The Stochastic Block Model with Two Blocks}\label{sec:main-result}

In this section  we study the stochastic block model with two blocks and state our main theorem. We show that the two-agent system that we analyzed in depth in the last section approximates the stochastic block model with two blocks when the number of agents is large and the network is dense enough.

In the stochastic block model with two blocks there are two sets of agents: A set of left agents and a set of right agents. Each set consists of $n$ agents. The probability of a connection between the same type of agents, i.e.,  a connection between left and left agents and a connection between right with right agents, is given by\footnote{All our results can be extended to the case where a connection between left and left agents has a probability of $p_{L}(n)$ and a connection between right and right agents has a probability of $p_{R}(n)$ where $p_{L}(n) \neq p_{R}(n)$. To ease exposition, we assume that these probabilities are the same.} $p(n)$ and the probability of a connection between different types of agents, i.e., a connection between left and right agents, is given by $q(n)$. We denote by $A_{n}$  a typical adjacency matrix,  
 where an adjacency matrix satisfies $A_{n,ij}=1$ if there is a connection between agent $i$ and agent $j$ and zero otherwise, and by $\boldsymbol{A}_{n}$ the set of all possible adjacency matrices when there are $n$ left agents and $n$ right agents. We denote by $\mathbb{P}_{n}(A_{n})$ the probability that a matrix $A_{n}$ is realized. With slight abuse of notation we denote $\mathbb{P}_{n}$ by $\mathbb{P}$.  
 The initial opinion of each left agent is a random variable on $X_{L} \subset \mathbb{R}_{-} = (-\infty,0]$ and the initial opinion of each right agent is a random variable on $X_{R} \subset \mathbb{R}_{+} = [0,\infty)$. Let $\mu_{0}$ be the probability measure that describes the initial opinions' distribution where we again omit the index $n$ for notational convenience. For simplicity we assume that the support of $\mu_{0}$ is given by a finite set $\boldsymbol{X}_{0}^{n}$  in $\mathbb{R}^{2n}$ where $\boldsymbol{X}_{0} = X_{L} \times X_{R}$. We denote by  $(\boldsymbol{A}_{n} \times \boldsymbol{X}_{0}^{n},2^{ \boldsymbol{A}_{n}} \times 2^{\boldsymbol{X}_{0}^{n}},\mathbb{P} \otimes \mu_{0})$ the probability space that is generated by the stochastic block model with two blocks described above where $\otimes$ denotes the product measure between two probability measures and by $\mathbb{E}$ the expectation operator with respect to the probability measure $\mathbb{P} \times \mu_{0}$ and $2^{D}$ is the set of all subsets of a set $D$. Note that we assume in this section that the initial opinions are independent of the adjacency matrices. We consider the case where they are dependent in Section \ref{sec:correlated}.

 Given a realized adjacency matrix $A$ and a realized initial opinion vector $\boldsymbol{x}_{0}$, each agent's opinion is dynamically evolving and is influenced by the platform and by the opinions of her connections as we described in Section \ref{sec:model} (see Equation (\ref{eqn:main-dynamics-b})).  In order to study a mean field regime where the number of agents tends to infinity we normalize the influence matrix.   We assume that if agent $i$ has $d_{i}$ connections then each connection has an influence weight of $a/d_{i}$, i.e.,  $a_{ij} = a/d_{i}$ is the influence of agent $j$ on agent $i$ if agent $j$ is a connection of agent $i$ and $0$ otherwise. 
 Hence, the opinions of agent $i$ evolve according to the equation
    \[
    \dot{x}_{i,n}(t) = \frac{a}{|N(i)|}\sum_{j\in N(i)} (x_{j,n}(t)-x_{i,n}(t)) +b (sgn_\epsilon(x_{i,n}(t))-x_{i,n}(t))
    \]
where $N(i)$ is the set of agents that are connected to agent $i$ and $|N(i)|$ is the cardinality of $N(i)$, i.e., the total number of connections that agent $i$ has.\footnote{If agent $i$ has no connections, i.e., $N(i)$ is an empty set then the other agents do not influence agent $i$'s opinions.} Note that the opinion of agent $i$ in some period $t$ is a random variable that depends on the realized adjacency matrix and realized initial opinions.

 We now state the main result of this subsection, Theorem \ref{Thm:Main}. Informally, this result shows that in dense networks as the number of agents approaches to infinity, over time
the opinion of every left (right) agent
approaches to the equilibrium opinion of the left (right) agent in a corresponding  two-agent system under certain conditions on the supports of agents' initial opinions.  To prove this theorem, we leverage the comprehensive analysis presented in Theorem \ref{thm:equal-a-init-char} that characterizes the possible equilibrium that can emerge in this system as a function of the agents' initial opinions.\footnote{\cite{golub2012homophily} provide a ``representative-agent" theorem that allows them to analyze the convergence of a multi-type random network by studying a
 smaller network in which there is only one node for each type of agent.  In contrast to our setting, in \cite{golub2012homophily} there is no platform and the dynamical system is linear so techniques from random matrices theory can be leveraged to prove their  representative-agent theorem that don't apply to our setting. } 
More precisely, this claim holds 
for any adjacency matrix belonging to a   set of adjacency matrices that approaches probability that is tending to  $1$ when $n$ tends to infinity. We now introduce notations that are used to state Theorem \ref{Thm:Main}. 

Recall that for two functions $f,g$ from the set of positive integers to the set of positive numbers, we say that $f(n)$ is $\omega(g(n))$ if for every $c>0$, there exists $n_{0} \geq 1$ such that $f(n) > cg(n)$ for all $n \geq n_{0}$. Let $X_{PD}(a,b)$   be the set in $\mathbb{R}^{2}$ that includes the initial opinions that satisfy the PD condition (see Definition \ref{def:cond-pd-ec}) for $a,b>0$. Similarly, let $X_{COP} (a,b)$  ($X_{CON}(a,b)$) be the set  in $\mathbb{R}^{2}$ that includes the initial opinions that satisfy the CO condition and leads to a positive, i.e., $(e_{L},e_{R})=(1,1)$, (negative, i.e., $(e_{L},e_{R})=(-1,-1)$) consensus equilibrium (see Definition \ref{def:cond-pd-ec} and the proof of Theorem \ref{thm:equal-a-init-char} for an exact characterization of these sets). 

We denote by $x_{i,n,A_{n},\boldsymbol{x}_{0} }(t) $ the opinion of agent $i$ when the realized adjacency matrix is $A_{n}$ and the realized initial opinions vector is $\boldsymbol{x}_{0}$. 

\begin{theorem} \label{Thm:Main}
Suppose that $p(n)$ and $q(n)$ are $\omega(\ln (n) / n)$. Assume that $\epsilon$ is small so the characterization of the two agent system  given in Theorem \ref{thm:equal-a-init-char} holds.  Let  $\beta = \lim_{n \rightarrow \infty} q(n)/(q(n)+p(n)) $.

 Consider a two agent system $(a \beta ,b, (x_{L},x_{R}) )$ that satisfies the PD  condition with an equilibrium $\boldsymbol{e} = (e_{L},e_{R})$.  If $X_{L} \times X_{R} \subseteq X_{PD}(a\beta,b)$,  
 then for every $\epsilon' >0$, there exist $\delta>0$, $N>0$, $T>0$, and sets $C_{\delta,n} \subset \boldsymbol{A}_{n}$ such that  $\lim  _{n \rightarrow \infty} \mathbb{P}(C_{\delta,n}) = 1$ and for all $n \geq N$, all $t \geq T$, all $A_{n} \in C_{\delta,n}$, all $\boldsymbol{x}_{0} \in (X_{L} \times X_{R})^{n}$, we have 
 \begin{equation}\label{Eq:MainThmStatement}
|  x_{i,n,A_{n},\boldsymbol{x}_{0} }(t) - e_{L}| \leq \epsilon' \text{ and } |  x_{j,n,A_{n},\boldsymbol{x}_{0}}(t) - e_{R}| \leq \epsilon'
\end{equation}
 for every left agent $i$ and every right agent $j$. 

 Similarly, when $(a \beta ,b, (x_{L},x_{R}) )$ satisfies the CO condition with equilibrium $(1,1)$ ($(-1,-1)$) then Inequality (\ref{Eq:MainThmStatement}) holds  when $X_{L} \times X_{R} \subseteq X_{COP}(a\beta,b)$ ($X_{L} \times X_{R} \subseteq X_{CON}(a\beta,b)$).

\end{theorem}

When the numbers of agents is large, the influence of  an agent on her connections is small with high probability. Hence, Theorem \ref{Thm:Main} can be seen as a mean-field approximation for the opinion dynamics that are generated by the stochastic block model. This theorem shows that the agents' opinions in this general stochastic block  model can be approximated by the agents' opinions in the  much simpler two-agent system that we solve analytically in  Section \ref{subsec:2-agent}.  


To gain some intuition for the proof of Theorem \ref{Thm:Main}, note that the expected number of right (left) connections that a left (right) agent has is $q(n)n$ and the expected total number of connections is $q(n)n + p(n)n$.   When the network is dense enough, ($p(n)$ and $q(n)$ are $\omega(\ln (n) / n)$) and $n$ is large; then, using a concentration argument, we show that all agents have $q(n)n$ plus a small deviation error connections of a different type and  $p(n)n+q(n)n$ plus a small deviation error total connections with high probability. The key step in proving Theorem \ref{Thm:Main} is to construct auxiliary   differential equations to provide bounds on the trajectories of left and right agents that have close to $q(n)n$ connections of a different type and $q(n)n + p(n)n$ total connections. These bounds imply that for large $t$, the opinions of all   left (right) agents are  close to the equilibrium opinion of a single left (right) representative agent. Hence, the opinion of each left (right) agent is approximately not influenced by the other left (right) agents and is approximately influenced   by the single representative right (left) agent
at a rate of $aq(n)n/(p(n)n+q(n)n) \approx a \beta$. Intuitively, this is the influence parameter because $p(n)n+q(n)n$ is the expected total number of connections and $q(n)n$ is the expected number of connections of a different type an agent has. Thus, the stochastic block model with two blocks is approximated by a two-agent system for large $n$ and large $t$.  

We assume that $p(n)$ and $q(n)$ are $\omega(\ln (n) / n)$ which implies that the network is connected. To guarantee connectedness, it is enough to assume that  $p(n)$ and $q(n)$ are $\Omega(\ln (n) / n)$  (see Chapter 4 in  \cite{blum2020foundations}). We note that Theorem \ref{Thm:Main} can be extended for the case that $p(n)$ and $q(n)$ are $\Omega(\ln (n) / n)$. Suppose that $p(n)=c_{p}\ln(n)/n$ and $q(n)= c_{q}\ln(n)/n$. Then an inspection of the proof of Theorem \ref{Thm:Main} shows that we can choose $\delta = d/\min \{c_{q},c_{p} \}^{1/2}$ for some $d>0$ and the proof still follows so  the approximation result holds for high enough $c_{p}$ and $c_{q}$.

When all the left (right) agents start with the same initial opinion, the proof of Theorem \ref{Thm:Main} shows that, with high probability, for every finite $t$, the opinion of every left (right) agent in the stochastic block model with two blocks is arbitrarily close to the left (right) agent's opinion in the $(a\beta,b,x_{L},x_{R})$ two-agent system,  respectively. Hence, in this case, the insights and analysis we provided in Section \ref{subsec:2-agent} for the opinions' trajectories of the simple two-agent system apply  also to the stochastic block model with two blocks. For example,  the analysis that described  the platform's influence on the convergence of opinions and polarization (see the discussion in Section \ref{subsec:2-agent}) carries over to  the stochastic block model with two blocks. The next corollary that shows that polarization is monotonic extends Proposition \ref{prop:polar-monotonic} to the stochastic block model.

\begin{corollary} \label{Cor:polarization}
Consider a two-agent system $(a \beta ,b, (x_{L},x_{R}) )$ that satisfies the PD condition with a solution $(x_{1}(t),x_{2}(t))$ and suppose that the Assumptions of Theorem \ref{Thm:Main} hold with $X_{L}= \{x_{L} \}$ and $X_{R} = \{ x_{R} \} $. Assume WLOG that $x_{L} > x_{R}$.  
 
Let $t_{2} > t_{1} > 0$, $t>0$, and $\delta' >0$. If $x_L(0)-x_R(0) <  2b/(2a\beta+b)$ then there exists a $\delta>0$, and sets $C_{\delta,n} \subset \boldsymbol{A}_{n}$ such that  $\lim  _{n \rightarrow \infty} \mathbb{P}(C_{\delta,n}) = 1$ and 

(i) $$|x_{i,n,A_{n} } (t) - x_{1}(t)| \leq \delta' \text { and } |x_{j,n,A_{n} } (t) - x_{2}(t)| \leq \delta'$$

(ii) 
$$x_{i,n,A_{n} } (t_{2}) -  x_{j,n,A_{n} }(t_{2}) \geq x_{i,n,A_{n} } (t_{1}) -  x_{j,n,A_{n} }(t_{1}) $$
 for every left agent $i$, every right agent $j$, and all $A_{n} \in C_{\delta,n}$,  where $x_{i,n,A_{n}}(t)$ is agent $i$'s opinion  at time $t$ when $A_{n} \in C_{\delta,n}$ is the realized adjacency matrix. Hence, the polarization between any left and any right agent is increasing. The opposite result holds, i.e., polarization is decreasing, for the case that $x_L-x_R > 2b/(2a\beta+b)$.
\end{corollary}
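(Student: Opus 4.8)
The plan is to reduce the statement to two facts established earlier: the finite-time form of the approximation behind Theorem~\ref{Thm:Main} --- on a high-probability set of adjacency matrices, every left agent's opinion is uniformly close to the two-agent trajectory component $x_1(\cdot)$ and every right agent's opinion is uniformly close to $x_2(\cdot)$, with a tolerance that can be taken as small as we like on any compact time window --- together with the strict monotonicity of the two-agent polarization $z(t):=x_1(t)-x_2(t)$ supplied by Proposition~\ref{prop:polar-monotonic} for the system $(a\beta,b,(x_L,x_R))$. The one point requiring care is the order of quantifiers: $\delta'$ is prescribed, but the argument will be run at a \emph{smaller} internal tolerance $\delta''$ that depends on $t_1,t_2$; since the approximation in the proof of Theorem~\ref{Thm:Main} holds for arbitrarily small tolerance, this is harmless.

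Concretely, I would first fix $t,t_1,t_2,\delta'$ and, assuming $x_L-x_R<2b/(2a\beta+b)$, invoke Proposition~\ref{prop:polar-monotonic} (with influence parameter $a\beta$) to conclude that $z$ is strictly increasing, so $\eta:=z(t_2)-z(t_1)>0$. I would then set $\delta'':=\min\{\delta',\eta/5\}>0$ and apply the proof of Theorem~\ref{Thm:Main} at tolerance $\delta''$ over the interval $[0,\max\{t,t_1,t_2\}]$ to obtain $\delta>0$ and sets $C_{\delta,n}\subset\boldsymbol{A}_n$ with $\lim_{n\to\infty}\mathbb{P}(C_{\delta,n})=1$ on which, for all large $n$, $|x_{i,n,A_n}(\tau)-x_1(\tau)|\le\delta''$ and $|x_{j,n,A_n}(\tau)-x_2(\tau)|\le\delta''$ for every left agent $i$, every right agent $j$, and every $\tau\in\{t,t_1,t_2\}$. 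Part (i) is then immediate, since $\delta''\le\delta'$.

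For part (ii), the triangle inequality gives $x_{i,n,A_n}(t_2)-x_{j,n,A_n}(t_2)\ge z(t_2)-2\delta''$ while $x_{i,n,A_n}(t_1)-x_{j,n,A_n}(t_1)\le z(t_1)+2\delta''$; subtracting and using $z(t_2)-z(t_1)=\eta\ge 5\delta''$ shows the realized polarization increment is at least $\eta-4\delta''\ge\eta/5>0$. Hence the polarization between every left--right pair is increasing, the conclusion being stated for arbitrary $t_1<t_2$ with $\delta$ and $C_{\delta,n}$ allowed to depend on them. The case $x_L-x_R>2b/(2a\beta+b)$ is handled identically: Proposition~\ref{prop:polar-monotonic} now yields $z$ strictly decreasing, one takes $\eta:=z(t_1)-z(t_2)>0$, and the two displayed bounds are reversed.

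The only real obstacle is bookkeeping rather than new mathematics. One must be sure to use the finite-time version of the approximation --- closeness to the \emph{trajectory} $(x_1(\cdot),x_2(\cdot))$, uniformly over agents and over a compact interval, which is what the proof (not the statement) of Theorem~\ref{Thm:Main} delivers --- and one must notice that invoking it at a tolerance $\delta''$ below the prescribed $\delta'$ and below a fixed fraction of the $(t_1,t_2)$-dependent gap $\eta$ is precisely what lets the limiting trajectory's monotonicity transfer to the realized opinions. Strict (not merely weak) monotonicity of $z$ is also used, but this follows from the explicit linear form $\dot z=-(2a\beta+b)z+2b$ of the polarization ODE underlying Proposition~\ref{prop:polar-monotonic}, whose unique equilibrium is exactly the threshold $2b/(2a\beta+b)$.
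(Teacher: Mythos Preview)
Your proposal is correct and follows essentially the same approach as the paper: both invoke the finite-time trajectory approximation from the proof of Theorem~\ref{Thm:Main} together with the strict monotonicity of the two-agent polarization from Proposition~\ref{prop:polar-monotonic}, then conclude via a triangle-inequality sandwich with the tolerance taken as a fixed fraction of the polarization increment $z(t_2)-z(t_1)$. Your treatment of the quantifier issue --- introducing an internal tolerance $\delta''\le\min\{\delta',\eta/5\}$ rather than overloading the prescribed $\delta'$ --- is in fact cleaner than the paper's phrasing, which writes ``Choosing $0<\delta'\le\delta''/4$'' even though $\delta'$ is given in the statement.
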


An important parameter in analyzing the stochastic block model that is absent in the two-agent system is $\beta$. Intuitively, a large (small)  $\beta$ means that the number of connections between different types of agents is large (small) relative to the number of connections between the same types of agents. In terms of analysis,  $1/\beta$ plays a similar role to that of  the platform's influence parameter $b$. Hence, for example, a small $\beta$ typically leads to persistent disagreement equilibrium with high polarization.  This is intuitive because when the number of connections between different types of agents is relatively small, we would expect  polarization to be higher.

We note that Theorem \ref{Thm:Main} does not imply that the opinions converge for a given realized adjacency matrix $A$. In Section \ref{sec:converge} we shed light on convergence of opinions for a fixed adjacency matrix.

\subsection{Three Media Outlets} \label{sec:three}
In the original model in Section \ref{sec:model} we study a binary framework where there are two media outlets. This choice simplified our analysis and allowed us to characterize the system's dynamics and equilibria. In this section we consider a more general setting where the media outlets have more diverse opinions. 

 We extend our baseline model to incorporate a third, centrist, media outlet. This outlet neither leans right nor left. This introduction brings about new dynamics, complexities, and richer behaviors to our model. As before, we assume that the platform exposes individuals to news matching their own opinions, that is, we assume that the platform shows content from right-leaning (left-leaning) outlets to right-leaning (left-leaning) agents and content from the centrist media outlet to moderate agents. Hence, the platform has to classify the moderate agents. We introduce a moderation parameter $c$ where a larger value of 
$c$ indicates that the platform considers a broader range of opinions as moderate. We extend the sign function introduced in Section \ref{sec:model} to capture this. For technical reasons (see Section \ref{sec:model}), we consider  the following continuous interpolation of a step function for a small $\epsilon > 0$:

\begin{equation}\label{eqn:sgn-eps-c-def}
    \sgn_{\epsilon,c}(x_i) = \begin{cases}
    -1 &\text{ if } x_i \leq -c -\epsilon, \\
    (x_i+c)/\epsilon &\text{ if }  -c - \epsilon  \leq x_i \leq -c, \\
    0 &\text{ if } -c \leq x_i \leq c \\
    (x_i-c)/\epsilon &\text{ if }  c \leq x_i \leq  c+ \epsilon \\
    1 &\text{ if } x_i \geq c+ \epsilon.
    \end{cases}
\end{equation}
Hence, the platform shows the content from the centrist media outlet to moderate agents that are defined by agents with opinion between $-c$ and $c$ where $c$ is the moderation parameter. Under this functional-form assumption, we study the social dynamics generated by Equation (\ref{eqn:main-dynamics}). 
The opinions evolve according to $ \dot{x}_i(t) = \sum_{j \neq i} a_{ij} (x_j(t) - x_i(t)) + b (\sgn_{\epsilon,c}(x_i(t)) - x_i(t))$. We denote by $R$ the right agents with opinions above $c+\epsilon$, $M$ the moderate agents with opinions between $-c$ and $c$ and by $L$ the left agents with opinions below $-c-\epsilon$. 

A first question we address is whether a version of Theorem~\ref{Thm:Main} also holds for the three-outlet model. Our numerical analysis suggests that the mean-field approximation result presented in Section~\ref{sec:main-result} continues to apply in the three-outlet setting across various model specifications.
 To illustrate this, consider a three block stochastic model (3BSM) with $n$ agents, consisting of $n/3$ left-leaning agents, $n/3$ moderate agents, and $n/3$ right-leaning agents. The initial opinions of left agents are drawn uniformly from $[-1, c]$, those of moderate agents from $[-c, c]$, and those of right agents from $[c, 1]$. We assume that the probability of an in-block connection (i.e., between agents within the same group) is $1/4$, while the probability of an out-of-block connection is $1/8$.

\begin{figure}[H]
  \centering

  \begin{subfigure}[b]{0.4\linewidth}
    \includegraphics[width=\linewidth]{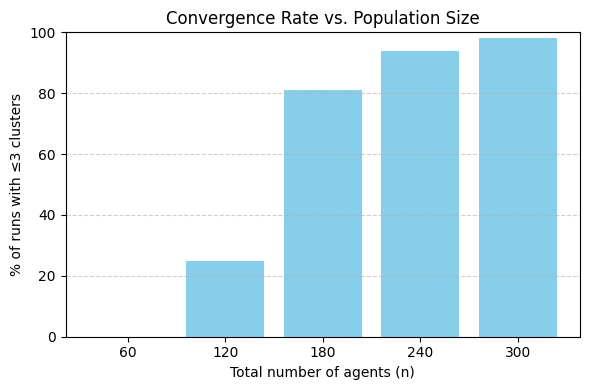}
    \label{fig:sample1}
  \end{subfigure}
  \hfill
  \begin{subfigure}[b]{0.4\linewidth}
    \includegraphics[width=\linewidth]{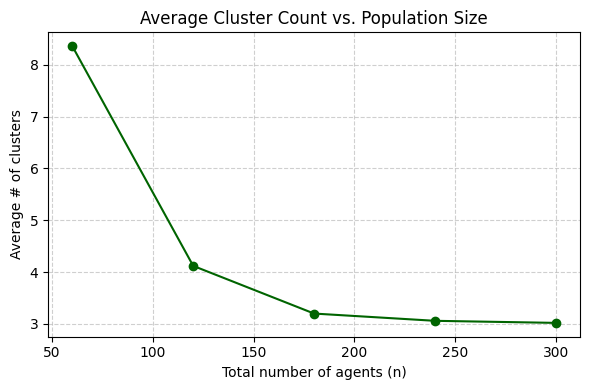}
    \label{fig:sample2}
  \end{subfigure}

  \vspace{0.3cm}

  \begin{subfigure}[b]{0.4\linewidth}
    \includegraphics[width=\linewidth]{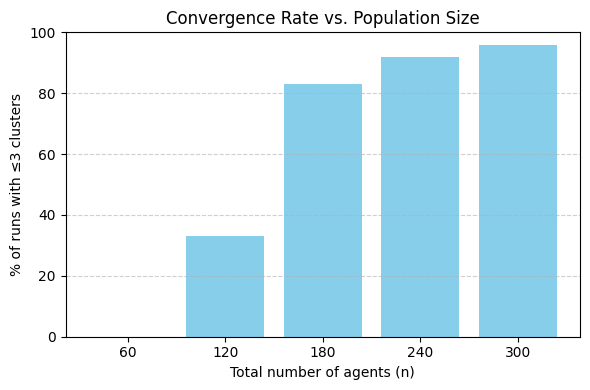}
    \label{fig:sample3}
  \end{subfigure}
  \hfill
  \begin{subfigure}[b]{0.4\linewidth}
    \includegraphics[width=\linewidth]{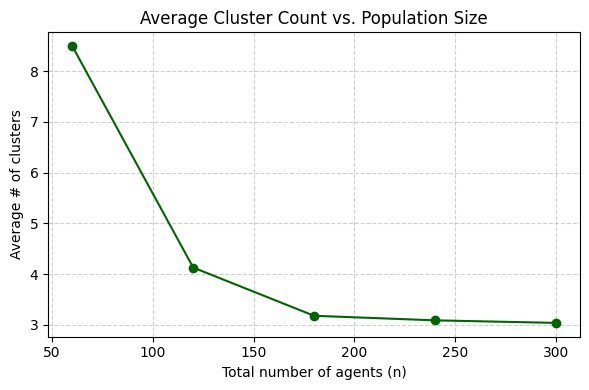}
    \label{fig:sample4}
  \end{subfigure}

\caption{
For each value of $n \in \{60, 120, 180, 240, 300\}$, we simulate the three block stochastic model 100 times with parameters $a = 1$, $b = 1$, and $\epsilon = 0.0001$. The two top figures use $c=0.2$ and two bottom figures use $c=0.35$. Clusters in final opinions are identified using mean-shift clustering with bandwidth $0.05$. The left panel shows the number of simulations (out of 100) that resulted in 3 or fewer final opinion clusters. We observe that this number significantly increases as $n$ increases, suggesting that opinions are more likely to form 3 or less distinct clusters in larger networks. The right panel reports the average number of clusters across the 100 runs. These figures show that as the network size increases, final opinions tend to cluster into three distinct groups.
}
  \label{fig:4samples}
\end{figure}

The numerical results in Figure \ref{fig:4samples} show that as the number of agents in each block increases from 60 to 300, opinions increasingly cluster into three distinct groups under the parameters we consider. This behavior suggests the dynamics may be well approximated by a three-agent system, similar in spirit to the mean field approximation we analytically established in the two-block setting. This empirical observation motivated us to undertake a theoretical analysis of the stochastic block model with three blocks. While we build on techniques developed for the two-agent system, the extension to three outlets introduces substantial additional complexity; see Appendix~\ref{Sec:Appendix3BSM} for further details. Nevertheless, we characterize conditions under which a form of persistent disagreement emerges, namely, convergence to three distinct opinion clusters, where left, moderate, and right agents each concentrate around separate points.
 To establish this, in Theorem~\ref{Thm:3-agentsystem} in Appendix \ref{Sec:Appendix3BSM}, we prove an analogue of Theorem~\ref{thm:equal-a-init-char} for the case of persistent disagreement in a deterministic three-agent system, which serves as a candidate approximation for the stochastic block model with three blocks. We then derive a mean-field approximation result in Theorem~\ref{Theorem: 3BSMmain}, analogous to Theorem~\ref{Thm:Main}, showing that the behavior of the stochastic block model with three blocks can be captured by the deterministic three-agent system analyzed in Theorem~\ref{Thm:3-agentsystem}. Regarding the platform's influence, we note that similar intuition from the two-agent system holds. For example, when the platform is weak (i.e., 
$b$ is small), the conditions for persistent disagreement are harder to satisfy. And even when such disagreement does arise, the resulting polarization is low. In contrast, when the platform is strong (i.e., 
$b$ is large), the system tends to exhibit persistent disagreement with high polarization. The effect of the moderation parameter $c$ on the emergence of persistent disagreement is more complicated. When $c$ is very small, the model effectively behaves like a two-agent system. In such cases, disagreement can emerge between two camps, similar to the two-agent model. On the other hand, when $c$ is relatively large, the platform tends to show moderate content to most agents, which can lead to convergence toward consensus or collapse to a single dominant opinion. As a result, the condition for persistent disagreement is generally not monotone in $c$ as illustrated in Figure \ref{fig:PD3_region}. 
We provide the formal details and proofs of Theorems \ref{Thm:3-agentsystem} and \ref{Theorem: 3BSMmain} in Appendix \ref{Sec:Appendix3BSM}.

\begin{figure}[htbp]
    \centering
    \includegraphics[width=0.55\textwidth]{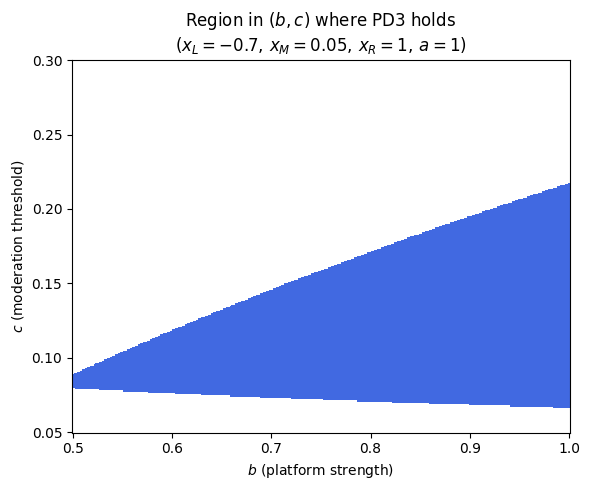} 
    \caption{
    Region in the $(b,c)$ parameter space where the persistent disagreement (the PD3 condition described formally in Theorem \ref{Thm:3-agentsystem}) holds for the three-agent system with initial opinions $x_L = -0.7$, $x_M = 0.05$, $x_R = 1$, and $a = 1$. Each point in the plot corresponds to a unique pair of platform strength $b$ and moderation threshold $c$. The blue region indicates parameter combinations for which the PD3 condition holds. 
    }
    \label{fig:PD3_region}
\end{figure}

In addition, we study numerically the impact of the platform's moderation parameter 
$c$ on the final polarization levels among the agents. The parameter 
$c$ dictates how broadly or narrowly a platform defines moderate opinions, influencing the type of content recommendations an agent receives based on their current opinion. As 
$c$ increases, the platform labels a wider spectrum of opinions as moderate and consequently promoting more centrist content to a larger fraction of users. Our simulations reveal a direct relationship between a higher 
$c$ value and reduced polarization. Specifically, as the platform adopts a broader definition of moderation (higher 
$c$), the final opinions are more moderate or less polarized. This shows the influence of platform recommendation algorithms on shaping public discourse.  

We define the final polarization by the difference between the average opinions of left and right agents, i.e., $(\sum _ { x_{j} > 0} x_{j} - \sum _{x_{i} < 0} x_{i} )/ n$ when agents opinions do not equal zero.\footnote{In the rare cases where agents' final opinions equaled exactly $0$, we assigned them to be left and right agents evenly when possible.} For the plots in Figure \ref{fig:moderation} we also assume that the probability of in-block connection is $1/4$ and the probability of out of block connection is $1/8$.

\begin{figure}
\centering
\includegraphics[width=0.65\linewidth]{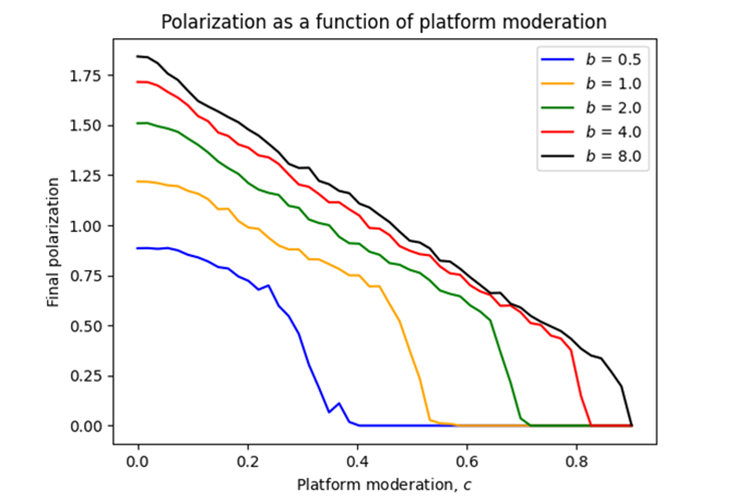}
  \caption{Parameters for the model were set to $a=1, \epsilon = 0.0001$, $60$ agents, $30$ with a random initial opinion that is a uniform random variable on $[0,1]$ and $30$ agents with a random initial opinion that is a uniform random variable on $[-1,0]$. We ran 1000 samples for each $b,c$ value and the final polarization was calculated as the average final polarization over all 1000 samples. 
  The plot shows that higher platform moderation leads to smaller final polarization for different platform influence parameters.}
  \label{fig:moderation}
\end{figure}

\subsection{Correlated Initial Opinions and  Connection Probabilities} \label{sec:correlated}

In the baseline model, we made the simplifying assumption that the initial opinions of agents and the probabilities of connections between them are uncorrelated. This assumption, although mathematically tractable and used to prove Theorem \ref{Thm:Main}, might not always hold in real-world settings, especially in social media platforms where users tend to connect with others who share similar viewpoints.

To accommodate this more realistic scenario, in this section we introduce a correlation between the agents'  initial opinions and their connection probabilities. Specifically, the probability that two agents form a connection increases as their initial opinions become more similar. For two agents $i,j$ with initial opinions $x_{i}(0)$ and $x_{j}(0)$, we assume that the probability of a connection between them is given by $1/(1+\gamma|x_{i}(0)-x_{j}(0)|)$. Hence, when the difference between the initial opinions is higher then the probability of a connection between these two agents is lower.\footnote{From our experiments, other functions that are decreasing in the difference of opinions $|x_{i}(0)-x_{j}(0)|$ instead of the function $1/(1+\gamma|x_{i}(0)-x_{j}(0)|)$ provided similar insights. } Intuitively, the parameter $\gamma$
 dictates the strength of the correlation.

In our simulations (see Figure \ref{fig:Correlated}) we observe that introducing a correlation between the initial opinions of agents and their connection probabilities tends to amplify the final polarization across different platform moderation parameter $c$
 and varying levels of platform influence parameter $b$. This phenomenon can be attributed to the fact that when agents are more likely to connect with others harboring similar initial opinions, it fosters an environment where echo chambers can proliferate. In such scenarios, agents predominantly interact with peers who reinforce their existing beliefs, overturning the potential moderating effects that interactions with individuals with differing opinions could have fostered. This accentuates the necessity for platforms to not only consider content moderation and influence strategies but also to actively encourage a diversity of connections in order to reduce final polarization. 

To add richness to the analysis, we run the simulations described in Figure \ref{fig:Correlated} with the three media outlets model described in Section \ref{sec:three} and varied model parameters. In particular, we consider $30$ agents with initial opinion that is uniformly distributed on $[-1,0]$ and $30$ agents with initial opinion that is uniformly distributed on $[0,1]$. Similar qualitative  results were observed with different parameters. 

\begin{figure}
\centering
\begin{subfigure}{.5\textwidth}
  \centering
  \includegraphics[width=1.0\linewidth]{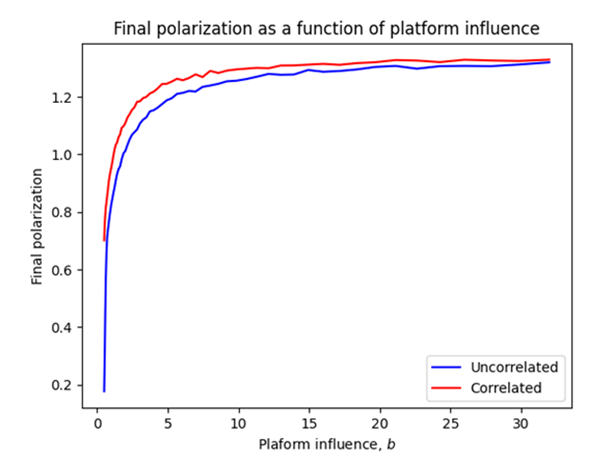}
  \subcaption{Final Polarization as a function of $b$}
  \label{fig:Cofig1}
\end{subfigure}%
\hfill
\begin{subfigure}{.5\textwidth}
  \centering
  \includegraphics[width=1.0\linewidth]{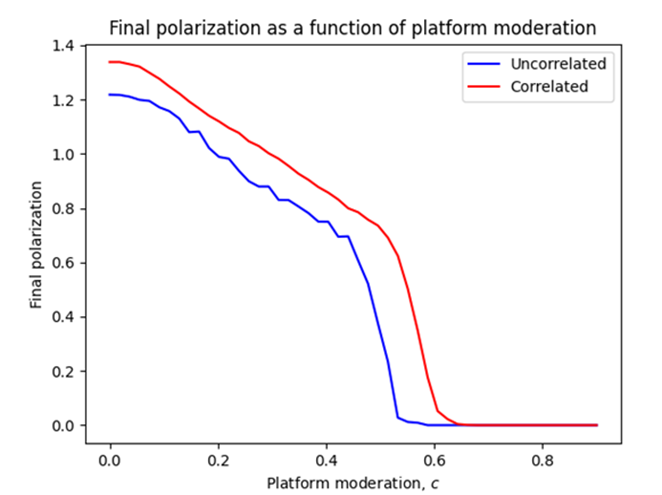}
  \subcaption{Final Polarization as a function of $c$}
  \label{fig:Cofig2}
\end{subfigure}
\caption{ In the left figure we let $a=1$, $c=0.33$ and in the right figure we use $a=b=1$. We consider $\gamma =15$ so the probability of a connection between two agents with initial opinions $x_{i}(0),x_{j}(0)$ is $1/(1+15|x_{i}(0)-x_{j}|)$. Other parameters led to similar qualitative results.  }
\label{fig:Correlated}
\end{figure}

\subsection{Convergence of Opinions}\label{sec:converge}

We next show that when the influence matrix $A$ is symmetric  (note that $A$ here is the influence matrix like in Section \ref{sec:model} and not necessarily an adjacency matrix) the opinions of agents always converge to an equilibrium. We then provide a numerical example that shows that when the influence matrix $A$ is not symmetric, the opinions do not necessarily converge. 

Intuitively, there are two main forces in our model. The first is social learning, which tends to bring opinions closer together. The second is the platform's influence, which tends to push opinions toward the outlets aligned with the agents. The extent of both of these effects also depends on each agent's current opinion. With social learning alone, opinions will only become closer as time passes, and they will eventually converge. Indeed there are standard convergence results under weak connectedness assumptions \citep{proskurnikovT17-tutorial-dynamic-networks}. With the platform alone, opinions also converge. Agents' opinions will equal one of the media outlets' opinions ($-1$ or $1$) depending on their initial opinions: $x_i(\infty) \equiv \lim_{t \to \infty} x_i(t) = \sgn(x_i(0))$. 

With both social learning and the platform's influence, it is not even intuitively clear that opinions will converge to a limit. Because the push of the platform's influence depends on the agent's current opinion, agent $i$'s opinion might start off as somewhat right-leaning, i.e., $x_i > 0$ and be pushed by the platform to the right, but social learning brings her opinion back to being left-learning, i.e., $x_i < 0$. At the same time, the opinions of the other agents also change, and now social learning can bring agent $i$'s opinion to the right again. The vector fields arising from the platform are different for each orthant of the opinion space, so it is not inconceivable that the trajectory, in principle, can converge to a limit cycle. We show in a numerical example that this cyclical behavior can arise even in a simple network with 4 agents.  

Nevertheless,
as we establish in Theorem \ref{thm:opinions-converge} below,  such situations do not arise as long as the influence matrix $A$ is symmetric, i.e., $a_{ij} = a_{ji}$ for all $i,j$. In this case, the opinions' trajectory always converges to a limiting opinion or an equilibrium. The convergence result holds for a Lipschitz  continuous function $s$, under mild conditions (that are satisfied by $\mathrm{sgn}_\epsilon)$. 

\begin{theorem}\label{thm:opinions-converge}
Suppose that the influence matrix $A$ is a symmetric matrix of nonnegative entries, $L=L[A]$ is the Laplacian of $A$, $B= \diag {(\boldsymbol{b})}$ is a diagonal matrix with positive entries, $\mathbf{x}_0 \in \R^{n}$, and $s: \R \to [-1,1]$ is a Lipschitz continuous function. Let $K = \max(\max_{i} |x_i(0)|, 1)$ and $\Omega = [-K, K]^{n}$. Then, there exists a unique continuous solution $\{\mathbf{x}(t)\}_{t\geq 0}$ to the differential equation $\dot{\mathbf{x}} = - L \mathbf{x} + B(s(\mathbf{x}) - \mathbf{x})$ with initial condition $\mathbf{x}(0) = \mathbf{x}_0$, and $\mathbf{x}(t) \in \Omega$ for all $t \geq 0$.

Suppose further that
 $E=\{\mathbf{x}\in \Omega|(L+B) \mathbf{x} = B s(\mathbf{x}) \} $ is a finite set.
Then the limit $\mathbf{x}_\infty \equiv \lim_{t \to \infty} \mathbf{x}(t)$ exists and is such that
$\mathbf{x}_\infty \in E$. 
\end{theorem}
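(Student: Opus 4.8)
The plan is to exploit the symmetry of $A$ (hence of $L=L[A]$) to rewrite the dynamics as a gradient flow, and then invoke LaSalle's invariance principle. Let $\sigma(u)=\int_0^u s(v)\,dv$; since $s$ is continuous, $\sigma$ is $C^1$ with $\sigma'=s$ and $|\sigma(u)|\le|u|$ (because $|s|\le 1$). Define
\[
V(\mathbf{x})=\tfrac12\,\mathbf{x}^\top L\mathbf{x}+\tfrac12\,\mathbf{x}^\top B\mathbf{x}-\sum_{i=1}^{n} b_i\,\sigma(x_i).
\]
Symmetry of $L$ yields $\nabla V(\mathbf{x})=L\mathbf{x}+B\mathbf{x}-Bs(\mathbf{x})$, so the differential equation is exactly $\dot{\mathbf{x}}=-\nabla V(\mathbf{x})$; moreover $\nabla V(\mathbf{x})=0$ is precisely the condition $(L+B)\mathbf{x}=Bs(\mathbf{x})$, so $E=\{\mathbf{x}\in\Omega:\nabla V(\mathbf{x})=0\}$. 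Since $L$ is positive semidefinite and $B$ is positive definite, $L+B$ is positive definite, and combined with $|\sigma(x_i)|\le|x_i|$ this makes $V$ coercive, so all of its sublevel sets are compact.

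\textbf{Existence, uniqueness, and invariance of $\Omega$.} Peano's theorem gives a local solution; uniqueness (and well-definedness of the induced semiflow) follows from Picard--Lindel\"of once one imposes the ``mild condition'' alluded to in the text --- e.g.\ local Lipschitzness of $s$, which $\sgn_\epsilon$ satisfies --- since continuity of $s$ alone is insufficient for uniqueness (cf.\ $\dot x=\sqrt{|x|}$). For global existence and the claim $\mathbf{x}(t)\in\Omega=[-K,K]^{n}$ (note $\mathbf{x}_0\in\Omega$ since $K\ge\max_i|x_i(0)|$), I would verify forward-invariance of $\Omega$ through Nagumo's subtangentiality criterion: on the face $\{x_i=K\}$ we have $x_j\le K$ for all $j$ and $s(x_i)\le 1\le K$, hence $\dot x_i=\sum_{j\ne i}a_{ij}(x_j-K)+b_i(s(x_i)-K)\le 0$, and symmetrically $\dot x_i\ge 0$ on $\{x_i=-K\}$; thus the compact convex box $\Omega$ is forward-invariant and every solution is global.

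\textbf{Convergence.} Along the trajectory, $\frac{d}{dt}V(\mathbf{x}(t))=\langle\nabla V(\mathbf{x}(t)),\dot{\mathbf{x}}(t)\rangle=-\|\dot{\mathbf{x}}(t)\|^{2}\le 0$, and $V$ is bounded below on the compact invariant set $\Omega$, so $V(\mathbf{x}(t))$ decreases to a limit $V_\infty$. Consequently the $\omega$-limit set $\omega(\mathbf{x}_0)\subseteq\Omega$ is nonempty, compact, connected, and invariant, with $V\equiv V_\infty$ on it. For any $\mathbf{y}\in\omega(\mathbf{x}_0)$ the solution through $\mathbf{y}$ remains in $\omega(\mathbf{x}_0)$ with $V$ constant along it, which forces $\|\dot{\mathbf{x}}\|\equiv 0$, i.e.\ $\nabla V(\mathbf{y})=0$, i.e.\ $\mathbf{y}\in E$. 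Hence $\omega(\mathbf{x}_0)\subseteq E$; since $E$ is finite while $\omega(\mathbf{x}_0)$ is nonempty and connected, $\omega(\mathbf{x}_0)$ is a single point $\{\mathbf{x}_\infty\}$, and therefore $\mathbf{x}(t)\to\mathbf{x}_\infty\in E$.

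\textbf{Where the difficulty lies.} The one genuinely creative step is recognizing the potential $V$ and, crucially, using symmetry of $A$ to identify the dynamics with $\dot{\mathbf{x}}=-\nabla V(\mathbf{x})$ --- exactly where symmetry is indispensable, matching the paper's subsequent asymmetric example exhibiting cyclic, non-convergent behavior. The rest is standard but needs some care: making LaSalle rigorous for a merely continuous vector field (handled by arguing with the $\omega$-limit set of the bounded orbit and using local Lipschitzness of $s$ for well-definedness and connectedness of $\omega(\mathbf{x}_0)$), and --- the step I would flag as the subtle one --- passing from ``the orbit approaches the set $E$'' to ``the orbit converges to a point of $E$,'' which is precisely where the finiteness (hence discreteness) of $E$ is used together with connectedness of $\omega(\mathbf{x}_0)$.
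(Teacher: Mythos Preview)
Your proof is correct and follows essentially the same approach as the paper: the same Lyapunov function (the paper writes it as $\tfrac12\mathbf{x}^\top M\mathbf{x}-\sum_j b_j\int_0^{x_j}s$ with $M=L+B$, which is exactly your $V$), the same gradient-flow identification $\dot{\mathbf{x}}=-\nabla V$, and LaSalle's invariance principle to obtain $\omega(\mathbf{x}_0)\subseteq E$, then finiteness of $E$ to conclude convergence to a point. You are in fact more careful than the paper on two technicalities---flagging that mere continuity of $s$ does not by itself yield uniqueness (the paper asserts it does), and making explicit the connectedness of the $\omega$-limit set to pass from ``approaches the finite set $E$'' to ``converges to a single point of $E$.''
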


Theorem \ref{thm:opinions-converge} shows agents' opinions  converge to a limiting vector of opinions for any continuous function $s(\cdot)$ and any symmetric influence matrix $A$. When $A$ is not symmetric then the convergence of opinions is not guaranteed as the following simple numerical example shows. Suppose that  $\dot{\mathbf{x}} = - L \mathbf{x} + b(\sgn_{\epsilon}(\mathbf{x}) - \mathbf{x})$ as in Equation (\ref{eqn:main-dynamics-b}). Suppose that the parameters are given by $\epsilon = 0.1$, $b=0.6$ and there are only four agents. For $i=1,2,3$ assume that $a_{ij}=1$ if $j=i+1$ and $0$ otherwise, and for $i=4$ assume that $a_{ij}=1$  for $j=1$ and zero otherwise; i.e., the network is a directed cycle with an influence parameter of $1$. In Figure \ref{fig:example} we plot the agents' opinions over time when the initial opinion vector is given by $(x_{1}(0),x_{2}(0),x_{3}(0),x_{4}(0))=(-1/2,1,1/2,-1)$. We see in Figure \ref{fig:example} that the agents' opinions are cyclic and do not converge to an equilibrium.

\begin{figure}
\centering
\includegraphics[width=0.65\linewidth]{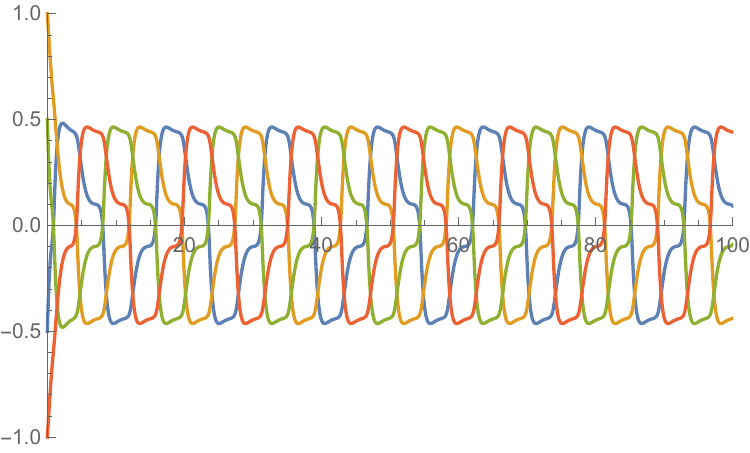}
  \caption{Initial opinions: $(x_{1}(0),x_{2}(0),x_{3}(0),x_{4}(0))=(-1/2,1,1/2,-1)$. Each curve depicts the evolution of the opinion of a different agent over time. The agents'  opinions are cyclic and do not converge to an equilibrium.}
  \label{fig:example}
\end{figure}

We note that the example provided in Figure \ref{fig:example} does not contradict Theorem \ref{Thm:Main}. While a given realization in the stochastic block model typically leads to an influence matrix that is not symmetric, the influence matrix is  symmetric in expectation. In Theorem \ref{Thm:Main} we show that for connected networks, this is suffice to show that opinions approximately converge when the number of agents approaches infinity.  

\section{Extensions and Simulations}\label{sec:sim}

 Throughout this section, we consider two set of simulations.  As in Section \ref{sec:main-result} we assume that if agent $i$ has $|N(i)|$ connections then each connection has an influence weight of $1/|N(i)|$, i.e., $a_{ij} = 1/|N(i)|$ is the influence of agent $j$ on agent $i$. For the  model with two media outlets, we denote by $L$ and $R$ the sets of $L$ agents (agents with negative initial opinions) and $R$ agents (agents with positive initial opinions). Each agent in the $L$ block ($R$ block) has an initial opinion that is drawn independently from a known distribution $\mathcal{D}_L$ ($\mathcal{D}_R$).  As in Equation (\ref{eqn:main-dynamics-b}), opinions evolve according to $ \dot{x}_i(t) = \sum_{j \neq i} a_{ij} (x_j(t) - x_i(t)) + b (\sgn_{\epsilon}(x_i(t)) - x_i(t))$ for sufficiently small $\epsilon$. 

The first set of simulations that we consider is based on simulating a stochastic block model with two blocks. Each block has $n$ agents. Each pair of agents from the same block is connected with probability $p_s$, and each pair of agents from different blocks is connected with probability $p_d$. Our simulations suggest that the predictions and insights our theory provide for large stochastic networks apply also for small networks where the initial opinions are random. In Appendix B we provide simulations for a non-normalized stochastic block model with two blocks and show that similar results hold for the non-normalized model also. 

The second set of simulations that we consider is based on simulating a given network that is constructed from real-world data. We use the political blogs data collected by  \cite{adamic2005political}. In this data agents are labeled to liberal (left) and conservative (right) agents based on their political blogs.  \cite{adamic2005political} measure the degree
of interaction between left and right agents and construct the connection between agents  using  the activity in the political  blogs.\footnote{We removed from the original data nodes without connections.}  The
 network has 1,222 nodes and 33,558 edges, There are 586 left-leaning nodes and 636 right-leaning nodes in the 
network. The average degree (the average number of connections that an agent has) is 13.73.  Interestingly, most nodes have a small number of connections and the variation
is quite large. Approximately $1/3$ of nodes have less than 3 connections, and a few nodes have more
than 200 connections. Our simulations suggest that our theoretical insights hold also for this political blogs network.

\subsection{Persistent Disagreement and Polarization}\label{subsec:sim-disagree-polarize}

In Theorem  \ref{thm:equal-a-init-char} we show that the limiting opinions' polarization is increasing with the platform's influence for the two-agent system. We extended this result in Theorem \ref{Thm:Main} for large stochastic block models with two blocks. We now show that this result extends for the political blogs network and for a small stochastic block model with random initial opinions. We also show that the polarization trajectories converge monotonically to the limiting polarization as Proposition \ref{prop:polar-monotonic} and Corollary \ref{Cor:polarization} suggest for the two-agent system and the large stochastic block model, respectively. 

For the stochastic block model we consider the parameters $(n, p_s, p_d) = (32, 1/4, 1/8)$. The initial opinions for $L$ and $R$ agents are drawn independently from $\mathcal{D}_L \equiv \text{Unif}[-2,0]$ and $\mathcal{D}_R \equiv \text{Unif}[0,2]$. We define the polarization as $\frac{1}{n} \sum_{j \in R} x_j - \frac{1}{n} \sum_{i \in L} x_i $, which is the difference between average opinions of the two groups. For each $b \in \{0.25, 0.5, 1, 2, 4, 8, 16, 32\}$, we compute the distribution of final polarization \textit{conditional on reaching persistent disagreement} by simulating the opinion trajectory from random graphs and initial opinions $N = 1000$ times. 
We plot the 5th, 50th, and 95th percentiles of the final polarization in blue error bars in Figure~\ref{fig:polarization-empirical-theoretical}. Alongside the empirical distributions of final polarization, we plot the values $2b/(2p_d/(p_{d}+p_{s}) + b)$, which we call the theoretical final polarization. The theoretical final polarization follows from the characterization of the persistent disagreement equilibria in the two-agent system (see Section \ref{subsec:2-agent}) and our approximation result (see Theorem \ref{Thm:Main}). Figure~\ref{fig:polarization-empirical-theoretical} shows that \textit{the distributions of final polarization are concentrated around their theoretical values}, demonstrating our theory's predictive power. 


\begin{figure}
\begin{subfigure}{.47\textwidth}
  \centering
  \includegraphics[width=1.0\linewidth]{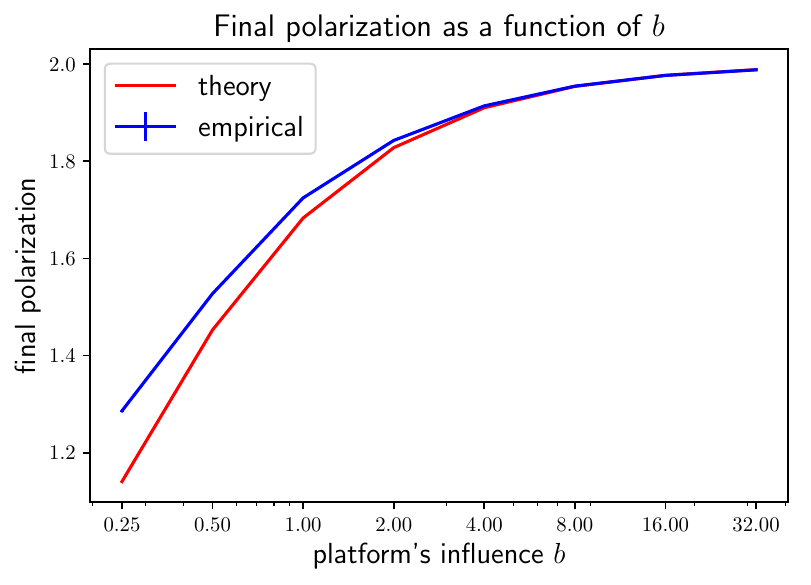}
  \subcaption[1.25\linewidth]{{Political blogs network}}
\end{subfigure}%
\hfill
\begin{subfigure}{.47\textwidth}
  \centering
  \includegraphics[width=1.02\linewidth]{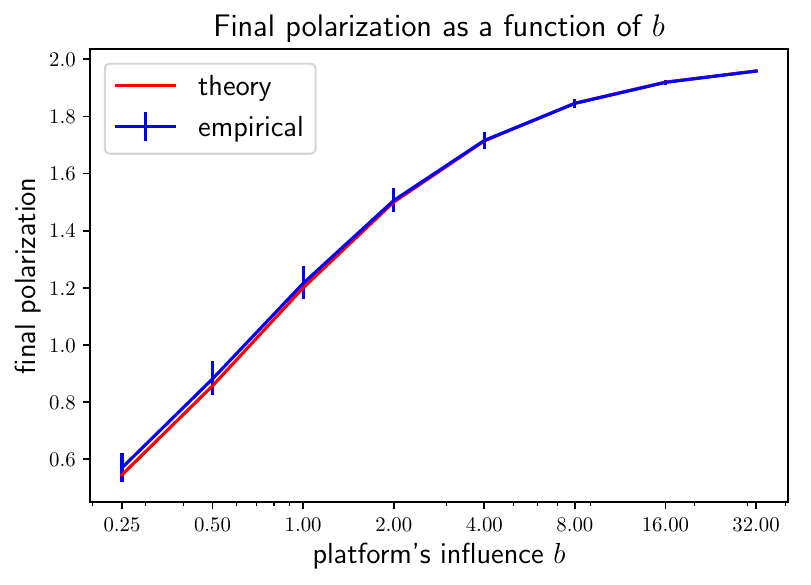}
  \subcaption{Stochastic block model}
\end{subfigure}
\caption{The limiting opinions' polarization is concentrated around the theoretical values.}
 \label{fig:polarization-empirical-theoretical}
\end{figure}

We next consider the trajectory of how polarization evolves from the initial condition to the limiting opinions. We use different values of $h$ where $\mathcal{D}_L \equiv \text{Unif}[-h,0], \text{ and }  \mathcal{D}_R \equiv \text{Unif}[0,h]$, because a higher $h$ means a higher initial polarization, and we  illustrate the trajectory monotonicity for a wide range of initial polarization values. Figure~\ref{fig:polarization-trajectory} plots the means and standard deviations of the trajectory for each $h$ and each time step. We can see that \textit{the polarization trajectory is monotonic, increasing for low initial polarization states and decreasing for high initial polarization states, and they all converge to roughly the same limit polarization}. 

\begin{figure}
\begin{subfigure}{.47\textwidth}
  \centering
  \includegraphics[width=1.0\linewidth]{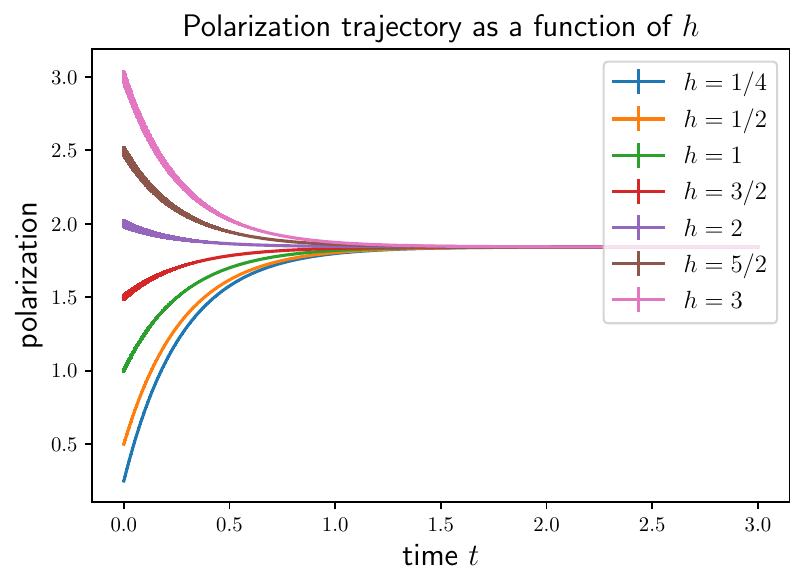}
  \subcaption[1.25\linewidth]{{Political blogs network}}
\end{subfigure}%
\hfill
\begin{subfigure}{.47\textwidth}
  \centering
  \includegraphics[width=1.02\linewidth]{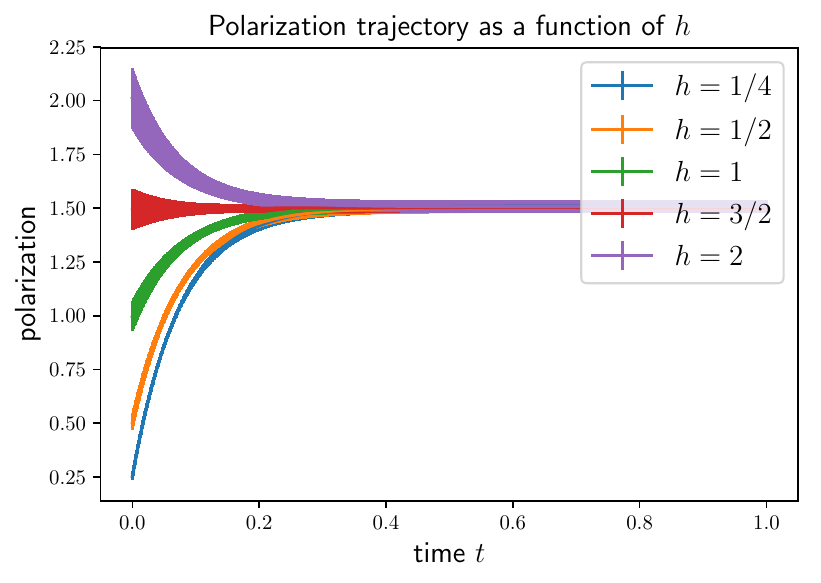}
  \subcaption{Stochastic block model}
\end{subfigure}
\caption{The polarization trajectories are monotonic and converge to the same limit polarization.}
  \label{fig:polarization-trajectory}
\end{figure}

\subsection{Impact of Initial Polarization on the Likelihood of Consensus}\label{subsec:sim-balance}

 In this section we show that \textit{low-polarization initial opinions generally lead  to lower consensus probabilities} for a stochastic block network with random initial opinions and for the political blogs  network (see the discussion after Theorem 
 \ref{thm:equal-a-init-char}). We consider a stochastic block model with parameters $(n,p_s,p_d) = (32, 1/4, 1/8)$. We use the parameters  $b = 0.05$, and $\mathcal{D}_L \equiv \text{Unif}[-h,0], \mathcal{D}_R \equiv \text{Unif}[0,h]$ for $h \in \{0.1, 0.2, \dots, 3.0\}$. The mean initial polarization between a left and a right agent is $h/2-(-h/2)=h$. We simulate the system $N = 10000$ times. If $m$ instances are consensus, we compute the 95\% confidence interval as $p \pm 2 \sqrt{p(1-p)/N}$ where $p = m/N$. Figure~\ref{fig:init-pol-prob} shows that consensus probability is increasing in the initial polarization $h$ for the stochastic block model and the political blogs network.

\begin{figure} 
\begin{subfigure}{.48\textwidth}
  \centering
  \includegraphics[width=1.0\linewidth]{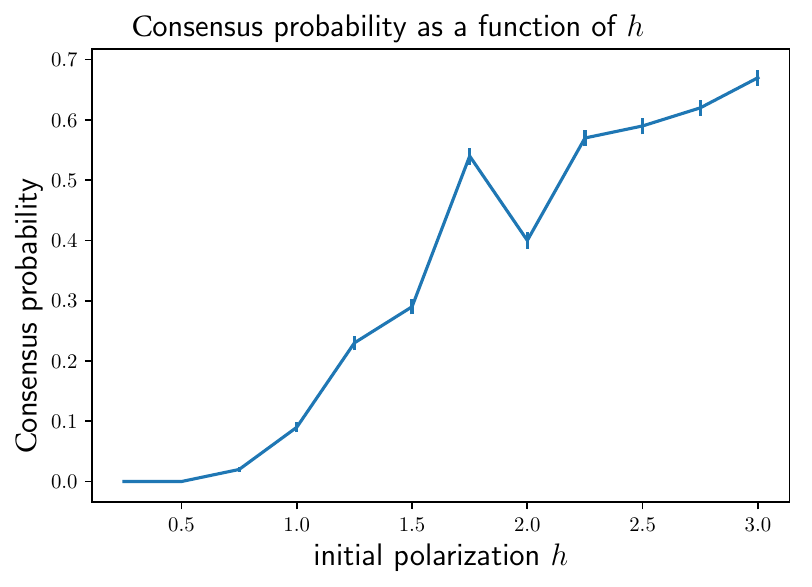}
  \caption{Political blogs network}
\end{subfigure}%
\hfill
\begin{subfigure}{.48\textwidth}
  \centering
  \includegraphics[width=0.95\linewidth]{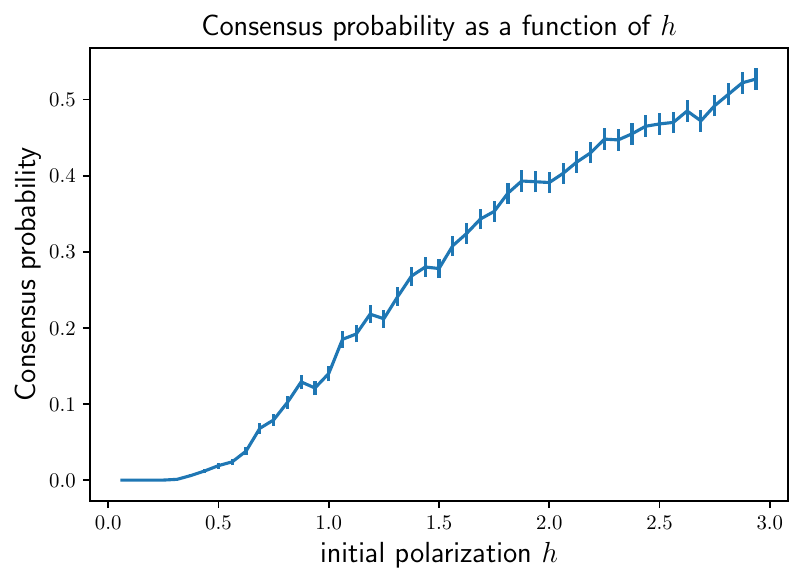}
  \caption{Stochastic block model}
\end{subfigure}
\caption{Consensus probability is increasing in the initial polarization. Recall that in the political blog network the agents' connections and their political view (left-leaning or right-leaning) are labeled but the initial opinion (actual number) is not.  }
\label{fig:init-pol-prob}
\end{figure}

In the model with two media outlets, consensus corresponds to zero polarization. As suggested by Figure \ref{fig:init-pol-prob}, a higher initial polarization often results in reduced final polarization. In the three media outlet model presented in Section~\ref{sec:three}, the relationship between consensus and polarization is more nuanced. 
In particular, partial consensus such as alignment between two of the three groups (e.g., moderates and right-leaning agents) does not necessarily imply a reduction in overall polarization. Depending on the  configuration, such partial alignment may coincide with increased divergence from the third group (e.g., left-leaning agents), leading to higher overall final polarization despite lower overall clustering.\footnote{For example, when $c$ is very small, the system behaves similarly to the two-agent case, and high platform influence ($b$) may lead to final opinions that are highly polarized between two groups.}  
Nonetheless, our simulations reveal a pattern similar to the two-group setting: as initial polarization increases, final polarization tends to decrease, though the effect is modest.

 Figure \ref{fig:initial-3block} illustrates the impact of initial opinions on final polarization. We use the parameters $c = 0.35$ and $b = 0.4$, with 50 left agents whose initial opinions are drawn uniformly at random from $[-h, -c]$, 50 moderate agents from $[-c, c]$, and 50 right agents from $[c, h]$ and consider $h \in \{0.5,1,1.5,2\}$.\footnote{An alternative approach to varying initial polarization would be to split the population into two groups with initial opinions drawn from $[-h, 0]$ and $[0, h]$, as done in the two-block model. However, applying this in the three-outlet setting would implicitly reduce the expected number of moderate agents, since $c$ is fixed and fewer agents would fall into the $[-c, c]$ interval. To isolate the effect of initial polarization without altering the group composition, we chose instead to fix the proportion of moderate, left, and right agents, and increase the extremity of initial opinions for the left and right groups.

\begin{figure}[H]
    \centering
    \begin{subfigure}[b]{0.4\textwidth}
        \includegraphics[width=\textwidth]{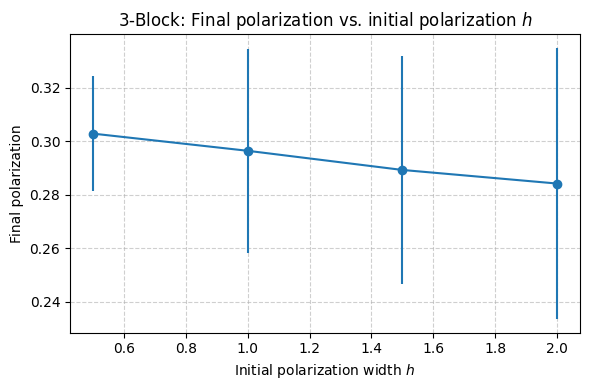}
        \caption{Final vs. initial polarization}
        \label{fig:polarization_vs_h}
    \end{subfigure}
    \caption{
    Simulation results for the three-block model with varying levels of initial polarization $h$. 
    For each value of $h \in \{0,5,1,1.5,2\}$, we run 200 simulations with a stochastic block model consisting of 150 agents (50 left, 50 moderate, 50 right). The figure summarizes the relationship between initial polarization width $h$ and the average final polarization. For each $h$, we compute the mean and standard deviation of the final polarization metric across the 200 runs. The results indicate a modest decreasing trend, suggesting that as initial polarization increases, the final system-wide polarization tends to decrease. 
    }
      \label{fig:initial-3block}
\end{figure}


}

\subsection{Extremism}\label{subsec:extremism}

In the stochastic block model simulation we consider the parameters $(n, p_s, p_d) = (32, 1/4, 1/8)$. The initial opinions are drawn independently form the distributions $\mathcal{D}_L \equiv \text{Unif}[-2,0]$ and $\mathcal{D}_R \equiv \text{Unif}[0,2]$. We define the extremism as $\frac{1}{|V|} \sum_{i \in V} |x_i|$, where $V = L \cup R$ is the set of agents. This extremism measure is the equal weight $l_1$ distance of opinions.\footnote{We can use $l_2$ or other distance measures in place of $l_1$ in the definition. The results are analogous.}

In Figure~\ref{fig:change-in-extremism}, for $b \in \{0.01,0.05,0.1,0.5,1,5,10\}$, we calculate the final extremism (i.e., the extremism of the limiting opinions). We plot the 25th, 50th, and 75th percentiles (interquartile range) in blue from $N=100$ random samples. We see that for low $b$ and for high $b$, the extremism is high while for intermediate $b$, the extremism is moderate.\footnote{The interquartile range for extremism at values of $b$ where  the limiting opinion is either a consensus equilibrium or a persistent disagreement equilibrium, and both options have high probability is much wider than that of other values of $b$ where this does not happen.} This insight agrees with our discussion in Section \ref{subsec:2-agent} that when the platform is either very weak or very strong, society becomes more extreme, whereas when the platform's strength is intermediate, society becomes less extreme. 


\begin{figure} 
\begin{subfigure}{.48\textwidth}
  \centering
  \includegraphics[width=1.0\linewidth]{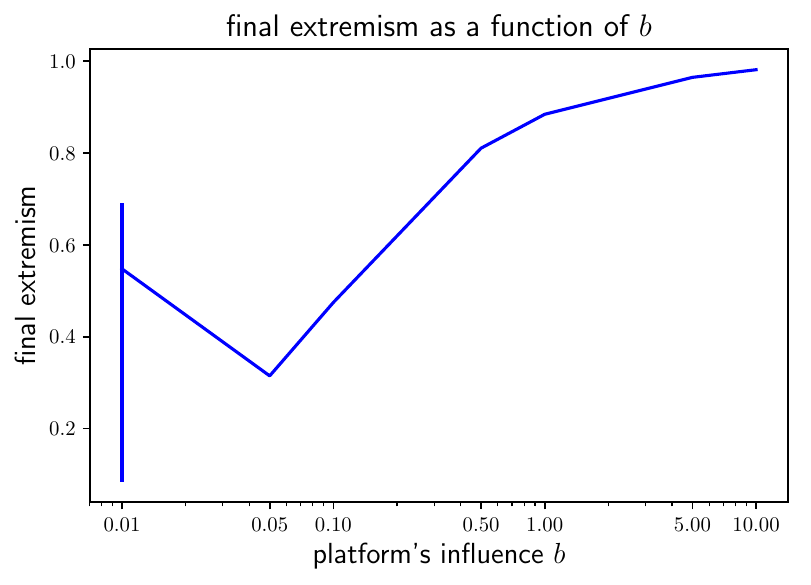}
  \caption{Political blogs network}
\end{subfigure}%
\hspace{1em}
\begin{subfigure}{.48\textwidth}
  \centering
  \includegraphics[width=1.06\linewidth]{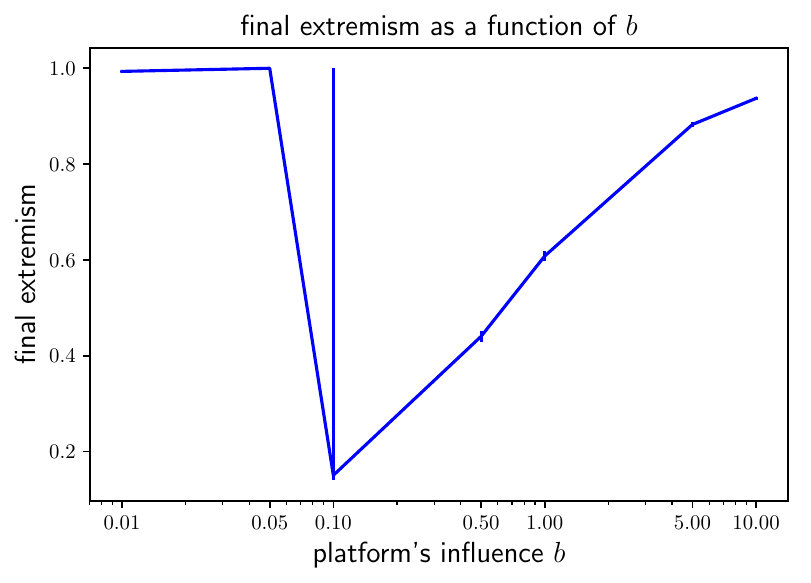}
  \caption{Stochastic block model}
\end{subfigure}
\caption{The effect of the platform's strength $b$ on extremism.}
\label{fig:change-in-extremism}
\end{figure}

Figure~\ref{fig:extremism-two-sources} also supports this argument (see Table \ref{table:extremism-reasons} for a summary). In blue, we plot the consensus probability as a function of $b$. In red, we plot the mean final extremism conditional on reaching persistent disagreement (rather than consensus). In black, we plot the unconditional mean final extremism. We see that consensus probability is decreasing in $b$, while mean final extremism conditional on persistent disagreement is increasing in $b$. For intermediate and large $b$, the consensus probability is near zero so the extremism conditional on persistent disagreement is essentially the same as unconditional extremism. We see this visually when the black and red lines agree. When $b$ is very small, on the other hand, the consensus probability is high, so even when the extremism is low if we reach persistent disagreement, the probability of that happening is low, and most of the time we get consensus with maximum extremism. We then see that in the intermediate regime, the consensus probability is near zero so the final extremism is based on the persistent disagreement equilibrium, which is still low because $b$ is relatively low. This is the ``good'' regime. The ``bad'' regimes are very low $b$ and very high $b$ with high extremism, but they come from different sources: from consensus, and from extreme disagreement, respectively.

\begin{figure} 
\begin{subfigure}{.48\textwidth}
  \centering
  \includegraphics[width=1.0\linewidth]{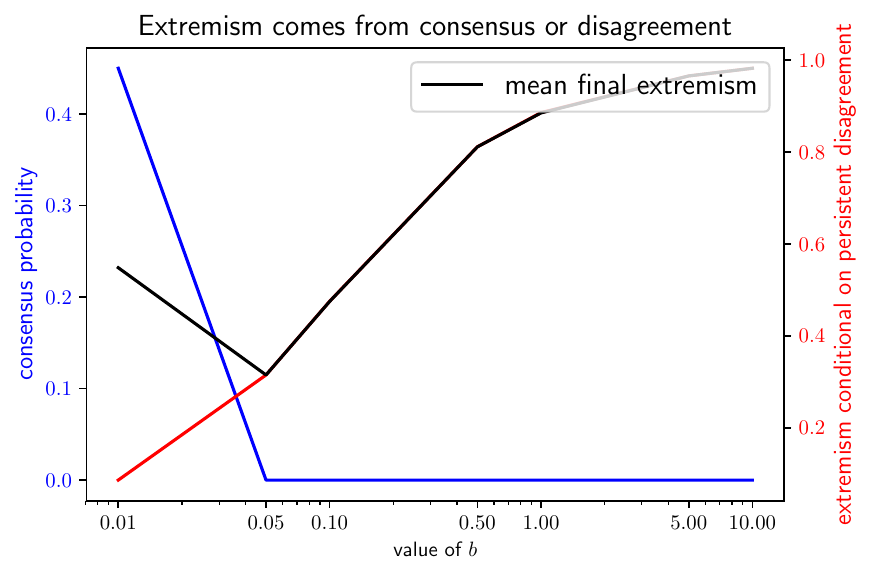}
  \caption{Political blogs network}
\end{subfigure}%
\hspace{1em}
\begin{subfigure}{.48\textwidth}
  \centering
  \includegraphics[width=1.06\linewidth]{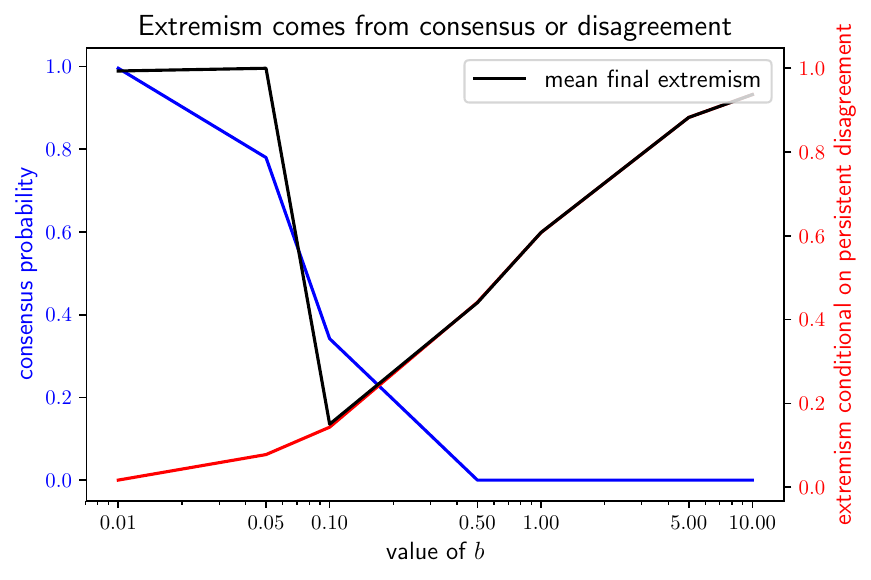}
  \caption{Stochastic block model}
\end{subfigure}
\caption{The sources of extremism: extremism from consensus (low $b$) or disagreement (high $b$).}
\label{fig:extremism-two-sources}
\end{figure}




\begin{table}[h!]
\centering
\begin{tabular}{||c | c | c||} 
 \hline 
 Platform's Strength & Extremism & Reason  \\ [0.5ex] 
 \hline\hline
 weak & high & consensus \\ 
 intermediate & low & moderate persistent disagreement \\ 
 strong & high & near-extreme persistent disagreement \\ [1ex] 
 \hline
\end{tabular}
\caption{The platform reduces extremism when its influence is intermediate.}
\label{table:extremism-reasons}
\end{table}

In the model with two media outlets, extremism peaks when platform influence is minimal because consensus results in unequivocally extreme final opinions with all agents converging to an opinion of either $1$ or $-1$. On the other hand, in the model incorporating three media outlets (see Section \ref{sec:three}), when the moderation parameter 
$c$ is high and the platform's influence 
$b$ is low, the system often steers towards a consensus equilibrium where all agents adopt a neutral opinion of $0$. Figure \ref{fig:finalext3block} shows the final extremism across varying moderation parameters 
$c$. We see that a high 
$b$ tends to amplify final extremism, echoing the behavioral dynamics seen in the two media outlets model where society becomes more extreme as the final polarization is high. Conversely, with high moderation parameters 
$c$ and small $b$, the resulting extremism is low with a higher likelihood of agents converging towards the neutral opinion of $0$. Nonetheless, even in this context, one could posit that society exhibits a form of extremism, given the uniformity in opinions, mirroring the characteristics observed in the two media outlets model.

\begin{figure}
\includegraphics[width=0.95\textwidth,height=0.3\textheight]{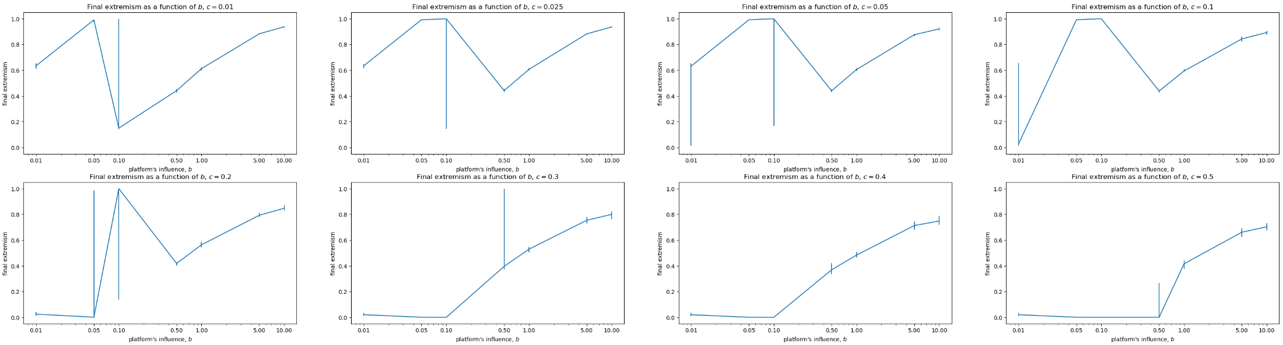}
  \caption{Final extremism as a function of the platform's influence parameter $b$ for different moderation parameters $c$.}
  \label{fig:finalext3block}
\end{figure}

\section{Conclusions}\label{sec:conclude}

In this paper, we investigate the opinion dynamics of agents connected in a social network under the influence of a platform's personalized content recommendation. There are right-leaning and left-leaning media outlets, the platform shows each agent the content from the media outlet closest to that agent's opinion, and each agent's opinion evolves by taking the weighted average of the platform's shown content and her neighbors' opinions.  We first focus on a two-agent system and find that
\begin{itemize}
    \item[(1)] a consensus equilibrium is less likely when the initial opinions are more balanced and the platform's influence is strong.  
    \item[(2)] the extremism of agents' limiting opinions is non-monotonic in the platform's influence, i.e., extremism is low when the platform's influence is intermediate, but extremism is high when either the platform's influence is very weak (leading to consensus) or very strong (leading to a high degree of disagreement).
\end{itemize}

We then show that for large and dense enough networks, the agents' opinions in the  general stochastic block model with two blocks can be approximated by the agents' opinions in the two-agent system. Thus, the findings mentioned above generalize to a stochastic block model with two blocks.  We extended our analysis and approximation result to a model   with three media outlets and  tested numerically our results in a more complicated  model where the agents' initial opinions are correlated with the connection probabilities. 
Lastly, we show that agents' opinions converge to an equilibrium when the influence matrix is symmetric. 

There are a few avenues for future research. In the context of our model, future work can mathematically analyze the stochastic block model with more than three blocks or more complicated structure of media outlets' opinions. 
Further, our model abstracts away issues related to welfare, incentives, and misinformation because our model, like most naive social learning models, is purely descriptive, non-strategic, and opinions are not based on ground truth. We will need a new model, perhaps based on the model proposed in this paper, to explore the welfare implications of the platforms' engagement maximization behavior and design appropriate and incentive-aligned interventions to mitigate  the possible downsides of such behavior.

%
%
%
%

\newpage
\bibliographystyle{ecta}
\bibliography{references}

\newpage


\appendix

\section{Proofs in Section \ref{sec:theory}}

\subsection{Proof of Theorem \ref{thm:equal-a-init-char}}

\begin{proof}{Proof of Theorem \ref{thm:equal-a-init-char}.}

We describe the initial opinions that lead to a persistent disagreement equilibrium. It is clear that if the initial opinion is in the $(+,+)$ quadrant it will lead to a consensus equilibrium at $\mathbf{1} = (1, 1)$. Similarly, if the initial opinion is in the $(-,-)$ quadrant it will lead to a consensus equilibrium at $-\mathbf{1} = (-1,-1)$. It remains to consider initial opinions in the $(+,-)$ and $(-,+)$ quadrants. By symmetry we will consider the $(-,+)$ quadrant, as the other case is analogous.

We first need to show the following two lemmas. The proofs of these lemmas are given in the next subsection, Appendix~\ref{ap:subsec:supporting-lemmas}.

\begin{lemma}\label{lem:stay-in-quadrant}
Let $\mathbf{x}(0) = (x_1(0), x_2(0))$ be an initial condition of the dynamical system
\begin{align*}
    \dot{x}_1 &= a(x_2-x_1) + b(-1 - x_1) \\
    \dot{x}_2 &= a(x_1-x_2) + b(1 - x_2).  
\end{align*}
such that $x_1(0) < 0, x_2(0) > 0$.
If $(a,b,\mathbf{x}(0))$ satisfies condition PD, then $x_1(t) < 0$ and $x_2(t) > 0$ for all $t \geq 0$. If $(a,b,\mathbf{x}(0))$ satisfies condition CO and $x_1(0)x_2(0) < 0$, then there exists a $t' > 0$ such that $x_1(t) < 0, x_2(t) > 0$ for all $0 \leq t < t'$ and either $x_1(t')=0,x_2(t') > b/a$ or $x_1(t') < -b/a,x_2(t')=0$. An analogous symmetric statement holds with the system 
\begin{align*}
    \dot{x}_1 &= a(x_2-x_1) + b(1 - x_1) \\
    \dot{x}_2 &= a(x_1-x_2) + b(-1 - x_2). 
\end{align*}
with $x_1(0)>0,x_2(0) < 0$
\end{lemma}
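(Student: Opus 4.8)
The plan is to exploit the fact that, in the quadrant under consideration, the stated system is \emph{linear}, so it can be solved in closed form. First I would pass to the coordinates $u = x_1 + x_2$ and $v = x_2 - x_1$. A one-line computation decouples the dynamics: $\dot u = -bu$ and $\dot v = -(2a+b)v + 2b$, hence $u(t) = m\,e^{-bt}$ and $v(t) = v^* + (s - v^*)e^{-(2a+b)t}$, where $m = x_1(0)+x_2(0)$, $s = x_2(0)-x_1(0) = |x_1(0)-x_2(0)| > 0$, and $v^* = 2b/(2a+b)$. Since $x_1 = (u-v)/2$ and $x_2 = (u+v)/2$, lying in the open quadrant $\{x_1<0,\ x_2>0\}$ is equivalent to $v(t) > |u(t)|$. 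One checks that $v(t) > 0$ for all $t$, so if the trajectory first leaves the quadrant at a time $t'$ it must satisfy $v(t') = |u(t')|$, which forces $x_1(t') = 0$ when $u(t')>0$ (equivalently $m>0$) and $x_2(t') = 0$ when $m<0$; thus the sign of $m$ alone determines through which axis an exit can occur, and $m=0$ is impossible under CO since CO would then require $0 < (2a+b)s - 2b < 0$.

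Next I would reduce the inequality $v(t) > |u(t)|$ to a one-variable problem. Substituting $y = e^{-bt} \in (0,1]$ and $\gamma = (2a+b)/b > 1$, the trajectory stays in the quadrant for all $t \ge 0$ if and only if $g(y) := v^* + (s-v^*)y^{\gamma} - |m|\,y > 0$ for all $y \in (0,1]$; note $g(0^+) = v^* > 0$ and $g(1) = s - |m| > 0$ (the latter because $x_1(0)$ and $x_2(0)$ have opposite signs), so $g$ can only vanish through an interior dip. If $s \le v^*$ then $g'(y) = \gamma(s-v^*)y^{\gamma-1} - |m| \le 0$, so $g$ is nonincreasing and $g \ge g(1) > 0$. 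If $s > v^*$, then $g'$ is strictly increasing with $g'(0^+) = -|m| \le 0$, so either $g'(1) \le 0$, in which case again $g \ge g(1) > 0$, or $g'(1) > 0$, in which case $g$ has a unique interior minimizer $y_0 = \bigl(|m|/(\gamma(s-v^*))\bigr)^{1/(\gamma-1)}$ with $g(y_0) = v^* - |m|\,y_0\,(\gamma-1)/\gamma$. The key computation is to verify that (up to the boundary equalities excluded by the theorem) $g'(1)\le 0$ corresponds to condition C1 and $g(y_0) > 0$ to condition C2 — substituting the value of $y_0$, rearranging, and clearing the exponents $2a/b$ and $1+2a/b$ produces exactly the constant $b^{1-2a/b}a^{2a/b}/(2a+b)$ appearing in C2. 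Combining the cases gives: $g > 0$ on $(0,1]$ if and only if condition PD holds, and therefore PD implies the trajectory remains in $\{x_1<0,\ x_2>0\}$ for all $t \ge 0$.

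For the CO part, CO says precisely that C1 and C2 both fail strictly, which by the above forces $s > v^*$, $g'(1) > 0$, $m \ne 0$, and $g(y_0) < 0$. Since $g(0^+)>0$ and $g(1)>0$, the function $g$ has a root $y_+ \in (y_0,1)$ and is positive on $(y_+,1]$; translating back, $f(t):=v(t)-|u(t)|$ is positive on $[0,t')$ and $f(t')=0$ at the time $t'$ with $y(t')=y_+$, so the trajectory stays in the open quadrant on $[0,t')$ and touches the boundary at $t'$. To obtain the \emph{strict} inequality in the conclusion I would argue transversality: if $\dot f(t')=0$ then $g'(y_+)=0$, forcing $y_+=y_0$ and hence $g(y_0)=0$, contradicting $g(y_0)<0$; so $\dot f(t')\ne 0$, and since $f>0$ just before $t'$ we must have $\dot f(t')<0$. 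When $m>0$ we have $x_1 = -f/2$, so $\dot x_1(t') = -\dot f(t')/2 > 0$; since along the linear dynamics $\dot x_1(t') = a\,x_2(t') - b$ when $x_1(t')=0$, this yields $x_2(t') > b/a$ (and $x_2(t')=v(t')>0$). The case $m<0$ follows by the symmetry $(x_1,x_2)\mapsto(-x_2,-x_1)$, which leaves the linear system invariant and interchanges the two axes (turning "$x_2(t')>b/a$" into "$x_1(t')<-b/a$"). Finally, the statement for the $(+,-)$ quadrant and the companion system follows from the symmetry $(x_1,x_2)\mapsto(x_2,x_1)$, which maps one linear system onto the other and leaves conditions PD and CO unchanged, since they are symmetric in $x_1(0)$ and $x_2(0)$.

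I expect the two genuine obstacles to be (i) the exponent bookkeeping that identifies $g(y_0) > 0$ with condition C2 — the manipulations are routine but must be carried out carefully, because the unusual exponents and the precise constant of C2 all emerge from this step — and (ii) the transversality argument at the exit time, which is what upgrades the immediate bound $x_2(t')\ge b/a$ (from $\dot x_1(t')\ge 0$) to the strict bound $x_2(t')>b/a$; the crucial observation there is that the very inequality $g(y_0)<0$ that produces an exit under CO simultaneously rules out a tangential touch of the axis.
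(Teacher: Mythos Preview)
Your proof is correct. Both you and the paper exploit the linearity of the system in the quadrant to write down the explicit solution and then decide when it exits; the difference is in the packaging. The paper stays in $(x_1,x_2)$ coordinates, writes the solution as a sum of the two exponentials $e^{-bt}$ and $e^{-(2a+b)t}$, and runs a four-case analysis on the signs of $(2a+b)(u+v)-2b$ and $u-v$ (with $x_1(0)=-u$, $x_2(0)=v$), in each case locating the unique critical time $t^*$ of $x_1$ or $x_2$ and evaluating there. Your change to the decoupled variables $u=x_1+x_2$, $v=x_2-x_1$ together with the substitution $y=e^{-bt}$ collapses all four cases into the study of the single function $g(y)=v^*+(s-v^*)y^{\gamma}-|m|y$ on $(0,1]$, whose convexity, endpoint values, and interior minimum map directly onto the dichotomy: your ``$g'(1)\le 0$'' is exactly the paper's Cases~1, 2, 3.1, 4.1 (which together give condition~C1), and your ``$g(y_0)>0$'' is exactly the paper's computation of $x_1(t^*)<0$ (resp.\ $x_2(t^*)>0$) in Cases~3.2 and~4.2, which yields condition~C2. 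Your transversality argument for the strict bound $x_2(t')>b/a$ (namely, that $g(y_0)<0$ forces $y_+\neq y_0$ and hence $g'(y_+)\neq 0$) is a slightly more self-contained version of the paper's observation that, under CO, the first zero of $x_1$ must occur strictly before the critical time $t^*$, where $\dot x_1>0$. In short: same strategy, but your coordinate choice gives a more unified one-variable analysis in place of the paper's case split.
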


\begin{lemma}\label{lem:cross-eps-band}
Let $\mathbf{x}(0) = (x_1(0), x_2(0))$ be an initial condition of the dynamical system
\begin{align*}
    \dot{x}_1 &= a(x_2-x_1) + b(x_1/\epsilon - x_1) \\
    \dot{x}_2 &= a(x_1-x_2) + b(1 - x_2)   
\end{align*}
such that $x_1(0)=-\epsilon$ and $x_2(0) \geq b/a$. 
Then for all sufficiently small $\epsilon > 0$, there exists a $t' > 0$ such that $x_2(t) > \epsilon$ for all $t \in [0,t']$, $x_1(t)$ is increasing on $[0,t']$ and $x_1(t')=\epsilon$.
\end{lemma}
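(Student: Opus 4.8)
The idea is that the system in Lemma~\ref{lem:cross-eps-band} is \emph{affine-linear}: writing it as $\dot{\mathbf{x}}=A_\epsilon\mathbf{x}+(0,b)^{\top}$ gives the \emph{symmetric} matrix
\[
A_\epsilon=\begin{pmatrix} c & a\\ a & -(a+b)\end{pmatrix},\qquad c:=\tfrac{b}{\epsilon}-(a+b),
\]
(the ``$\epsilon$-band'' equations of Section~\ref{subsec:2-agent}, valid here since $\mathrm{sgn}_\epsilon(x_1)=x_1/\epsilon$, $\mathrm{sgn}_\epsilon(x_2)=1$ on $\{-\epsilon\le x_1\le\epsilon,\ x_2\ge\epsilon\}$). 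First I would record the spectral picture as $\epsilon\downarrow 0$: $\det A_\epsilon=-(a+b)\tfrac{b}{\epsilon}+b(2a+b)<0$, so $A_\epsilon$ is a saddle, with $\lambda_+=\tfrac b\epsilon+O(1)>0$, $\lambda_-=-(a+b)+O(\epsilon)<0$, and (by symmetry) orthogonal eigenvectors $v_+=(a,\lambda_+-c)$, $v_-=(a,\lambda_--c)$, where $v_{+,1},v_{+,2},v_{-,1}>0>v_{-,2}$. The unique equilibrium $\mathbf{x}^{*}=-A_\epsilon^{-1}(0,b)^{\top}$ has $x_1^{*}=-a x_2^{*}/c$ and $x_2^{*}=\tfrac{b-(a+b)\epsilon}{(a+b)-(2a+b)\epsilon}\to\tfrac{b}{a+b}$, so for small $\epsilon$ one has $-\epsilon<x_1^{*}<0$ and $x_2^{*}>\epsilon$.

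Then I would expand the solution in the eigenbasis, $\mathbf{x}(t)=\mathbf{x}^{*}+\alpha_+ e^{\lambda_+ t}v_+ +\alpha_- e^{\lambda_- t}v_-$, where $(\alpha_+,\alpha_-)$ are the coordinates of $\mathbf{x}(0)-\mathbf{x}^{*}=\big(-\epsilon-x_1^{*},\,x_2(0)-x_2^{*}\big)$ in $\{v_+,v_-\}$. The lemma then reduces to two sign claims, to be verified for all sufficiently small $\epsilon$: (a)~$\alpha_+>0$, and (b)~$\alpha_-<0$. Given (a)--(b), everything follows by inspection: since $v_{+,1}=v_{-,1}=a>0$, $\dot x_1(t)=a\big(\alpha_+\lambda_+e^{\lambda_+ t}+\alpha_-\lambda_- e^{\lambda_- t}\big)>0$ for every $t$ (both summands positive: $\alpha_+\lambda_+>0$, and $\alpha_-\lambda_->0$ is a product of two negatives), so $x_1$ is strictly increasing; since $a\alpha_+>0$ and $\lambda_+>0$ we get $x_1(t)\to+\infty$, hence a unique finite $t'$ with $x_1(t')=\epsilon$. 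For $x_2$: $\alpha_+ v_{+,2}>0$ and $\alpha_- v_{-,2}>0$ while $e^{\lambda_- t}\in(0,1]$, so $x_2(t)> x_2^{*}>\epsilon$ for all $t\ge 0$, in particular on $[0,t']$. (These are also exactly the conditions guaranteeing the trajectory stays in the $\epsilon$-band, so this affine description applies to the genuine dynamics---which is why the lemma is useful in the proof of Theorem~\ref{thm:equal-a-init-char}.) Claim (b) is immediate from (a): $\alpha_++\alpha_-=(x_1(0)-x_1^{*})/a=(-\epsilon-x_1^{*})/a<0$ since $0<a x_2^{*}/c<\epsilon$, so $\alpha_-<-\alpha_+<0$.

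The crux is therefore (a). By orthogonality of $v_\pm$, $\alpha_+$ has the sign of $(\mathbf{x}(0)-\mathbf{x}^{*})\cdot v_+ = a\big(-\epsilon+\tfrac{a x_2^{*}}{c}\big)+(\lambda_+-c)\big(x_2(0)-x_2^{*}\big)$. Using $\lambda_+-c=\tfrac{a^2\epsilon}{b}+O(\epsilon^3)$, $x_2^{*}=\tfrac{b}{a+b}-\tfrac{a^2\epsilon}{(a+b)^2}+O(\epsilon^2)$ and $\tfrac1c=\tfrac\epsilon b+\tfrac{(a+b)\epsilon^2}{b^2}+O(\epsilon^3)$, the $O(1)$-in-$\epsilon$ contributions cancel and one obtains $(\mathbf{x}(0)-\mathbf{x}^{*})\cdot v_+=\tfrac{a(a x_2(0)-b)}{b}\,\epsilon+O(\epsilon^2)$. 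If $x_2(0)>b/a$ the $\epsilon$-coefficient is a fixed positive constant, so $\alpha_+>0$ once $\epsilon$ is small. The delicate case is the boundary $x_2(0)=b/a$, where that term vanishes identically; carrying the expansion one further order, the $\epsilon^2$-coefficient works out to exactly $a^2/b>0$, so $\alpha_+>0$ there as well.

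I expect the main obstacle to be exactly this boundary computation: at $x_2(0)=b/a$ the sign of $\alpha_+$ is governed by a second-order term, so one must expand $\lambda_+-c$, $x_2^{*}$, and $1/c$ to the right order and check the remainders are genuinely $o(\epsilon^2)$ for fixed $(a,b)$. The rest---the spectral asymptotics of the symmetric matrix $A_\epsilon$, the diagonalization, and reading off monotonicity of $x_1$, the bound $x_2>\epsilon$, and the existence of $t'$ from the signs (a)--(b)---is routine once the saddle picture is in hand. (One could also sidestep the boundary case: in the single place this lemma is applied, the trajectory first meets $x_1=-\epsilon$ with $x_2>b/a$ strictly; but it is cleaner to prove the inequality as stated.)
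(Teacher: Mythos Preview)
Your proof is correct and follows essentially the same approach as the paper: both diagonalize the affine system in the $\epsilon$-band, identify the saddle structure, and establish the required sign of the unstable-mode coefficient via a Taylor expansion in $\epsilon$ at the worst case $x_2(0)=b/a$. Your version is marginally slicker in two places---you exploit the symmetry of $A_\epsilon$ to read off $\alpha_+$ as an inner product, and you deduce $\alpha_-<0$ from $\alpha_+>0$ via the first-coordinate identity $a(\alpha_++\alpha_-)=-\epsilon-x_1^{*}<0$ rather than a second Taylor expansion---but the method and the decisive second-order computation are the same.
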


Lemma \ref{lem:cross-eps-band} says that if the trajectory enters the $\epsilon$-band of the positive y-axis at point $x_2(0) \geq b/a$, then it will enter the $(+,+)$ quadrant.


Consider $x_1(0) < 0, x_2(0) > 0$. Let  $\tilde{\mathbf{x}}$ be the solution trajectory when $\sgn_{\epsilon}(x_1)$ and $\sgn_{\epsilon}(x_2)$ are replaced by $-1$ and $1$ respectively, and $\mathbf{x}$ be the solution trajectory of the original system with $\sgn_{\epsilon}$. Then $\tilde{\mathbf{x}}$ is described by Lemma~\ref{lem:stay-in-quadrant}. If $(a,b,\mathbf{x}(0))$ satisfies condition PD, then by Lemma~\ref{lem:stay-in-quadrant}, $\tilde{\mathbf{x}}$ stays strictly in the $(-,+)$ quadrant with $\tilde{x}_1(t) < 0, \tilde{x}_2(t) > 0$ for all $t$. For any $\epsilon$ that is less than the minimum distance between the $\tilde{\mathbf{x}}$ trajectory and the two axes, the solution $\mathbf{x}$ coincides with $\mathbf{x}$ as $\sgn_{\epsilon}(x_1) = -1, \sgn_{\epsilon}(x_2) = 1$ always holds, and thus $x_1(t) < 0, x_2(t) >0$ for all $t$, proving the first part of the theorem. 

Now assume that $(a,b,\mathbf{x}(0))$ satisfies condition CO. If $x_1(0)x_2(0) > 0$ then the trajectory trivially converges to a consensus equilibrium, so we will assume $x_1(0)x_2(0) < 0$ from now on, and without loss of generality also assume $x_1(0) < 0, x_2(0) > 0$. By Lemma~\ref{lem:stay-in-quadrant}, the trajectory $\tilde{\mathbf{x}}$ does not stay in the $(-,+)$ quadrant and either crosses the positive y-axis at $x_1=0,x_2 > b/a$ or the negative x-axis at $x_2=0,x_1 < -b/a$. By symmetry, and without loss of generality, assume the former. Since the trajectory is continuous, for sufficiently small $\epsilon > 0$, the trajectory $\tilde{\mathbf{x}}$ will still touch the $\epsilon$-band around the positive y-axis at $x_1=0, x_2 > b/a$. The trajectories $\tilde{\mathbf{x}}$ and $\mathbf{x}$ coincide up until the trajectory touches the $\epsilon$-band. After that, the trajectory $\mathbf{x}$ is described by Lemma \ref{lem:cross-eps-band}, and the lemma shows that the trajectory must cross into the region with $x_1 = \epsilon, x_2 > \epsilon$, after which it converges to the $(1,1)$ consensus equilibrium, as stated in the theorem.
\end{proof}

\subsection{Proofs of Supporting Lemmas in the Proof of Theorem~\ref{thm:equal-a-init-char}}\label{ap:subsec:supporting-lemmas}

\begin{proof}[Proof of Lemma~\ref{lem:stay-in-quadrant}]
We will only consider the first case with $x_1(0)< 0,x_2(0)>0$ and the first system of equations; the second one is entirely analogous. Let $x_1(0)=-u,x_2(0)=v$ with $u,v > 0$. The above system is a standard linear system of ODE with initial conditions, which can be solved to get
\begin{align}
    x_1(t) &= -\frac{b}{2a+b} - e^{-bt} \frac{(u-v)}{2} - e^{-(2a+b)t} \frac{(2a+b)(u+v)-2b}{2(2a+b)} \label{eqn:x1-u-v}\\
    x_2(t) &= \frac{b}{2a+b} - e^{-bt} \frac{(u-v)}{2} + e^{-(2a+b)t} \frac{(2a+b)(u+v)-2b}{2(2a+b)}. \label{eqn:x2-u-v}
\end{align}

With algebra and careful case analysis, we can show the following:
\begin{lemma}\label{lem:alg-pd-ec}
Let $x_1(t)$ and $x_2(t)$ be given as (\ref{eqn:x1-u-v}) and (\ref{eqn:x2-u-v}). Then, under condition~PD, $\max_{t \geq 0} x_1(t) < 0$ and $\min_{t \geq 0} x_2(t) > 0$, while under condition CO, either $\max_{t \geq 0} x_1(t) > 0$ or $\min_{t \geq 0} x_2(t) < 0$.
\end{lemma}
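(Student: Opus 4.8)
The plan is to linearize the time-variable away. Since the right-hand sides of \eqref{eqn:x1-u-v} and \eqref{eqn:x2-u-v} are affine combinations of $e^{-bt}$ and $e^{-(2a+b)t}$, I substitute $s := e^{-bt}\in(0,1]$ and set $\gamma := (2a+b)/b = 1+2a/b > 1$, $\alpha := b/(2a+b)$, $P := (u-v)/2$, $Q := \frac{(2a+b)(u+v)-2b}{2(2a+b)}$, and $M := 2(2a+b)Q = (2a+b)(u+v)-2b$. Then $x_1 = -\alpha - f(s)$ and $x_2 = \alpha - g(s)$ with $f(s) := Ps + Qs^{\gamma}$, $g(s) := Ps - Qs^{\gamma}$, so $\max_{t\ge0}x_1(t) < 0 \iff f(s) > -\alpha$ for all $s\in(0,1]$ and $\min_{t\ge0}x_2(t) > 0 \iff g(s) < \alpha$ for all $s\in(0,1]$. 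The first point I would record is that the endpoints are automatically on the good side: $f(0^{+}) = 0$ and $f(1) = u-\alpha$ are both strictly above $-\alpha$ (since $u,\alpha>0$), and $g(0^{+}) = 0$, $g(1) = \alpha-v$ are both strictly below $\alpha$; hence either axis can be touched only through an interior extremum.

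Next I would do the convexity bookkeeping. Because $f'(s) = P + Q\gamma s^{\gamma-1}$ is strictly monotone in $s$, $f$ has at most one critical point in $(0,1)$, a minimum when $Q>0$ and a maximum when $Q<0$. So if $Q<0$, or $Q>0$ with $P\ge0$, or $Q>0$ with $P+Q\gamma\le0$, then $\min_{(0,1]}f$ is an endpoint value and $x_1(t)<0$ for all $t$; the only regime in which $x_1$ can reach $0$ is $Q>0$, $P<0$, $P+Q\gamma>0$, where there is a unique interior minimizer $s^{*}\in(0,1)$ with $(s^{*})^{\gamma-1} = -P/(Q\gamma) = b(v-u)/M$ and, using $f'(s^{*})=0$, $f(s^{*}) = \frac{\gamma-1}{\gamma}P s^{*} < 0$; there $x_1$ stays negative iff $f(s^{*}) > -\alpha$. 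The analysis for $g$ and $x_2$ is the mirror image (convex/concave roles swapped). Since $P<0 \iff u<v \iff x_1(0)+x_2(0)>0$, the ``$x_1$ touches an axis'' regime forces $v>u$ and the ``$x_2$ touches an axis'' regime forces $u>v$, so at most one of the two can occur — as expected for a trajectory crossing at most one coordinate axis.

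The last step, and the delicate one, is translating ``$f(s^{*})$ vs.\ $-\alpha$'' into Definition~\ref{def:cond-pd-ec}. In the regime $v>u$, $Q>0$, $P+Q\gamma>0$, the inequality $P+Q\gamma>0$ is exactly $M > b(v-u) = b|x_1(0)+x_2(0)|$, i.e.\ the negation of C1; and a direct evaluation gives $f(s^{*}) = -\frac{a(v-u)}{2a+b}\bigl(b(v-u)/M\bigr)^{b/(2a)}$, so raising $f(s^{*}) > -\alpha$ to the power $2a/b$ shows it is equivalent to $M > b^{1-2a/b}a^{2a/b}(v-u)^{1+2a/b}$, i.e.\ to C2, while $f(s^{*})\le-\alpha$ becomes the second inequality of condition CO (written EC in the statement). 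Assembling: if C1 holds and $v>u$, then $x_1$ stays negative by the endpoint argument and $x_2$ stays positive because $u<v$, giving the PD conclusion; if C1 fails and $v>u$, then $x_1$ stays negative precisely under C2 and reaches/overshoots $0$ precisely under CO's second inequality; the case $u>v$ is symmetric via $g$ and $x_2$; and $u=v$ gives $f(s) = Qs^{\gamma}$, $g(s) = -Qs^{\gamma}$ with $Q = u-\alpha$, handled directly. This yields exactly the stated dichotomy.

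The main obstacle is this final assembly. One must check that the regions carved out by C1, by C2, and by the strict CO band — together with the excluded boundary equalities — cover all of parameter space consistently over the sub-cases $u<v$, $u>v$, $u=v$ and the two signs of $Q$. The delicacy is that the threshold $b^{1-2a/b}a^{2a/b}(v-u)^{1+2a/b}$ can sit either above or below $b(v-u)$ (below exactly when $a(v-u)<b$), so C1 and C2 are not mutually exclusive; one must confirm PD still results in every overlap, which it does because in the ``no interior minimizer'' sub-regime $x_1$ and $x_2$ are pinned between their endpoint values. Apart from this casework, the computation of $f(s^{*})$ and the algebra identifying the thresholds are routine.
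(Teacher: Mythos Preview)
Your proof is correct and takes essentially the same approach as the paper: a case split on the signs of $(2a+b)(u+v)-2b$ and $u-v$, location of the unique critical time, and comparison of the extremal value with the threshold to recover C1, C2, and the CO band. Your substitution $s=e^{-bt}$ is a tidy reparametrization that makes the convexity of $f(s)=Ps+Qs^{\gamma}$ (and hence the existence and uniqueness of an interior extremum) immediate, but the resulting casework and the algebra identifying $f(s^{*})\gtrless -\alpha$ with $M\gtrless b^{1-2a/b}a^{2a/b}|u-v|^{1+2a/b}$ coincide with the paper's computation of $x_1(t^{*})$ via $e^{2at^{*}}=M/(b|u-v|)$.
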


The proof of Lemma~\ref{lem:alg-pd-ec} is given at the end of this section. Lemma~\ref{lem:alg-pd-ec} implies that under condition~PD, $x_1(t) < 0$ and $x_2(t) > 0$ for all $t \geq 0$. Similarly, under condition CO, either $\max_{t \geq 0} x_1(t) > 0$ or $\min_{t \geq 0} x_2(t) < 0$. If $\max_{t \geq 0} x_1(t) > 0$, then the trajectory crosses the positive y-axis from the $(-,+)$ to the $(+,+)$ quadrant and at the first time $t'$ that it crosses, the $x_1$-derivative there must be strictly positive in order to reach the interior of the $(+,+)$ quadrant. There, $x_1(t')=0$ and $\dot{x}_1(t') = a(x_2(t')-x_1(t'))+b(-1-x_1(t')) = a x_2(t') - b > 0$, so $x_2(t') > b/a$. Similarly, if $\min_{t \geq 0} x_2(t) < 0$, then the trajectory crosses the negative x-axis from the $(-,+)$ to the $(-,-)$ quadrant and at the first time $t'$ that it crosses, the $x_2$-derivative there must be strictly positive in order to reach the interior of the $(-,-)$ quadrant. There, $x_2(t')=0$ and $\dot{x}_2(t') = a(x_1(t')-x_2(t'))+b(1-x_2(t')) = a x_1(t') + b < 0$, so $x_2(t') < - b/a$.
\end{proof}

\begin{proof}[Proof of Lemma \ref{lem:cross-eps-band}]
Without loss of generality we assume $a=1$ throughout (as we can scale the solution by $a$ and replace $b$ with $b/a$).
We solve the linear ODE and get
\begin{align*}
    \mathbf{x}(t) = c_{+} e^{\lambda_{+}t} \mathbf{v}_{+} + c_{-} e^{\lambda_{-}t} \mathbf{v}_{-} + \mathbf{v}_{0},
\end{align*}
where
\begin{align*}
    \mathbf{v}_{0} = \frac{1}{b+1-b\epsilon-2\epsilon} \begin{pmatrix}
    -\epsilon \\ b-\epsilon-b\epsilon
    \end{pmatrix}, \quad \lambda_{\pm} = \frac{b-2\epsilon-2b\epsilon\pm \sqrt{b^2+4\epsilon^2} }{2\epsilon}, \quad \mathbf{v}_{\pm} = \begin{pmatrix}  (b \pm \sqrt{b^2+4\epsilon^2})/2\epsilon  \\ 1 \end{pmatrix}.
\end{align*}
$(\lambda_{+}, \mathbf{v}_{+}), (\lambda_{-}, \mathbf{v}_{-})$ are eigenvalue-eigenvector pairs of the matrix
\begin{align*}
    \begin{pmatrix}
    b/\epsilon - b - 1 & 1 \\
    1 & -b-1
    \end{pmatrix}.
\end{align*}
We get $c_{\pm}$ from solving the system of equations
\begin{align*}
    c_{+}\left( \frac{b+\sqrt{b^2+4\epsilon^2}}{2\epsilon} \right) + c_{-} \left( \frac{b-\sqrt{b^2+4\epsilon^2}}{2\epsilon} \right) &= -\epsilon+\frac{\epsilon}{b+1-b\epsilon-2\epsilon} \\
    c_{+} + c_{-} &= x_2(0) - \frac{b-\epsilon-b\epsilon}{b+1-b\epsilon-2\epsilon},
\end{align*}
which gives
\begin{align*}
    c_{+} &= -\frac{\epsilon}{\sqrt{b^2+4\epsilon^2}} \left( \epsilon - \frac{\epsilon}{1+b-3\epsilon} + \frac{\sqrt{b^2+4\epsilon^2}-b}{2\epsilon} \left( \frac{b-\epsilon-b\epsilon}{1+b-2\epsilon-b\epsilon} -x_2(0) \right) \right) \\
    c_{-} &= -\frac{b-\epsilon-b\epsilon}{1+b-2\epsilon-b\epsilon} + \frac{\epsilon}{\sqrt{b^2+4\epsilon^2}} \left( \epsilon - \frac{\epsilon}{1+b-3\epsilon} + \frac{\sqrt{b^2+4\epsilon^2}-b}{2\epsilon} \left( \frac{b-\epsilon-b\epsilon}{1+b-2\epsilon-b\epsilon} -x_2(0) \right)  \right)+x_2(0).
\end{align*}
The coefficients of $x_2(0)$ in $c_{+}$ and $c_{-}$ are $\frac{\sqrt{b^2+4\epsilon^2}-b}{2\sqrt{b^2+4\epsilon^2}} > 0$ and $\frac{\sqrt{b^2+4\epsilon^2}+b}{2\sqrt{b^2+4\epsilon^2}} > 0$ respectively, so they are lower-bounded by when $x_2(0)$ is replaced by $b$ because $x_2(0) > b$. We then compute the Taylor expansions of the lower bound around $\epsilon$ at $\epsilon = 0$ to get
\begin{align*}
    c_{+} &> -\frac{\epsilon}{\sqrt{b^2+4\epsilon^2}} \left( \epsilon - \frac{\epsilon}{1+b-3\epsilon} + \frac{\sqrt{b^2+4\epsilon^2}-b}{2\epsilon} \left( \frac{b-\epsilon-b\epsilon}{1+b-2\epsilon-b\epsilon} -b \right) \right) \\
    &= \frac{1+3b}{b^2(1+b)^2} \epsilon^3 + O(\epsilon^4)
    \\
    c_{-} &> -\frac{b-\epsilon-b\epsilon}{1+b-2\epsilon-b\epsilon} + \frac{\epsilon}{\sqrt{b^2+4\epsilon^2}} \left( \epsilon - \frac{\epsilon}{1+b-3\epsilon} + \frac{\sqrt{b^2+4\epsilon^2}-b}{2\epsilon} \left( \frac{b-\epsilon-b\epsilon}{1+b-2\epsilon-b\epsilon} -b \right)  \right)+b \\
    &= \frac{b^2}{1+b} + O(\epsilon)
\end{align*}
The leading coefficients of these expressions (those with the smallest $\epsilon$ power) are positive, so $c_{+}, c_{-} > 0$ for all sufficiently small $\epsilon > 0$. Let $\mathbf{v}_{+} = (v_{+,1},v_{+,2})^\top, \mathbf{v}_{-} = (v_{-,1},v_{-,2})^\top, \mathbf{v}_{0} = (v_{0,1},v_{0,2})^\top$. We also note that $ \lambda_{+},v_{+,1} > 0$, $\lambda_{-}, v_{-,1} < 0$ and for sufficiently small $\epsilon$, $v_{0,1} < 0, v_{0,2} > \epsilon$.
Therefore, for those $\epsilon$,
\begin{align*}
    x_2(t) &=  c_{+} e^{\lambda_{+}t}  + c_{-} e^{\lambda_{-}t}  + v_{0,2} > \epsilon \\
    \dot{x}_1(t) &= c_{+} v_{+,1} \lambda_{+} e^{\lambda_{+}t}  + c_{-} v_{-,1} \lambda_{-} e^{\lambda_{-}t}  > 0
\end{align*}
Therefore, throughout the trajectory, $x_2$ is always bounded below by $\epsilon$ and $x_1$ always increases until time $t'$ such that $x_1(t') = \epsilon$, and we are done.
\end{proof}


\begin{proof}[Proof of Lemma~\ref{lem:alg-pd-ec}]

We start with the expressions (\ref{eqn:x1-u-v}) and (\ref{eqn:x2-u-v}) for the trajectory stated in the main proof:
\begin{align*}
    x_1(t) &= -\frac{b}{2a+b} - e^{-bt} \cdot \frac{u-v}{2}  - e^{-(2a+b)t} \cdot \frac{ (2a+b)(u+v)-2b }{2(2a+b)} \quad &(\ref{eqn:x1-u-v})
    \\
    x_2(t) &= \frac{b}{2a+b} - e^{-bt} \cdot \frac{u-v}{2}  + e^{-(2a+b)t} \cdot \frac{  (2a+b)(u+v)-2b}{2(2a+b)}. \quad &(\ref{eqn:x2-u-v}) 
\end{align*}

Note that this solution implies $x_1(0) = -u < 0, x_2(0) = v > 0, x_1(\infty) = -\frac{b}{2a+b} < 0, x_2(\infty) = \frac{b}{2a+b} > 0$.

We can compute the time derivatives
\begin{align*}
    \dot{x}_1(t) =  e^{-bt} \cdot \frac{b(u-v)}{2}  + e^{-(2a+b)t} \cdot \frac{  (2a+b)(u+v)-2b }{2}
    \\
    \dot{x}_2(t) = e^{-bt} \cdot \frac{b(u-v)}{2}  - e^{-(2a+b)t} \cdot \frac{  (2a+b)(u+v)-2b }{2}
\end{align*}

Now we consider 4 cases. These cases are exhaustive up to sets of measure zero.

\textbf{Case 1} $(2a+b)(u+v)-2b < 0$, $u-v < 0$



Then $\dot{x}_1 \leq 0$ always, so $x_1(t) \leq x_1(0) < 0$ for all $t$. 

Using the expression for $\dot{x}_2(t)$, we see that
\begin{align*}
    \dot{x}_2(t) > 0 \Longleftrightarrow t < \frac{1}{2a} \log \left( \frac{2b-(2a+b)(u+v)}{b(v-u)}  \right) \equiv t^*
\end{align*}
So $x_2$ increases up to $t^*$ and decreases afterward, which implies $x_2(t) \geq \min(x_2(0), x_2(\infty)) > 0$. So we always reach persistent disagreement in this case.

\textbf{Case 2} $(2a+b)(u+v)-2b < 0$, $u-v > 0$

Then $\dot{x}_2(t) \geq 0$ always, so $x_2(t) \geq x_2(0) > 0$.


Using the expression for $\dot{x}_1(t)$, we see that
\begin{align*}
    \dot{x}_1(t) > 0 \Longleftrightarrow t > \frac{1}{2a} \log \left( \frac{2b-(2a+b)(u+v)}{b(u-v)} \right)
\end{align*}
Similarly to Case 1, $x_1(t) < \max(x_1(0),x_1(\infty)) < 0$, so we always reach persistent disagreement in this case.

\textbf{Case 3} $(2a+b)(u+v)-2b > 0$, $u-v < 0$


The expression of $x_2$ shows that $x_2(t) > 0$ for all $t$. We have

\begin{align*}
    \dot{x}_1(t) > 0 \Longleftrightarrow t < \frac{1}{2a} \log \left( \frac{(2a+b)(u+v)-2b}{b(v-u)} \right) \equiv t^*
\end{align*}

\textit{Case 3.1} If $(2a+b)(u+v)-2b < b(v-u)$, then $t^* < 0$, and $\dot{x}_1(t) < 0 $ for all $t \geq 0$ so we always reach persistent disagreement. 

\textit{Case 3.2} If $(2a+b)(u+v)-2b > b(v-u)$, then $t^* > 0$, and the condition $\max_{t} x_1(t) < 0$ is equivalent to $x_1(t^*) < 0$.

We first compute
\begin{align*}
    x_1(t^*) = -\frac{b}{2a+b} + e^{-(2a+b)t^*} \left( \frac{b}{2a+b} - \frac{u+v}{2} + \frac{1}{2}(v-u)e^{2at^*} \right)
\end{align*}

We then plug in $e^{2at^*} b(v-u) = (2a+b)(u+v)-2b$ to get

\begin{align*}
    x_1(t^*) = -\frac{b}{2a+b} + e^{-(2a+b)t^*} \cdot \frac{a((2a+b)(u+v)-2b)}{b(2a+b)}
\end{align*}

By our assumption the second term is positive. We write $e^{-(2a+b)t^*} = (e^{2at^*})^{-(2a+b)/(2a)}$. After some algebraic manipulations, we get that $x_1(t^*) < 0$ if and only if
\begin{align*}
    (2a+b)(u+v)-2b > b^{1-2a/b} a^{2a/b} (v-u)^{1+2a/b}
\end{align*}



\textbf{Case 4} $(2a+b)(u+v)-2b > 0$, $u-v > 0$


The expression of $x_1$ shows that $x_1(t) < 0$ for all $t$ so we are done here.

Similarly to Case 3,

\begin{align*}
    \dot{x}_2(t) > 0 \Longleftrightarrow t > \frac{1}{2a} \log \left( \frac{(2a+b)(u+v)-2b}{b(u-v)} \right) \equiv t^*
\end{align*}

\textit{Case 4.1} If $(2a+b)(u+v)-2b < b(u-v)$ then $t^* < 0$ and $\dot{x}_2(t) > 0$ for all $t \geq 0$ so $x_2(t) > x_2(0) > 0$ for all $t \geq 0$ and we always reach persistent disagreement. 

\textit{Case 4.2} If $(2a+b)(u+v)-2b > b(u-v)$ then $t^* > 0$ and the condition $\min_{t} x_2(t) > 0$ is equivalent to $x_2(t^*) > 0$. Similarly to Case 3, this is equivalent to
\begin{align*}
    (2a+b)(u+v)-2b > b^{1-2a/b} a^{2a/b} (u-v)^{1+2a/b}
\end{align*}

Combining all cases, we get that under condition PD (Case 1, 2, 3.1, 4.1), that is, either $(2a+b)(u+v) < b|u-v|$ or $(2a+b)(u+v)-2b > b^{1-2a/b} a^{2a/b} |u-v|^{1+2a/b}$, we have $\max_{t} x(t) < 0$ and $\min_{t} x_2(t) > 0$. Under condition CO (Case 3.2, 4.2), that is, $b|u-v| < (2a+b)(u+v)-2b < b^{1-2a/b} a^{2a/b} |u-v|^{1+2a/b}$, we have $\max_{t} x(t) > 0$ or $\min_{t} x_2(t) < 0$. The other case, $x_1(0) > 0$ and $x_2(0) < 0$, leads to the same conditions, and we are done.

Lastly, we note that condition CO can hold only when $b^{1-2a/b} a^{2a/b} |u-v|^{1+2a/b} > b|u-v|$ or $|u-v| > b/a$, which is only possible if $b < a$. 

\end{proof}

\subsection{Proof of Proposition~\ref{prop:polar-monotonic}}

\begin{proof}[Proof of Proposition~\ref{prop:polar-monotonic}]
Let $x_1(0)=-u$ and $x_2(0)=v$ for $u,v > 0$ (the other case is similar). Conditional on reaching persistent disagreement, the trajectory stays in the $(-,+)$ quadrant and the expressions (\ref{eqn:x1-u-v}) and (\ref{eqn:x2-u-v}) for $x_1(t)$ and $x_2(t)$ from the proof of Theorem \ref{thm:equal-a-init-char} holds for all time. 
Let $y(t) = |x_2(t)-x_1(t)| = x_2(t)-x_1(t)$ be the polarization. We get
\begin{align*}
    y(t) = \frac{2b}{2a+b} + e^{-(2a+b)t} \left( y_0 - \frac{2b}{2a+b} \right) > 0
\end{align*}
where $y_0 = u+v$ is the initial polarization. Since $e^{-(2a+b)t}$ is decreasing in $t$, this expression shows that if $y_0 < 2b/(2a+b)$ then polarization $y(t)$ is increasing in $t$, while if $y_0 > 2b/(2a+b)$ then polarization is decreasing in $t$.
\end{proof}

\subsection{Proofs of Theorem \ref{Thm:Main} and Corollary \ref{Cor:polarization} }

We now prove Theorem \ref{Thm:Main}.
We denote the set of left agents by $\mathcal{L}_{n}$ and the set of right agents by $\mathcal{R}_{n}$. Let $L_{n,i}$ be the number of left connections that agent $i$  has and let $R_{n,i}$ be the number of right connections that agent $i$ has. 
 
 
 We need the following two lemmas to prove the theorem.
\begin{lemma} \label{Lemma:Concent}
 Consider the sets 
$$ S_{\delta,n} =  [(1-\delta) p(n) n , (1+\delta) p(n) n]\text{ and }   D_{\delta,n} =  [(1-\delta)q(n) n , (1+\delta) q(n) n ]. $$
Then 
$$\lim _{n \rightarrow \infty} \mathbb{P}(R_{n,i} \notin D_{\delta,n} , \ \forall i \in \mathcal{L}_{n} )= \lim _{n \rightarrow \infty} \mathbb{P}(L_{n,i} \notin S_{\delta,n}, \ \forall i \in \mathcal{L}_{n} ) = 0  $$
and 
$$\lim _{n \rightarrow \infty} \mathbb{P}(R_{n,i} \notin S_{\delta,n}, \ \forall i \in \mathcal{R}_{n} )= \lim _{n \rightarrow \infty} \mathbb{P}(L_{n,i} \notin D_{\delta,n} , \ \forall i \in \mathcal{R}_{n} ) = 0  $$
for all $\delta \in (0,1)$. 
\end{lemma}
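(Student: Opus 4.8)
The plan is to reduce each of the four statements to a multiplicative Chernoff bound for a sum of independent Bernoulli random variables, followed by a union bound over the at most $2n$ agents. (I read the event ``$R_{n,i}\notin D_{\delta,n}$ for all $i\in\mathcal{L}_n$'' in the useful way, namely as ``$\exists\, i\in\mathcal{L}_n$ with $R_{n,i}\notin D_{\delta,n}$''; bounding the probability of the latter also bounds the one written.) Fix $i\in\mathcal{L}_n$. By the definition of the stochastic block model with two blocks, $R_{n,i}=\sum_{j\in\mathcal{R}_n}A_{n,ij}$ is a sum of $n$ independent $\mathrm{Bernoulli}(q(n))$ random variables, so $\mathbb{E}[R_{n,i}]=n\,q(n)$, while $L_{n,i}=\sum_{j\in\mathcal{L}_n,\, j\neq i}A_{n,ij}$ is a sum of $n-1$ independent $\mathrm{Bernoulli}(p(n))$ random variables, so $\mathbb{E}[L_{n,i}]=(n-1)p(n)$; for $i\in\mathcal{R}_n$ the roles of $p$ and $q$ are exchanged.

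First I would treat $R_{n,i}$ for $i\in\mathcal{L}_n$. The multiplicative Chernoff bound gives, for every $\delta\in(0,1)$,
\[
\mathbb{P}\left( \left| R_{n,i}-n\,q(n) \right| > \delta\, n\,q(n) \right) \le 2\exp\left(-\frac{\delta^2\, n\, q(n)}{3}\right),
\]
and a union bound over the $n$ agents of $\mathcal{L}_n$ yields
\[
\mathbb{P}\left( \exists\, i\in\mathcal{L}_n:\ R_{n,i}\notin D_{\delta,n} \right) \le 2n\exp\left(-\frac{\delta^2\,n\,q(n)}{3}\right).
\]
Now I would invoke the density hypothesis: $q(n)=\omega(\ln n/n)$ means $n\,q(n)/\ln n\to\infty$, so for all large $n$ we have $n\,q(n)\ge (6/\delta^2)\ln n$, whence the right-hand side is at most $2n\cdot n^{-2}=2/n\to 0$. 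The identical computation with $q$ replaced by $p$ (and the sum ranging over $n-1$ indices) controls $L_{n,i}$ for $i\in\mathcal{L}_n$, once one absorbs the gap between the true mean $(n-1)p(n)$ and the center $n\,p(n)$ of $S_{\delta,n}$: choose $\delta'<\delta$ with $(1-\delta')(n-1)\ge(1-\delta)n$ and $(1+\delta')(n-1)\le(1+\delta)n$ for all large $n$ (possible since $(n-1)/n\to 1$), and observe that a relative deviation of at most $\delta'$ of $L_{n,i}$ from its mean keeps $L_{n,i}$ inside $S_{\delta,n}$; the Chernoff bound with $\delta'$ in place of $\delta$ then applies verbatim, again using $p(n)=\omega(\ln n/n)$. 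The two statements for $i\in\mathcal{R}_n$ follow by symmetry, exchanging $p\leftrightarrow q$ and $\mathcal{L}_n\leftrightarrow\mathcal{R}_n$.

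The argument is routine, so the only delicate point --- and the one place where the proof would fail without the stated hypotheses --- is the competition between the union-bound factor $n$ and the Chernoff decay $\exp(-\Theta(\delta^2 n\,q(n)))$: for the probabilities to vanish one needs $n\,q(n)$ and $n\,p(n)$ to grow faster than $\ln n$, which is precisely what $p(n),q(n)=\omega(\ln n/n)$ provides (and is the reason a connectivity-type density is assumed in Theorem~\ref{Thm:Main}). Finally, since there are only four such events, one further union bound shows that the intersection of the four complementary events --- the natural building block of the set $C_{\delta,n}$ used in the proof of Theorem~\ref{Thm:Main} --- has probability tending to $1$.
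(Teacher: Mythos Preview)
Your proposal is correct and follows essentially the same route as the paper: a multiplicative Chernoff bound for each binomial count, a union bound over the $n$ agents in a block, and the density hypothesis $p(n),q(n)=\omega(\ln n/n)$ to kill the $n$ prefactor. The only (minor) difference is that the paper invokes a specific Chernoff-type inequality from \cite{blum2020foundations} with slightly different constants and treats the within-block count as having $n$ rather than $n-1$ trials, whereas you handle the $(n-1)$ vs.\ $n$ mismatch explicitly via a nested $\delta'<\delta$; either way the asymptotics are unaffected.
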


\begin{proof}[Proof of Lemma \ref{Lemma:Concent}]
Let $\delta \in (0,1)$, $n \geq 1$ and consider a left agent $i \in \mathcal{L}_{n}$. Note that the number of left connections $L_{n,i}$ that agent $i$ has is a binomial random variable with a probability of success $p (n)$ and $n$ trials. Applying Theorem 12.6 in \cite{blum2020foundations} we have 
\begin{equation}
    \mathbb{P}(L_{n,i} \notin [(1-\delta) p(n) n , (1+\delta) p(n) n] )\leq 3 \exp (-\delta ^{2} p (n) n /8 ).
\end{equation}
Hence, using the union bound, we have
$$\mathbb{P}(L_{n,i} \notin [(1-\delta)p(n) n , (1+\delta) p(n) n ] , \ \forall i \in \mathcal{L}_{n} ) \leq 3n \exp (-\delta ^{2} p(n) n  /8 ) .$$ 
The fact that $p(n)$ is $\omega(\ln (n) / n)$ implies that $3n \exp (-\delta ^{2} p(n) n  /8 )$ converges to $0$ as $n \rightarrow \infty$.  To see this note that 
$$n \exp (-\delta ^{2} p(n) n  /8 ) = \exp ( \ln(n) (1 - \delta ^{2} p(n) n  /8 \ln(n) )) = n^{1 - \delta ^{2}p(n)n/8\ln(n) } $$ 
converges to $0$ when $n$ and $p(n) n / \ln(n)$ converge to $\infty$. 

Similarly, we can deduce that 
$$\mathbb{P}(R_{n,i} \notin [(1-\delta)q(n) n , (1+\delta) q(n) n ] , \ \forall i \in \mathcal{L}_{n} ) \leq 3n \exp (-\delta ^{2} q(n) n  /8 ) .$$ 
converges to $0$ as $n \rightarrow \infty$. The proof of the lemma for the right agents follows from the same arguments as the above arguments for the left agents. 
\end{proof}

Define the event
 \begin{equation} \label{Eq: set} C_{\delta,n} = \{A_{n} \in \boldsymbol{A}_{n}: R_{n,i} \in D_{\delta,n}, L_{n,i} \in S_{\delta,n}, \ \forall i \in \mathcal{L}_{n}, \text{ and } R_{n,j} \in S_{\delta,n}, L_{n,j} \in D_{\delta,n}, \ \forall j \in \mathcal{R}_{n}\} .
 \end{equation}
 That is, $C_{\delta,n}$ consists of all adjacency matrices such that every agent has between $(1-\delta) p(n)n$ and $(1+\delta) p(n)n$  connections of the same type, and between $(1-\delta) q(n)n$ and $(1+\delta) q(n)n$ connections of a different type. 
 The proof of Lemma \ref{Lemma:Concent} and the union bound imply that for every $\delta \in (0,1)$ we have $\lim_{n \rightarrow \infty} \mathbb{P} (C_{\delta,n} )= 1$.

For $0<\delta<1$ and an integer $n$, consider the following differential equations:
 \begin{equation} \label{eq:bound}
     \begin{aligned}
       \dot {\bar{ x}}_L (t) &=  
       a(\bar{x}_R (t) - \bar{ x}_L(t)) \frac{(1+\delta)q(n)}{(1-\delta)(p(n)+q(n))} + 
       b(sgn_\epsilon(\bar{ x}_L(t)) - \bar{ x}_L(t))  \\
       \underline{\dot x}_L(t) &=  
       a(\underline{x}_R(t) - \underline{x}_L(t)) \frac{(1-\delta)q(n)}{(1+\delta)(p(n)+q(n))} + b(sgn_\epsilon(\underline{x}_L(t)) - \underline{x}_L(t))  \\
\dot{\bar{x}}_R(t) &= 
       a(\bar{x}_L(t) - \bar{x}_R(t)) \frac{(1-\delta)q(n)}{(1+\delta)(p(n)+q(n))} + 
       b(sgn_\epsilon(\bar{x}_R(t)) - \bar{x}_R(t)) \\
      \dot{\underline{x}}_R(t) &= 
       a(\underline{x}_L(t) - \underline{x}_R(t)) \frac{(1+\delta)q(n)}{(1-\delta)(p(n)+q(n))} + b(sgn_\epsilon(\underline{x}_R(t)) - \underline{x}_R(t))  
      \end{aligned}
 \end{equation}
  with the initial conditions $\bar{x}_{i} (0) = \max X_{i}$ and $\underline{x}_{i} (0) = \min X_{i}$ for $i=L,R$, i.e., $\bar{x}_R(0)$ ($\bar{x}_{L}(0)$) is the highest right (left) possible initial opinion and $\underline{x}_{R} (0)$ ($\underline{x}_{L}(0)$) is the lowest right (left) possible initial opinion.

Note that the existence of a solution $\boldsymbol{x}_{n,A_{n},\boldsymbol{x}_{0}}(t)$ for any realized adjacency matrix $A_{n}$ and initial opinion vector $\boldsymbol{x}_{0}$ follows from standard differential equation theory (e.g., \cite{cortes08-discontinuous-dynamical-systems-tutorial}). 


Let  $$a_{1} = \frac{a(1-\delta)q(n)} {(1+\delta)(p(n)+q(n))} \text{ and } a_{2}=\frac{a(1+\delta)q(n)} {(1-\delta)(p(n)+q(n))}.$$ 

The following lemma shows that the solutions of the differential equations given in (\ref{eq:bound}) bound the  agents' opinions $\boldsymbol{x}_{n,A_{n}, \boldsymbol{x}_{0}}(t)$ for adjacency matrices $A_{n}$ that belong to $C_{\delta,n}$ and initial opinions that belong to $(X_{L} \times X_{R})^{n}$.

\begin{lemma} \label{Lemma: bounds} Suppose that $  {\bar{ x}}_L(t)  < {\underline{ x}}_R(t) $ for $t \geq 0$. 

(i) We have ${\underline{ x}}_R(t) \leq  {\bar{ x}}_R(t)$ and ${\underline{ x}}_L(t) \leq {\bar{ x}}_L(t)$ for all $t \geq 0  $. 

(ii)  For all $A_{n} \in C_{\delta,n}$, $\boldsymbol{x}_{0} \in (X_{L} \times X_{R})^{n}$ and all $t \geq 0 $ we have
 \begin{equation} \label{Eq: Lemma2}
 {\underline{ x}}_L(t) \leq  x_{i,n,A_{n}, \boldsymbol{x}_{0}}(t) \leq   {\bar{ x}}_L(t), \ \forall i \in \mathcal{L}_{n}, \text{ and }  {\underline{ x}}_R(t) \leq  x_{j,n,A_{n} , \boldsymbol{x}_{0}}(t) \leq   {\bar{ x}}_R(t),  
\ \forall j \in \mathcal{R}_{n}.
\end{equation}
\end{lemma}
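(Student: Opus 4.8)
The plan is to prove both parts of Lemma~\ref{Lemma: bounds} using comparison (monotonicity) arguments for ordinary differential equations. The key structural observation is that, given the hypothesis $\bar{x}_R(t) < \underline{x}_L(t)$ for all $t \ge 0$, the signs of the various gaps are pinned down: every left quantity is ``above'' every right quantity, so $\mathrm{sgn}_\epsilon(\bar{x}_L) = \mathrm{sgn}_\epsilon(\underline{x}_L)$ and the pair trajectories in \eqref{eq:bound} stay in a region where the right-hand sides are monotone in the appropriate arguments. This lets us invoke a standard ODE comparison principle: if $\dot{y} = f(t,y)$, $\dot{z} = g(t,z)$ with $y(0) \le z(0)$ and $f(t,w) \le g(t,w)$ for all relevant $w$ (plus a one-sided Lipschitz/quasimonotonicity condition so the comparison is valid for the coupled system), then $y(t) \le z(t)$ for all $t$.

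For part (i), I would set up the $2\times 2$ coupled comparison between $(\underline{x}_L, \underline{x}_R)$ and $(\bar{x}_L, \bar{x}_R)$. The idea is that $\underline{x}_L$ feels a larger ``pull-toward-the-other-side'' coefficient $\frac{(1+\delta)q(n)}{(1-\delta)(p(n)+q(n))}$ than $\bar{x}_L$ does (coefficient $\frac{(1-\delta)q(n)}{(1+\delta)(p(n)+q(n))}$), and symmetrically for the $R$ variables. Since in the relevant configuration the left variables sit above the right variables, a larger pulling coefficient drags the left variable \emph{down} and pushes the right variable \emph{up}; combined with identical initial conditions and identical platform terms, this should give $\underline{x}_L \le \bar{x}_L$ and $\underline{x}_R \le \bar{x}_R$. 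Concretely, I would consider the differences $\bar{x}_L - \underline{x}_L$ and $\bar{x}_R - \underline{x}_R$, write down the linear (given the fixed signs) differential inequalities they satisfy, and check that the resulting linear system is cooperative/quasimonotone so that nonnegativity is preserved from $t=0$ onward. One must be slightly careful because the platform term $-b\,x$ and the $\mathrm{sgn}_\epsilon$ interpolation contribute to the dynamics of these differences; on the interpolation band $\mathrm{sgn}_\epsilon$ is linear and contributes a term $+\tfrac{b}{\epsilon}(\bar{x}_L - \underline{x}_L)$, which only helps if the sign of the difference is already controlled, so the argument is cleanest framed via Gronwall-type/invariant-region reasoning rather than term-by-term sign chasing.

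For part (ii), I would again use comparison, but now between the true per-agent dynamics $x_{i,n,A_n}(t)$ for $A_n \in C_{\delta,n}$ and the bounding trajectories. The point is that for $A_n \in C_{\delta,n}$, each left agent $i$ has between $(1-\delta)q(n)n$ and $(1+\delta)q(n)n$ right-neighbors and total degree between $(1-\delta)(p(n)+q(n))n$ and $(1+\delta)(p(n)+q(n))n$, so the coefficient $\frac{a}{|N(i)|}$ times the number of cross-neighbors lies between $a\frac{(1-\delta)q(n)}{(1+\delta)(p(n)+q(n))}$ and $a\frac{(1+\delta)q(n)}{(1-\delta)(p(n)+q(n))}$; moreover the same-side neighbors contribute a convex-combination pull toward other left agents, all of which, by an inductive/simultaneous argument, lie between $\underline{x}_L$ and $\bar{x}_L$. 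I would make this rigorous by considering $M_L(t) = \max_{i \in \mathcal{L}_n} x_{i,n,A_n}(t)$ and $m_L(t) = \min_{i\in\mathcal{L}_n} x_{i,n,A_n}(t)$ (and similarly for $R$), showing these satisfy one-sided differential inequalities dominated by the $\bar{x}$ and $\underline{x}$ equations respectively — using that the right-hand side of each agent's ODE, evaluated at the current max, is bounded above by replacing every neighbor's opinion by the global left-max or right-max as appropriate and every coefficient by its worst-case value in $C_{\delta,n}$ — and then invoking the scalar comparison principle with the matching initial condition $x_L$.

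The main obstacle I anticipate is handling the coupling and the sign-dependent, only-piecewise-linear nature of $\mathrm{sgn}_\epsilon$ carefully enough that the comparison principle genuinely applies: the naive ``compare coefficients'' intuition is correct only because the hypothesis $\bar{x}_R(t) < \underline{x}_L(t)$ freezes the relevant orderings, and one must verify that the ordered region $\{\underline{x}_L \le \bar{x}_L,\ \underline{x}_R \le \bar{x}_R,\ \bar{x}_R < \underline{x}_L\}$ is forward-invariant (i.e. the vector field points inward on its boundary) before concluding anything. Establishing this invariance — essentially a quasimonotonicity check on a $2$-, $4$-, or $(2n)$-dimensional cooperative-type system, with the platform and interpolation terms included — is the technical heart; once it is in place, parts (i) and (ii) follow from the standard comparison theorem without further difficulty, and the Lipschitz continuity of the right-hand sides (which the $\mathrm{sgn}_\epsilon$ interpolation was introduced to guarantee) ensures existence and uniqueness so that the comparison is meaningful.
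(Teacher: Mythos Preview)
Your proposal is correct and follows essentially the same route as the paper. For (i) the paper argues by contradiction at the first crossing time (which is your invariant-region/boundary check written out explicitly), and for (ii) it bounds the derivative of the agent with the current extremal opinion using the worst-case coefficient ratios available in $C_{\delta,n}$ and then invokes a scalar ODE comparison theorem (Theorem~4.1 in Hartman), exactly as your $M_L(t)/m_L(t)$ plan describes.
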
 

\begin{proof}
(i) Assume in contradiction that there exists a $t \geq 0$ such that   ${\underline{ x}}_R(t) > {\bar{ x}}_R(t)$ or ${\underline{ x}}_L(t) > {\bar{ x}}_L(t)$. Note that $t>0$.

Let $t_{1} = \inf \{t \in [0,\infty): {\underline{ x}}_R(t) > {\bar{ x}}_R(t) \text{ or }{\underline{ x}}_L(t) > {\bar{ x}}_L(t) \}$. By  the contradiction assumption and the continuity of the solutions of the ordinary differential equations given in Equation (\ref{eq:bound}),  $t_{1}$ is finite, and we have  $ {\underline{ x}}_R(t_{1}) = {\bar{ x}}_R(t_{1})$ and ${\underline{ x}}_L(t_{1}) \leq {\bar{ x}}_L(t_{1})$ or ${\underline{ x}}_L(t_{1}) = {\bar{ x}}_L(t_{1})$ and  $ {\underline{ x}}_R(t_{1}) \leq {\bar{ x}}_R(t_{1})$. 
Assume without loss of generality that $ {\underline{ x}}_R(t_{1}) = {\bar{ x}}_R(t_{1})$ and ${\underline{ x}}_L(t_{1}) \leq {\bar{ x}}_L(t_{1})$. We have
\begin{align*}
    \dot{\bar{x}}_R(t_{1}) &= 
       a(\bar{x}_L(t_{1}) - \bar{x}_R(t_{1})) \frac{(1-\delta)q(n)}{(1+\delta)(p(n)+q(n))} + 
       b(sgn_\epsilon(\bar{x}_R(t_{1})) - \bar{x}_R(t_{1})) \\
      & >
       a(\underline{x}_L(t_{1}) - \underline{x}_R(t_{1})) \frac{(1+\delta)q(n)}{(1-\delta)(p(n)+q(n))} + b(sgn_\epsilon(\underline{x}_R(t_{1})) - \underline{x}_R(t_{1}))  =  \dot{\underline{x}}_R(t_{1})
\end{align*}
so that $\underline{ x}_R(t) < {\bar{ x}}_R(t)$ for $t > t_{1}$ that is close enough to $t_{1}$. 

From continuity, if ${\underline{ x}}_L(t_{1}) < {\bar{ x}}_L(t_{1})$ then ${\underline{ x}}_L(t) < {\bar{ x}}_L(t)$ for $t > t_{1}$  that is close enough to $t_{1}$. If  ${\underline{ x}}_L(t_{1}) = {\bar{ x}}_L(t_{1})$ then 
\begin{align*}
    \dot{\bar{x}}_L(t_{1}) &= 
       a(\bar{x}_R(t_{1}) - \bar{x}_L(t_{1})) \frac{(1+\delta)q(n)}{(1-\delta)(p(n)+q(n))} + 
       b(sgn_\epsilon(\bar{x}_R(t_{1})) - \bar{x}_R(t_{1})) \\
      & >
        a(\underline{x}_R(t_{1}) - \underline{x}_L(t_{1})) \frac{(1-\delta)q(n)}{(1+\delta)(p(n)+q(n))} + 
       b(sgn_\epsilon(\bar{x}_R(t_{1})) - \underline{x}_R(t_{1})) =  \dot{\underline{x}}_L(t_{1})
\end{align*}
so that $\underline{ x}_L(t) < {\bar{ x}}_L(t)$ for $t > t_{1}$ that is close enough to $t_{1}$. 
Hence, $\underline{ x}_R(t) < {\bar{ x}}_R(t)$ and ${\underline{ x}}_L(t) < {\bar{ x}}_L(t)$ for $t > t_{1}$ that is close enough to $t_{1}$, which is a contradiction to the definition of $t_{1}$.

(ii) Let $A_{n} \in C_{\delta,n}$ and $\boldsymbol{x}_{0} \in (X_{L} \times X_{R} ) ^{n}$ . The claim of the lemma  holds for $t=0$ as $ x_{i,n,A_{n},\boldsymbol{x}_{0} }(0) $ is in $X_{L}$ or in $X_{R}$. Suppose that the claim holds for some $t>0$. Suppose that agent $i$ has the highest expected opinion at time $t$ (hence, agent $i$ is a right agent), i.e., $x_{i,n,A_{n}, \boldsymbol{x}_{0}
}(t) \geq x_{j,n,A_{n}, \boldsymbol{x}_{0}
}(t) $ for all $j \in \mathcal{L}_{n} \cup \mathcal{R}_{n}$. We have
\begin{align*}
    \dot{x}_{i,n,A_{n}, \boldsymbol{x}_{0}}(t) &= \frac{a}{|N(i)|}\sum_{j\in N(i)} (x_{j,n,A_{n}, \boldsymbol{x}_{0}}(t)-x_{i,n,A_{n}, \boldsymbol{x}_{0}}(t)) +b \left (  sgn_{\epsilon}(x_{i,n,A_{n},\boldsymbol{x}_{0}}(t))-x_{i,n,A_{n}, \boldsymbol{x}_{0}}(t) \right) \\
    & \leq  \frac{a}{|N(i)|}\sum_{j\in N(i) \cap \mathcal{L}_{n}} (x_{j,n,A_{n}, \boldsymbol{x}_{0}}(t)-x_{i,n,A_{n}, \boldsymbol{x}_{0}}(t)) +b \left (  sgn_{\epsilon}(x_{i,n,A_{n},\boldsymbol{x}_{0}}(t))-x_{i,n,A_{n}, \boldsymbol{x}_{0}}(t) \right) \\
    & \leq  \frac{a|N(i)\cap \mathcal{L}_{n} | }{|N(i)|}(\bar{ x}_L(t)-x_{i,n,A_{n}, \boldsymbol{x}_{0}}(t)) +b \left ( sgn_{\epsilon}(x_{i,n,A_{n},\boldsymbol{x}_{0}}(t))-x_{i,n,A_{n}, \boldsymbol{x}_{0}}(t) \right)\\
    & \leq a \frac{(1-\delta)q(n)}{(1+\delta)(p(n)+q(n))}({\bar{ x}}_L(t)-x_{i,n,A_{n}, \boldsymbol{x}_{0}}(t)) + 
     b \left ( sgn_{\epsilon}(x_{i,n,A_{n},\boldsymbol{x}_{0}}(t))-x_{i,n,A_{n}, \boldsymbol{x}_{0}}(t) \right).
    \end{align*} 
    The first inequality follows because $x_{i',n,A_{n}, \boldsymbol{x}_{0}}(t)-x_{i,n,A_{n}, \boldsymbol{x}_{0}}(t) \leq 0 $ for every right agent $i' \in \mathcal{R}_{n}$. The second inequality follows because $x_{j,n,A_{n}, \boldsymbol{x}_{0}}(t) \leq \bar{x}_{L}(t)$ for every left agent $j \in \mathcal{L}_{n}$. The third inequality follows because
    $A_{n} \in C_{\delta,n}$ implies that the cardinality of $N(i) \cap \mathcal{L}_{n}$ is at least $(1-\delta)q(n)n$ and the cardinality of $N(i)$ is at most $(1+\delta)(p(n)+q(n))n$ and because ${\bar{ x}}_L(t) < x_{i,n,A_{n}}(t) $. 
    
    We conclude that $\dot{x}_{i,n,A_{n},\boldsymbol{x}_{0}}(t) \leq \bar{U}(t,x_{i,n,A_{n}, \boldsymbol{x}_{0}}(t))$ where 
    $$  \bar{U}(t, y)= a(\bar{x}_L(t) - y) \frac{(1-\delta)q(n)}{(1+\delta)(p(n)+q(n))} +
       b(sgn_\epsilon(y) - y) .$$
    
     Theorem 4.1 in \cite{Hartman2002} implies that $x_{i,n,A_{n}, \boldsymbol{x}_{0}}(t) \leq \bar{ x}_R(t)$ on $[t, t+\delta']$ for some $\delta' >0$ that does not depend on $t$. An analogous argument proves that the other inequalities in the statement of the lemma (see Equation (\ref{Eq: Lemma2}))  hold on an interval $[t, t+\delta'']$ for some $\delta'' >0$. 
     We conclude that the inequalities in the statement of Lemma (\ref{Eq: Lemma2}) hold for all $t  \geq 0$, and the proof is complete. 
\end{proof}

We now show that the above two lemmas imply Theorem \ref{Thm:Main}. 

Suppose first that the two-agent system $(a \beta ,b, (x_{L},x_{R}) )$ satisfies the PD condition, has a solution $(x_{1}(t),x_{2}(t))$, and an equilibrium $\boldsymbol{e}=(e_{L},e_{R})$ and that $X_{L} \times X_{R} \subseteq X_{PD}(a\beta , b)$. From the proof of Theorem 1 we have $\boldsymbol{e}=(e_{L},e_{R}) = (-b/(b+2a\beta)), b/(b+2a\beta) )$
 
 By the Theorem's assumption, we can find a small $\delta>0$ and a large $N$ such that for all $n \geq N$  and all $\epsilon'' >0$ we have $a_{1}:= a(1-\delta) q(n)/(1+\delta)(p(n)+q(n) \geq a \beta - \epsilon''$ and $a_{2} := a(1+\delta) q(n)/(1-\delta)(p(n)+q(n) \leq a \beta +\epsilon''$. Note that this implies that $X_{L} \times X_{R} \subseteq X_{PD}(a_{i} , b)$ for $i=1,2$ for small enough $\delta$ and large $n$. 
 
 Consider the following differential equations:
 \begin{equation} \label{eq:bd:positive}
     \begin{aligned}
       \dot {\bar{ x}}_L' (t) &=  
       a_{2}(\bar{x}_R '(t) - \bar{ x}_L'(t))  + 
       b(-1 - \bar{ x}_L'(t))  \\
\dot{\bar{x}}_R'(t) &= 
       a_{1}(\bar{x}_L'(t) - \bar{x}_R'(t))  + 
       b(1 - \bar{x}_R'(t))  
      \end{aligned}
 \end{equation}
 with the initial conditions $\bar{x}_L'(0)=\bar{x}_{L}(0)$
 and
 $\bar{x}_R'(0)=\bar{x}_{R}(0)$ so $(\bar{x}_{L}(0) , \bar{x}_{R}(0)) \in X_{PD} (a_{i},b)$ for $i=1,2$. From standard differential equation theory, the equilibrium of the differential equations given in Equation (\ref{eq:bd:positive})  above does not depend on the initial opinions and is given by $(e_{L}',e_{R}')$ and is continuous in the parameters $(a_{1},a_{2},b)$. In addition, the solution of the differential equation converges to the equilibrium. Thus, for all $\delta' > 0$, we can find a $T' >0$ such that $ |\bar{ x}_L' (t) - e_{L}' | \leq \delta' / 2$ and $ |\bar{ x}_R' (t) - e_{R}' | \leq \delta' /2$ for all $t \geq T'$, and a small $\epsilon''$ (i.e., a large $n$ and small $\delta$) such that $|e_{L}-e_{L}'| \leq \delta'/2$, $|e_{R}-e_{R}'| \leq \delta'/2$. Hence, $ |\bar{ x}_L' (t) - e_{L} | \leq \delta' $ and $ |\bar{ x}_R' (t) - e_{R} | \leq \delta' $ for all $t \geq T'$ and small enough $\epsilon''$. 
 
 
  The PD condition and the proof of Theorem \ref{thm:equal-a-init-char} imply that the trajectories $\bar{ x}_L (t)$ and $\bar{ x}_R (t)$  are bounded away from zero for all $t$. We conclude that  $\bar{ x}_L (t)$ and $\bar{ x}_R (t)$ satisfy the  differential equations given in Equation (\ref{eq:bd:positive}) for a small $\epsilon>0$.  Hence, $ |\bar{ x}_L (t) - e_{L} | \leq \delta' $ and $ |\bar{ x}_R (t) - e_{R} | \leq \delta' $ for all $t \geq T'$ and small enough $\epsilon''$.  A similar argument (using Equation (\ref{eq:bd:positive}) with initial opinions $\underline{x}_{L}(0)$ and $\underline{x}_{R}(0)$) shows that  $ |\underline{ x}_L (t) - e_{L} | \leq \delta' $ and $ |\underline{ x}_R (t) - e_{R} | \leq \delta' $ for all $t \geq T'$ and small enough $\epsilon''$.

We can now apply Lemma \ref{Lemma: bounds}  to conclude that for all $\epsilon' >0$, we can find an $N$, $T'$, and $\delta>0$ such that for all $t \geq T'$ and all $A_{n} \in C_{\delta,n}$, $\boldsymbol{x}_{0} \in (X_{L} \times X_{R})^{n}$ with $n \geq N$, we have $|x_{i,n,A_{n}, \boldsymbol{x}_{0}}(t) - e_{L}| \leq \epsilon'$ for every left agent $i$ and $|x_{j,n,A_{n}, \boldsymbol{x}_{0}}(t) - e_{R}| \leq \epsilon'$ for every right agent $j$.


Now suppose that the two-agent system $(a \beta ,b, (x_{L},x_{R}) )$ satisfies the CO condition and has  a solution $(x_{1}(t),x_{2}(t))$ and an equilibrium $\boldsymbol{e}=(e_{L},e_{R}) = (-1,-1)$  and that $X_{L} \times X_{R} \subseteq X_{CON}(a\beta , b)$. 

The proof of Theorem \ref{thm:equal-a-init-char} and the continuity of the solutions of the differential equations imply that  $0 > \bar{x}_{R} (t^{*}) > \bar{x}_{L}(t^{*})$ for some $t^{*}$ and small enough $\delta$ and large $n$. Using the proof of Theorem \ref{thm:equal-a-init-char} again, we have $\bar{x}_{R}(t) < 0$, $\bar{x}_{L} (t) < 0$ for $ t \geq t^{*}$ and we can assume that $\bar{x}_{R}(t) > \bar{x}_{L}(t)$ for $t \leq t^{*}$. As long as $x_{i,n,A_{n},\boldsymbol{x_{0}}}(t) \geq \bar x_{L} (t)$ we can use the same arguments as the arguments in Lemma \ref{Lemma: bounds} part (ii) to show that $x_{i,n,A_{n},\boldsymbol{x_{0}}}(t) \leq \bar{x}_{R} (t)$ for any agent $i$. We conclude that $x_{i,n,A_{n},\boldsymbol{x_{0}}}(t'') < 0$ for any agent for some $t''$ which implies a consensus equilibrium where the opinions of all agents are $-1$. 

Thus, we can find an $N'$, $T$, and $\delta>0$ such that for all $t \geq T$ and all $A_{n} \in C_{\delta,n}$ with $n \geq N'$, we have $|x_{i,n,A_{n},\boldsymbol{x}_{0}}(t) - e_{L}| \leq \epsilon'$ for every left agent $i$ and $|x_{j,n,A_{n}}(t) - e_{R}| \leq \epsilon'$ for every right agent $j$.

The proof for the case that $\boldsymbol{e}=(e_{L},e_{R}) = (1,1)$  and that $X_{L} \times X_{R} \subseteq X_{COP}(a\beta , b)$ follows from a similar argument to the argument above. This completes the proof of the theorem.


\begin{proof}[Proof of Corollary \ref{Cor:polarization}]
Part (i) follows immediately from the proof of Theorem \ref{Thm:Main}. Proposition \ref{prop:polar-monotonic} implies that $x_{1}(t)-x_{2}(t)$ is strictly increasing on $(0,\infty)$, and hence,  $x_{1}(t_{2})-x_{2}(t_{2}) = x_{1}(t_{1}) - x_{2}(t_{1}) + \delta ''$ for some $\delta '' >0$. Choosing $0<\delta ' \leq \delta '' /4$ and using part (i) imply that for every left agent $i$ and every right agent $j$ we have 
\begin{align*} 
x_{i,n,A_{n} } (t_{2}) -  x_{j,n,A_{n} }(t_{2}) & \geq x_{1}(t_{2})-x_{2}(t_{2}) - 2\delta'  \geq x_{1}(t_{1}) - x_{2}(t_{1}) + 2\delta' \geq x_{i,n,A_{n} } (t_{1}) -  x_{j,n,A_{n} }(t_{1}) 
\end{align*}
which proves part (ii). 
\end{proof}

\subsection{Proof of Theorem \ref{thm:opinions-converge} }

Before we give a proof of Theorem \ref{thm:opinions-converge}, we first 
introduce some useful concepts and 
state LaSalle's invariance theorem (adapted from \cite[p.~128]{khalil01-nonlinear-systems}).
A  set $S$ is said to be an invariant set with respect to dynamics $\dot{\mathbf{x}}=f(\mathbf{x})$ if $\mathbf{x}(0)\in S$ implies $\mathbf{x}(t)\in S$ for all $t\in \mathbb{R}$. Similarly, $S$ is said to be a positively invariant set if 
$\mathbf{x}(0)\in S$ implies $\mathbf{x}(t)\in S$ for all $t\in \mathbb{R}_+$.
We say that $\mathbf{x}(t)$ converges to a set $S$ if for every $\epsilon>0$ there is a $T>0$ such that
$\mathrm{dist}(\mathbf{x}(t),S)<\epsilon$ for all $t>T$, where 
$\mathrm{dist}(\mathbf{x},S)=\inf_{\mathbf{y}\in S} \|\mathbf{x}-\mathbf{y} \|_2$.

\begin{theorem}[LaSalle's Invariance Theorem \citep{khalil01-nonlinear-systems}] \label{prop:khalil}
Let $\Omega$ be a compact set that is positively invariant with respect to   the dynamics $\dot{\mathbf{x}}=f(\mathbf{x})$.
Let $V:\Omega \to \mathbb{R}$ be a continuously differentiable function such that $\dot{V}(\mathbf{x})\leq 0$ in $\Omega$. Let $E$ be the set of all points in $\Omega$ where $\dot{V}(\mathbf{x})=0$.
Let $E'$ be the largest invariant set in $E$.
Then,  every solution to $\dot{\mathbf{x}}=f(\mathbf{x})$ starting in $\Omega$ approaches  $E'$ as $t\rightarrow \infty$.
\end{theorem}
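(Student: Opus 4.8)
The plan is to run the classical omega-limit set argument. Fix a solution $\mathbf{x}(t)$ with $\mathbf{x}(0)\in\Omega$; positive invariance gives $\mathbf{x}(t)\in\Omega$ for all $t\ge 0$, so the trajectory is bounded and lies in a compact set. First I would observe that $t\mapsto V(\mathbf{x}(t))$ is nonincreasing, since $\frac{d}{dt}V(\mathbf{x}(t))=\dot V(\mathbf{x}(t))\le 0$, and that it is bounded below because $V$ is continuous on the compact set $\Omega$. Hence $V(\mathbf{x}(t))$ converges to a finite limit $c$ as $t\to\infty$.

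Next I would introduce the positive limit set $L^{+}=\{\mathbf{p}:\mathbf{x}(t_n)\to\mathbf{p}\text{ for some }t_n\to\infty\}$ and establish its standard properties: because the trajectory remains in the compact set $\Omega$, the set $L^{+}$ is nonempty, compact, contained in $\Omega$, invariant under the flow, and the trajectory approaches it, i.e. $\mathrm{dist}(\mathbf{x}(t),L^{+})\to 0$. I would then pin down $V$ on $L^{+}$: for any $\mathbf{p}\in L^{+}$ choose $t_n\to\infty$ with $\mathbf{x}(t_n)\to\mathbf{p}$, so continuity of $V$ yields $V(\mathbf{p})=\lim_n V(\mathbf{x}(t_n))=c$; thus $V\equiv c$ on $L^{+}$.

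The key deduction is that $L^{+}\subseteq E$. Since $L^{+}$ is invariant, the solution through any $\mathbf{p}\in L^{+}$ stays in $L^{+}$, where $V$ is identically $c$; differentiating $V$ along that solution gives $\dot V(\mathbf{p})=0$, so $\mathbf{p}\in E=\{\dot V=0\}$. Because $L^{+}$ is itself an invariant subset of $E$ and $E'$ is by definition the largest invariant set contained in $E$, maximality gives $L^{+}\subseteq E'$. Combining this with $\mathrm{dist}(\mathbf{x}(t),L^{+})\to 0$ yields $\mathrm{dist}(\mathbf{x}(t),E')\le\mathrm{dist}(\mathbf{x}(t),L^{+})\to 0$, which is exactly the assertion that every solution approaches $E'$.

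I expect the main obstacle to be establishing the properties of the omega-limit set. Nonemptiness and compactness follow from boundedness of the trajectory via Bolzano--Weierstrass, but the invariance of $L^{+}$ and the convergence $\mathrm{dist}(\mathbf{x}(t),L^{+})\to 0$ require more care. Invariance relies on continuous dependence of solutions on initial conditions, i.e. a well-defined, continuous flow, which in turn presumes existence and uniqueness of solutions to $\dot{\mathbf{x}}=f(\mathbf{x})$; I would invoke local Lipschitz regularity of $f$ as the standing hypothesis under which LaSalle's principle is stated (it holds for the application, where $\mathrm{sgn}_\epsilon$ is Lipschitz). The convergence claim I would prove by contradiction: if $\mathrm{dist}(\mathbf{x}(t),L^{+})\not\to 0$, then there exist $\eta>0$ and $t_n\to\infty$ with $\mathrm{dist}(\mathbf{x}(t_n),L^{+})\ge\eta$, yet compactness of $\Omega$ lets me extract a subsequence of $\mathbf{x}(t_n)$ converging to a point which by definition lies in $L^{+}$, contradicting the distance bound.
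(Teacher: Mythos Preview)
The paper does not prove this statement; it is quoted from Khalil's textbook as a known result and then invoked as a black box in the proof of Theorem~\ref{thm:opinions-converge}. Your proposal is the classical omega-limit set argument and is correct, and it is essentially the proof one finds in Khalil, so there is nothing to compare.
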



\begin{proof}[Proof of Theorem \ref{thm:opinions-converge}]
Since $s$ is continuous, the vector field on the right hand side is continuous, so a standard theorem from ordinary differential equations (see e.g. \cite{cortes08-discontinuous-dynamical-systems-tutorial}) guarantees that the continuous solution $\mathbf{x}(t)$ exists and is unique. Moreover, at $x_i=K$, both terms of the time derivative $\dot{x}_i(t) = \sum_{j} a_{ij} (x_j - x_i) + b_i(s(x_i) - x_i)$ are non-positive, so we always have $x_i(t) \leq K$. Similarly, $x_i(t) \geq -K$ for all $t$. We conclude that $\mathbf{x}(t) \in \Omega$ for all $t\geq 0$ 
and $\Omega$ is a positively invariant set.


Let $M=L+B$. Note that $M$ is symmetric. Consider the Lyapunov function
\begin{align*}
    V(\mathbf{x}) = \frac{1}{2} \mathbf{x}^\top M \mathbf{x} - \sum_{j=1}^{n} b_j \int_{0}^{x_j} s(t) dt.
\end{align*}
Observe that 
\begin{align*}
  \frac{\partial V}{\partial x_i} = (M \mathbf{x})_i - b_i s(x_i),
\end{align*}
and  $\nabla V(\mathbf{x}) = M \mathbf{x} - B s(\mathbf{x}) = -\dot{\mathbf{x}}$.
These observations imply that the time derivative of the Lyapunov function with respect to the trajectory is given by
\begin{align*}
    \dot{V}(\mathbf{x})= \nabla V(\mathbf{x}) \cdot \dot{\mathbf{x}} = -||M \mathbf{x} - B s(\mathbf{x})||^2 \leq 0.
\end{align*}
Note that since $s(\cdot)$ is a continuous function, it also follows that $V(\mathbf{x})$ is continuously differentiable.
Thus, Theorem \ref{prop:khalil} implies that
$\mathbf{x}(t)$ converges to a subset of $E=\{x|M \mathbf{x} = B s(\mathbf{x}) \} $ for any $\mathbf{x}_0\in \Omega$.
Since $E$ is a finite set, this implies that for any $\mathbf{x}_0\in \Omega$, $\mathbf{x}(t)$ converges to a point in $E$, and hence
the limit $\mathbf{x}_\infty \equiv \lim_{t \to \infty} \mathbf{x}(t)$ exists and belongs to $E$.
\end{proof}

\section{Additional Simulations} \label{Appendix:Simulations}

In this section we provide simulations for a non-normalized stochastic block model. In the non-normalized stochastic block model all edges have weight 1, namely, in the influence matrix, the $(i,j)$ entry equals $1$ if there is an edge between two agents $i$ and $j$, and $0$ otherwise. As in Section \ref{sec:sim} each block has $n$ agents. Each pair of agents from the same block is connected with probability $p_s$, and each pair of agents from different blocks is connected with probability $p_d$. Each agent in the $L$ block ($R$ block) has an initial opinion drawn independently from a known distribution $\mathcal{D}_L$ ($\mathcal{D}_R$). Opinions evolve according to $ \dot{x}_i(t) = \sum_{j \neq i} a_{ij} (x_j(t) - x_i(t)) + b (\sgn_{\epsilon}(x_i(t)) - x_i(t))$ for sufficiently small $\epsilon$ as in (\ref{eqn:main-dynamics-b}).

We consider the same set of simulations as we considered in Section \ref{sec:sim}. These simulations suggest that our theoretical findings and predictions apply also for the non-normalized stochastic block model. 

\begin{figure}
\begin{subfigure}{.47\textwidth}
  \centering
  \includegraphics[width=1.0\linewidth]{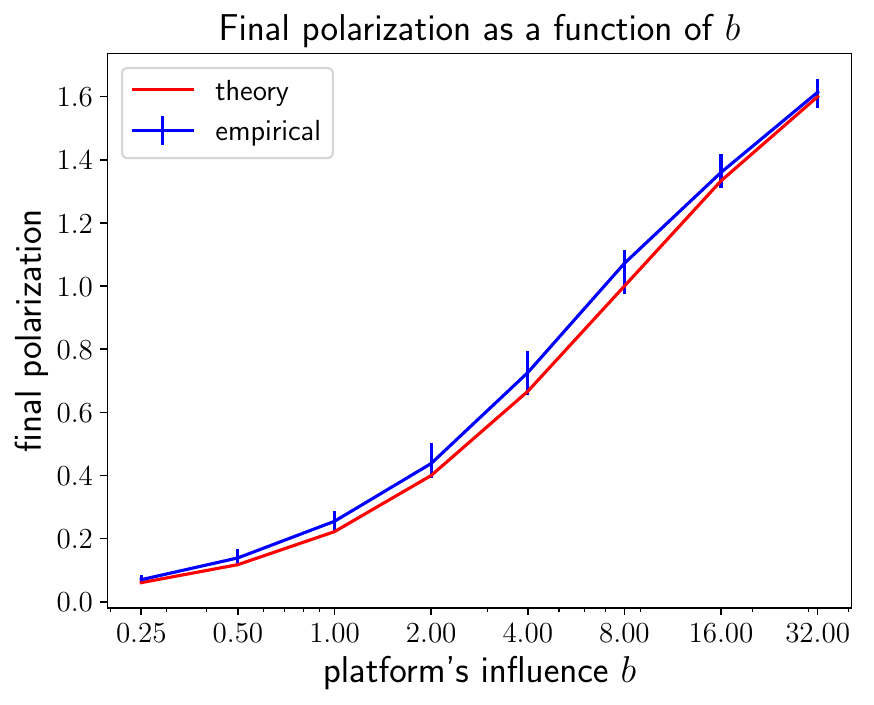}
  \subcaption[1.25\linewidth]{{The distributions of final polarization are tightly concentrated around theoretical values.}}
\end{subfigure}%
\hfill
\begin{subfigure}{.47\textwidth}
  \centering
  \includegraphics[width=1.02\linewidth]{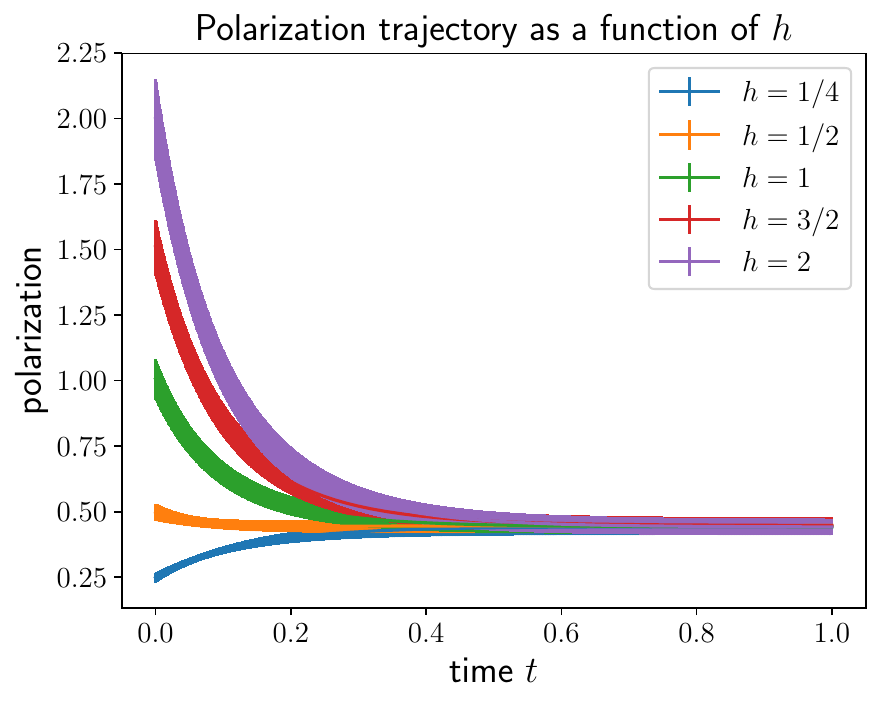}
  \subcaption{Polarization trajectories are monotonic and converge to the same limit polarization (for $b=2$).}
\end{subfigure}
\caption{The (random) empirical limit polarization is close to the theoretical prediction $2b/(b+2np_d)$ with high probability, and polarization trajectory converges monotonically to it for $(n, p_s, p_d) = (32, 1/4, 1/8)$, $\mathcal{D}_L \equiv \text{Unif}[-2,0]$, and $\mathcal{D}_R \equiv \text{Unif}[0,2]$.}
\end{figure}

\begin{figure}
  \centering
  \includegraphics[width=0.4\linewidth]{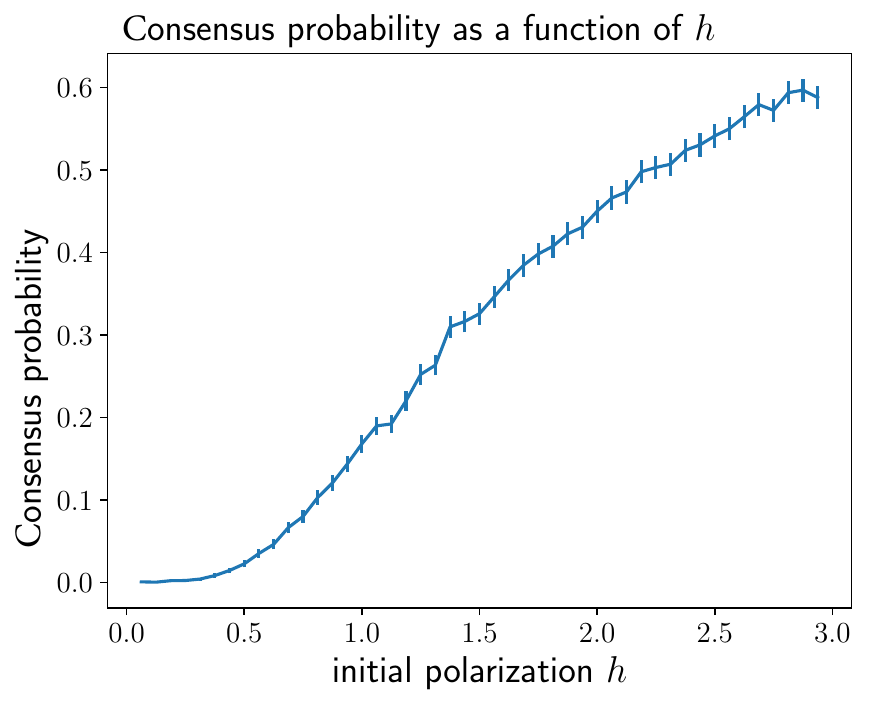}
  \caption{Consensus probability is increasing in mean initial polarization $h$.}
\caption{The effect of initial polarization on consensus for $(n,p_s,p_d,b) = (32, 1/4, 1/8,3/4)$ and $\mathcal{D}_L \equiv \text{Unif}[-h,0], \mathcal{D}_R \equiv \text{Unif}[0,h]$.}
\end{figure}
\begin{figure}
\begin{subfigure}{.48\textwidth}
  \centering
  \includegraphics[width=1.0\linewidth]{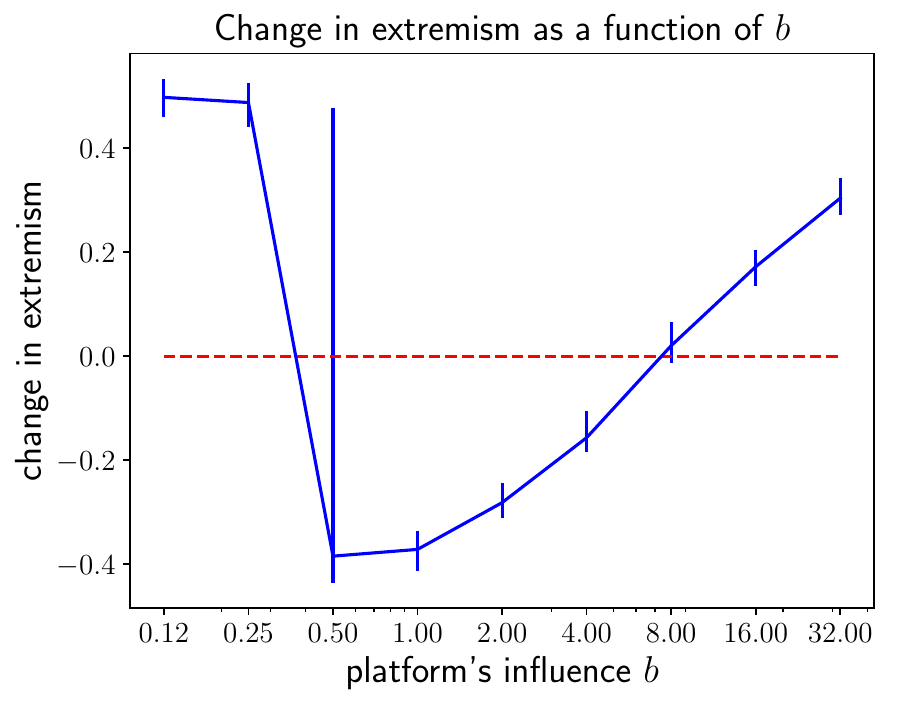}
  \caption{Interquartile range of change in extremism as a function of platform's strength $b$}
\end{subfigure}%
\hspace{1em}
\begin{subfigure}{.48\textwidth}
  \centering
  \includegraphics[width=1.06\linewidth]{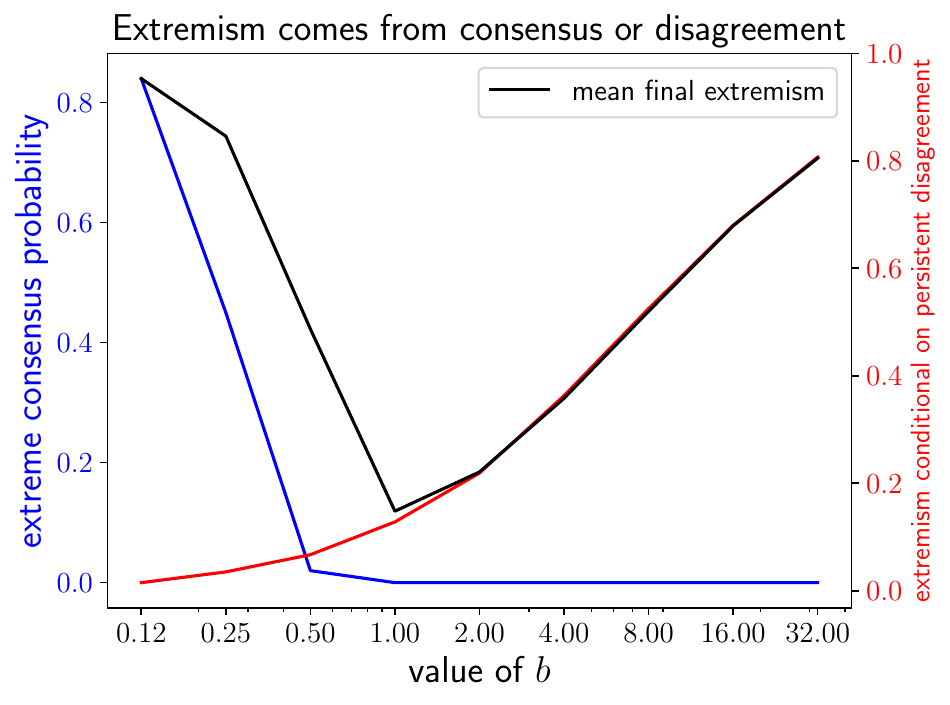}
  \caption{Extremism from consensus (low $b$) or disagreement (high $b$)}
\end{subfigure}
\caption{The effect of platform's strength $b$ on extremism for $(n, p_s, p_d) = (32, 1/4, 1/8)$, $\mathcal{D}_L \equiv \text{Unif}[-2,0]$ and $\mathcal{D}_R \equiv \text{Unif}[0,2]$.}
\end{figure}


\newpage

\section{Stochastic Block Model with Three Blocks} \label{Sec:Appendix3BSM}
We now present the stochastic block model with three blocks, which extends the two-block version. With a slight abuse of notation, we adopt similar notation to that used in the two-block case, and we provide the full model details below for completeness. 
In this model there are three sets of agents: A set of left agents, a set of moderate agents, and a set of right agents. Each set consists of $n$ agents. The probability of a connection between the same type of agents is given by $p(n)$ and the probability of a connection between different types of agents is given by $q(n)$.\footnote{Our results can be extended to the case where a connection between left and left, moderate and moderate, and right and right agents have probabilities of $p_{L}(n)$, $p_{M}(n)$, and $p_{R}(n)$ respectively where $p_{L}(n) \neq p_{M}(n) \neq p_{R}(n)$. To ease exposition, we assume that these probabilities are the same.} We denote by $A_{n}$  a typical adjacency matrix,  
 where an adjacency matrix satisfies $A_{n,ij}=1$ if there is a connection between agent $i$ and agent $j$ and zero otherwise, and by $\boldsymbol{A}_{n}$ the set of all possible adjacency matrices when there are $n$ left agents, $n$ moderate agents, and $n$ right agents. We denote by $\mathbb{P}_{n}(A_{n})$ the probability that a matrix $A_{n}$ is realized. With slight abuse of notation we denote $\mathbb{P}_{n}$ by $\mathbb{P}$.  
 The initial opinion of each left agent is a random variable on $X_{L} \subset \mathbb{R}_{-} = (-\infty,c)$, the initial opinion of each moderate agent is a random variable on $X_{M} \subset \mathbb{R}_{-} = (-c,c)$,  and the initial opinion of each right agent is a random variable on $X_{R} \subset \mathbb{R}_{+} = (c,\infty)$. Let $\mu_{0}$ be the probability measure that describes the initial opinions' distribution where we again omit the index $n$ for notational convenience. For simplicity we assume that the support of $\mu_{0}$ is given by a finite set $\boldsymbol{X}_{0}^{n}$  in $\mathbb{R}^{3n}$ where $\boldsymbol{X}_{0} = X_{L} \times X_{M} \times X_{R}$. We denote by  $(\boldsymbol{A}_{n} \times \boldsymbol{X}_{0}^{n},2^{ \boldsymbol{A}_{n}} \times 2^{\boldsymbol{X}_{0}^{n}},\mathbb{P} \otimes \mu_{0})$ the probability space that is generated by the stochastic block model described above where $\otimes$ denotes the product measure between two probability measures.

 Given a realized adjacency matrix $A$ and a realized initial opinion vector $\boldsymbol{x}_{0}$, each agent's opinion is dynamically evolving and is influenced by the platform and by the opinions of her connections as we described in Section \ref{sec:three}. In order to study a mean field regime where the number of agents tends to infinity we normalize the influence matrix.   We assume that if agent $i$ has $d_{i}$ connections then each connection has an influence weight of $a/d_{i}$, i.e.,  $a_{ij} = a/d_{i}$ is the influence of agent $j$ on agent $i$ if agent $j$ is a connection of agent $i$ and $0$ otherwise. 

 Hence, the opinions of agent $i$ evolve according to the equation
    \[
    \dot{x}_{i,n}(t) = \frac{a}{|N(i)|}\sum_{j\in N(i)} (x_{j,n}(t)-x_{i,n}(t)) +b (sgn_{\epsilon,c}(x_{i,n}(t))-x_{i,n}(t))
    \]
where $N(i)$ is the set of agents that are connected to agent $i$ and $|N(i)|$ is the cardinality of $N(i)$, i.e., the total number of connections that agent $i$ has.\footnote{If agent $i$ has no connections, i.e., $N(i)$ is an empty set then the other agents do not influence agent $i$'s opinions.} Note that the opinion of agent $i$ in some period $t$ is a random variable that depends on the realized adjacency matrix and realized initial opinions.

As in the two blocks case, we first analyze a deterministic three agent system that is used to approximate the dynamics of the stochastic block model with three blocks.

\subsection{Three Agent System}

We consider the three agent system $(a,b,x_{L},x_{M},x_{R})$ defined by
\[
\begin{cases} \label{Eq:3ODE}
\dot x_{1}(t)=a\bigl(x_{2}(t)-x_{1}(t)\bigr)+a\bigl(x_{3}(t)-x_{1}(t)\bigr)+b\bigl(-1-x_{1}(t)\bigr),\\[6pt]
\dot x_{2}(t)=a\bigl(x_{1}(t)-x_{2}(t)\bigr)+a\bigl(x_{3}(t)-x_{2}(t)\bigr)-b\,x_{2}(t),\\[6pt]
\dot x_{3}(t)=a\bigl(x_{1}(t)-x_{3}(t)\bigr)+a\bigl(x_{2}(t)-x_{3}(t)\bigr)+b\bigl(1-x_{3}(t)\bigr),
\end{cases}
\qquad
x_{1}(0)=x_L,\;x_{2}(0)=x_M,\;x_{3}(0)=x_R ,
\]

and we define the following constants 
\[
\mu \;:=\;\frac{b}{3a+b},\qquad
S \;:=\;\frac{x_L+x_M+x_R}{3},\qquad
\begin{aligned}
\Delta_1 &:=\frac{2x_L - x_M - x_R}{3},\\
\Delta_2 &:=\frac{-x_L + 2x_M - x_R}{3},\\
\Delta_3 &:=\frac{-x_L - x_M + 2x_R}{3}\;=\;-\bigl(\Delta_1+\Delta_2\bigr).
\end{aligned}
\]

The closed-form solution for the linear ODE system in Equation \ref{Eq:3ODE} is given by 
\[
\begin{aligned}
x_{1}(t) &= -\mu \;+\; S\,e^{-b t} \;+\; \bigl(\Delta_{1}+\mu\bigr)\,e^{-(3a+b)t},\\[6pt]
x_{2}(t) &= S\,e^{-b t} \;+\; \Delta_{2}\,e^{-(3a+b)t},\\[6pt]
x_{3}(t) &= \;\mu \;+\; S\,e^{-b t} \;+\; \bigl(\Delta_{3}-\mu\bigr)\,e^{-(3a+b)t}.
\end{aligned}
\]
with the unique equilibrium $(-\mu,0,\mu)$. 

We define 
\(
\mathcal R_c:=(-\infty,-c)\times(-c,c)\times(c,\infty)
\). The next theorem provides conditions that imply that $\mathcal R_c$ is invariant, that is, it provides conditions on the primitives of the three agent system such that $(x_1(t),x_2(t),x_3(t))_{t\ge0}$  is in $\mathcal{R}_{c}$ for all $t \geq 0$. It is immediate from the analysis below that these conditions are necessary for persistent disagreement, up to boundary cases where the inequalities in conditions $C_{1},C_{2},C_{3}$ hold with equality, as in the two-agent case.\footnote{Providing a full characterization of the three-outlet model would be   more complex than the two agent system as expressions like those in Lemma \ref{lem:cross-eps-band} would be algebraically very messy. Nevertheless, our simulations with large number of agents suggest that the mean-field approximation continues to hold qualitatively in these regimes as well. }

\begin{theorem} \label{Thm:3-agentsystem}
Fix $c>0$ and parameters $a,b>0$.  
The solution of the three agent system $(a,b,x_{L},x_{M},x_{R})$ with $x_{L} < -c$, $x_{M} \in (-c,c)$, $x_{R} > c$, $\mu > c$,  given by $(x_1(t),x_2(t),x_3(t))$ remains in $\mathcal R_c$
for every $t\ge0$ if the PD3 condition holds:
\begin{equation} \label{eq:PD3}
C_1\land C_2\land C_3,
\tag{PD3}
\end{equation}
where
\[
\begin{aligned}
C_1:\;&
\Bigl[S\,(\Delta_1+\mu) > 0\Bigr]
\;\lor\;
\Bigl[S\,(\Delta_1+\mu)<0
      \;\land\;
      \bigl(S<0 \lor r_1\ge1
            \;\lor\;
            (0<r_1<1
             \;\land\;
             r_1^{\,\tfrac b{3a}}
             <
             \tfrac{(3a+b)(\mu-c)}{3a\,S})\bigr)\Bigr],\\
&\qquad
 r_1=  \dfrac{-bS}{(3a+b)(\Delta_1+\mu)};\\[8pt]
C_2:\;&
\Bigl[S\,\Delta_2 > 0\Bigr]
\;\lor\;
\Bigl[S>0 \land \Delta_{2} <0 
      \;\land\;
      \bigl(r_2\ge1
            \;\lor\;
            (0<r_2<1
             \;\land\;
             r_2^{\frac{b}{3a}}
             <
             \frac{c(3a+b)}{3aS})\bigr)\Bigr] \\ 
             & \;\lor\;
\Bigl[S<0 \land \Delta_{2} >0 
      \;\land\;
      \bigl(r_2\ge1
            \;\lor\;
            (0<r_2<1
             \;\land\;
             r_2^{\frac{b}{3a}}
             >
             \frac{c(3a+b)}{-3aS})\bigr)\Bigr],
 r_2=\dfrac{-\,b\,S}{(3a+b)\,\Delta_2};\\[8pt]
C_3:\;&
\Bigl[S\,(\Delta_3-\mu) > 0\Bigr]
\;\lor\;
\Bigl[S\,(\Delta_3-\mu) < 0
      \;\land\;
      \bigl(S > 0 \lor r_3\ge1
            \;\lor\;
            (0<r_3<1
             \;\land\;
             r_3^{\,\tfrac{b}{3a}}
             < 
             \tfrac{(3a+b) (\mu-c)}{-3aS } )\bigr)\Bigr],\\
&\qquad
 r_3=\dfrac{-\,b\,S}{(3a+b)\,(\Delta_3-\mu)}.
\end{aligned}
\]

\end{theorem}
\vspace{-0.6em}
 
\begin{proof}
To prove the result 
we compute the time derivatives
\begin{align*}
    \dot{x}_1(t) & =  -bS e^{-bt}  - (3a+b) (\Delta_{1} + \mu) e^{-(3a+b)t} 
    \\
    \dot{x}_2(t) & = -bS e^{-bt}  - (3a+b) \Delta_{2} e^{-(3a+b)t} \\ 
     \dot{x}_3(t) & = - bS e^{-bt}   - (3a+b) (\Delta_{3} - \mu) e^{-(3a+b)t} 
\end{align*}

Each $C_i$ guarantees that $x_{i}$ remains in $\mathcal{R}_{c}$ where the first disjunct guarantees this through monotonicity, i.e., the two exponentials pull in the same direction, so the only potential extremum is at $t=0$ and $t=\infty$, while the second disjunct computes the extremum (minimum or maximum) and bounding it so that $x_{i}$  remains in $\mathcal{R}_{c}$. 
Exactly one disjunct is relevant for every concrete parameters.

First let $x_j(t) = D_0 + D_1 e^{-bt} + D_2 e^{-(3a+b)t}$. Then 
$$x_j'(t) = -b D_1 e^{-bt} - (3a+b) D_2 e^{-(3a+b)t}.$$

The approach to proving Theorem~\ref{Thm:3-agentsystem} is to observe that if $D_{1}$ and $D_{2}$ have the same sign, then $x_{j}$ is monotonic, and hence the extremum occurs at either the initial or terminal point. If they have opposite signs, we analyze the interior extremum point.

Assume now $D_{1}$ and $D_{2}$ have opposite signs. 

If $x_j'(t_{crit})=0$, 
then $(3a+b)D_2 e^{-(3a+b)t_{crit}} = -bD_1 e^{-bt_{crit}}$. 
The second derivative at such a critical point is 
$$x_j''(t_{crit}) = b^2 D_1 e^{-bt_{crit}} + (3a+b)^2 D_2 e^{-(3a+b)t_{crit}} = b^2 D_1 e^{-bt_{crit}} + (3a+b)(-bD_1 e^{-bt_{crit}}) = -3ab D_1 e^{-bt_{crit}}.$$ 
For $x_1, x_2, x_3$, the coefficient $D_1$ is $S$.
So, $x_j''(t_{crit}) = -3abS e^{-bt_{crit}}$.

Hence, if $S>0$ we have a local maximum and if $S<0$ we have a local minimum. 

We will now use this logic to analyze three different representative cases and the rest of the cases follow from a symmetric argument. 

\textbf{Case 1} $S > 0$, $\Delta _{2} > 0, \Delta_{1} + \mu > 0 $

Note that $\Delta_{3} - \mu = -\Delta_{1} - \Delta_{2} - \mu < 0$ so $C_1\land C_2\land C_3$ holds. 

First note $\dot{x}_1 (t) \leq 0$ for all $t$ so $x_1(t) \leq x_1(0) = x_{L} < -c$ for all $t$. 

In addition, $\dot{x}_2 (t) \leq 0$ for all $t$ so $x_{2}(t)$ is decreasing from $x_{2}(0) = x_{M} \in (-c,0]$ to $0$ and hence remains in the interval $(-c,c)$. 

Using the expression for $\dot{x}_3(t)$, we see that
\begin{align*}
    \dot{x}_3(t) > 0 \Longleftrightarrow t < \frac{1}{3a} \log \left( \frac{(\mu - \Delta_{3})(3a+b)}{bS}  \right) \equiv t^*
\end{align*}
So $x_3$ increases up to $t^*$ and decreases afterward, which implies $x_3(t) \geq \min(x_3(0), x_3(\infty)) > c$. Thus, in this case, we have $x_{1}(t) < -c$, $x_{2}(t) \in [-c,c]$, and $x_{3}(t) > c$ as required.

\textbf{Case 2}  $S>0,\quad
\Delta_2>0,\quad
\Delta_1+\mu<0,\quad
\alpha:=\Delta_3-\mu<0.$

With $S>0$ and $\beta:=\Delta_1+\mu<0$ we need to analyze the maximum of $x_1(t)$.
We have a critical point in $(0,\infty)$ exactly when $r_{1}<1$ and it is given by $$ t^{*} = \frac{1}{3a} \ln \frac{1}{r_{1}}.$$ When $r_{1} \geq 1$, $x_{1}(t)$ is monotone, and hence, we have $x_{1} (t) < -c$ for all $t \geq 0$ as $x_{1}(0) < -c$ and $-\mu < -c$. So assume $r_{1} \in (0,1)$. The critical point satisfies $-bS e^{-bt^{*} }  - (3a+b) \beta e^{-(3a+b)t^{*}} = 0 $. 

Plugging this into $x_{1}(t)$ we get $$x_{1}(t^{*}) = - \mu +S e ^{-bt^{*} } - \frac{bS}{3a+b} e ^{-bt^{*} }$$ so $x_{1}(t^{*}) < - c$ if and only if 
$$ r_{1}^{b/(3a)} < \frac{(3a+b)(\mu-c)}{3a\,S}$$
as required by $C_{1}$.

Now $S>0$ and $\Delta_2>0$ give $S\Delta_2>0$, so $C_2$ is satisfied via its first disjunct and
$ x_2(t) \in (-c,c)$ for all $t \geq 0$ as the path is monotone and $x_{2}(0) = x_M\in (-c,c)$, $x_{2}(\infty) = 0$.

Note that $C_3$ holds and $x_{3}(t) > c$ for all $t \geq 0 $ follows from the same argument as in Case 1. 

\textbf{Case 3}  $S>0,\quad
\Delta_2<0,\quad
\Delta_1+\mu > 0,\quad
\alpha:=\Delta_3-\mu<0.$ 

$C_{1}$ holds and $x_{1}(t)$ is monotone which implies that $x_{1}(t) < -c$ for all $t \geq 0$. 

For $x_{2}(t)$, $S>0$ implies that $x_{2}(t)$ has a maximum and we need to verify that the value of the maximum is less than $c$. Note that the maximum $t^{*}$ satisfies $-\Delta_{2}(3a+b) e ^{-(3a+b)t^{*} } = Sbe^{-bt^{*} }. $ 

Plugging this into $x_{2}$ and noting that $t^{*} = \frac{1}{3a} \ln \frac{1}{r_{2}}$ when $r_{2} \in (0,1)$ yield 
$$x_{2}(t^{*}) = S r_{2} ^{b/3a} - \frac{Sb}{3a+b} r_{2} ^{b/3a}. $$ 
Thus, $x_{2}(t^{*}) < c$ if and only if 
$$r_{2} ^{b/3a} < \frac{c(3a+b) }{3aS} $$
as required by condition $C_{2}$. If $r_{2} \geq 1$ then $x_{2}(t) $ is monotone on $[0,\infty)$, and hence, remains in $(-c,c)$. 

 $C_3$ holds and $x_{3}(t) > c$ for all $t \geq 0 $ follows from the same argument as in Case 1. 
\end{proof}

\subsection{Mean Field Approximation Theorem}
 We now state the approximation theorem for the stochastic block model with three blocks.  Let $X_{PD3}(a,b,c)$   be the set in $\mathbb{R}^{3}$ that includes the initial opinions that satisfy the PD3 condition defined in Theorem \ref{Thm:3-agentsystem} for fixed $a,b,c$.

\begin{theorem} \label{Theorem: 3BSMmain}
    Suppose that $p(n)$ and $q(n)$ are $\omega(\ln (n) / n)$. Fix $c>0$ and assume that $\epsilon$ is small so the characterization of  Theorem \ref{Thm:3-agentsystem} holds.  Let  $\rho = \lim_{n \rightarrow \infty} q(n)/(2q(n)+p(n)) $.

 Consider a three agent system $(a \rho ,b, (x_{L},x_{M},x_{R}) )$ that satisfies the PD3  condition with an equilibrium $\boldsymbol{e} = (e_{L},e_{M},e_{R})$.  If $X_{L} \times X_{M} \times  X_{R} \subseteq X_{PD3}(a\rho,b,c)$,  
 then for every $\epsilon' >0$, there exist $\delta>0$, $N>0$, $T>0$, and sets $C_{\delta,n} \subset \boldsymbol{A}_{n}$ such that  $\lim  _{n \rightarrow \infty} \mathbb{P}(C_{\delta,n}) = 1$ and for all $n \geq N$, all $t \geq T$, all $A_{n} \in C_{\delta,n}$, all $\boldsymbol{x}_{0} \in (X_{L} \times X_{M} \times X_{R})^{n}$, we have 
 \begin{equation}\label{Eq:MainThmStatement3BSM}
|  x_{i,n,A_{n},\boldsymbol{x}_{0} }(t) - e_{L}| \leq \epsilon' \text {, } |  x_{k,n,A_{n},\boldsymbol{x}_{0} }(t) - e_{M}| \leq \epsilon' \text{ and } |  x_{j,n,A_{n},\boldsymbol{x}_{0}}(t) - e_{R}| \leq \epsilon'
\end{equation}
 for every left agent $i$, every moderate agent $k$, and every right agent $j$. 
\end{theorem}

Let $\delta \in (0,1)$, $n \geq 0$ and denote  the set of left, moderate and right agents by 
\(\mathcal L_n,\ \mathcal M_n,\ \mathcal R_n\), respectively.  Similarly to the proof of Theorem \ref{Thm:Main} we define for each agent $i$, 
\[
L_{n,i}:=|N(i)\cap\mathcal L_n|,\quad
M_{n,i}:=|N(i)\cap\mathcal M_n|,\quad
R_{n,i}:=|N(i)\cap\mathcal R_n|,
\]

and the sets 
\[
S_{\delta,n}:=\bigl[(1-\delta)p(n)n,\,(1+\delta)p(n)n\bigr],\qquad
D_{\delta,n}:=\bigl[(1-\delta)q(n)n,\,(1+\delta)q(n)n\bigr]
\]
and
\[
C_{\delta,n}
=\Bigl\{A_n:\;
\begin{array}{l}
\text{for all } i\in\mathcal L_n:\;
          L_{n,i}\in S_{\delta,n},\;
          M_{n,i},R_{n,i}\in D_{\delta,n},\\[2pt]
\text{for all } i\in\mathcal M_n:\;
          M_{n,i}\in S_{\delta,n},\;
          L_{n,i},R_{n,i}\in D_{\delta,n},\\[2pt]
\text{for all } i\in\mathcal R_n:\;
          R_{n,i}\in S_{\delta,n},\;
          L_{n,i},M_{n,i}\in D_{\delta,n}
\end{array}
\Bigr\}.
\]
From the proof of Lemma \ref{Lemma:Concent} and the union bound we have 
$\mathbb{P} (C_{\delta,n} )\xrightarrow[n\to\infty]{}1$.

Put
\[
a_-:=\frac{a(1-\delta)q(n)}{(1+\delta)\bigl[p(n)+2q(n)\bigr]},
\qquad
a_+:=\frac{a(1+\delta)q(n)}{(1-\delta)\bigl[p(n)+2q(n)\bigr]},
\qquad a_-<a_+ .
\]

We define the following system of differential equations: 

\begin{equation} \label{Eq:ODE3}
\begin{aligned} 
\dot{\bar x}_L &= a_+\!\!\bigl[(\bar x_M-\bar x_L)+(\bar x_R-\bar x_L)\bigr] + b(\sgn_{\epsilon,c}(\bar x_{L}) -\bar x_L),\\
\dot{ \underline x}_L      &= a_-\!\!\bigl[(\underline x_M- \underline x_L)+(\underline x_R- \underline x_L)\bigr]              + b(\sgn_{\epsilon,c}(\underline x_{L}) - \underline x_L),\\[4pt]
\dot{\bar x}_M &= a_-\!(\bar x_L-\bar x_M)+a_+(\bar x_R-\bar x_M) + b(\sgn_{\epsilon,c}(\bar x_{M}) -\bar x_M),\\
\dot{\underline x}_M      &= a_+\!(\underline x_L- \underline x_M)+a_-(\underline x_R- \underline x_M) + b( \sgn_{\epsilon,c}(\underline x_{M}) - \underline x_M),\\[4pt]
\dot{\bar x}_R &= a_-\!\!\bigl[(\bar x_L-\bar x_R)+(\bar x_M-\bar x_R)\bigr] + b(\sgn_{\epsilon,c}(\bar x_{R})-\bar x_R),\\
\dot{\underline x}_R      &= a_+\!\!\bigl[(\underline x_L- \underline x_R)+( \underline x_M-\underline x_R)\bigr]              + b(\sgn_{\epsilon,c}(\underline x_{R})- \underline x_R)
\end{aligned}
\end{equation}
with the initial conditions $\bar x_{g}(0) = \max X_{g} $ and $\underline x_{g}(0) = \min X_{g} $  for $g \in \{ L,M,R \}$.

Note that the choice of the coefficients $a_+$ and  $a_-$ is more subtle than in the two block setting as each agent now interacts with two out-groups, so the coefficients choice  need to guarantee that we can prove the following version of Lemma \ref{Lemma: bounds} that shows that we can bound the  agents' opinions $\boldsymbol{x}_{n,A_{n}, \boldsymbol{x}_{0}}(t)$ for adjacency matrices $A_{n}$ that belong to $C_{\delta,n}$ and initial opinions that belong to $(X_{L} \times X_{M} \times  X_{R})^{n}$ by the auxiliary system of differential equations above.

\begin{lemma} \label{Lemma: bounds3} Suppose that $  {\bar{ x}}_L(t)  < {\underline{ x}}_M(t) $ and  $  {\bar{ x}}_M(t)  < {\underline{ x}}_R(t) $  for $t \geq 0$. 

(i) We have ${\underline{ x}}_R(t) \leq  {\bar{ x}}_R(t)$, ${\underline{ x}}_M(t) \leq  {\bar{ x}}_M(t)$, and ${\underline{ x}}_L(t) \leq {\bar{ x}}_L(t)$ for all $t \geq 0  $. 

(ii)  For all $A_{n} \in C_{\delta,n}$, $\boldsymbol{x}_{0} \in (X_{L} \times X_{M} \times X_{R})^{n}$ and all $t \geq 0 $ we have
 \begin{align*} \label{Eq: 3SBMineq}
& {\underline{ x}}_L(t) \leq  x_{i,n,A_{n}, \boldsymbol{x}_{0}}(t) \leq   {\bar{ x}}_L(t), \ \forall i \in \mathcal{L}_{n}, \text {  } {\underline{ x}}_M(t) \leq  x_{k,n,A_{n}, \boldsymbol{x}_{0}}(t) \leq   {\bar{ x}}_M(t), \ \forall k \in \mathcal{M}_{n}, \\   & {\underline{ x}}_R(t) \leq  x_{j,n,A_{n} , \boldsymbol{x}_{0}}(t) \leq   {\bar{ x}}_R(t),  
\ \forall j \in \mathcal{R}_{n}.
\end{align*}
\end{lemma} 

\begin{proof}
(i) Assume in contradiction that there exists a $t \geq 0$ such that   ${\underline{ x}}_R(t) > {\bar{ x}}_R(t)$ or ${\underline{ x}}_M(t) > {\bar{ x}}_M(t)$ or ${\underline{ x}}_L(t) > {\bar{ x}}_L(t)$. Note that $t>0$.

Let $t_{1} = \inf \{t \in [0,\infty): {\underline{ x}}_R(t) > {\bar{ x}}_R(t) \text{ or } {\underline{ x}}_M(t) > {\bar{ x}}_M(t) \text{ or }  {\underline{ x}}_L(t) > {\bar{ x}}_L(t) \}$. By  the contradiction assumption and the continuity of the solutions of the ordinary differential equations  $t_{1}$ is finite, and we have  $ {\underline{ x}}_g(t_{1}) \leq  {\bar{ x}}_g(t_{1})$ for $g \in \{L,M,R\}$ with at least one of the inequalities as equality. 
We will only consider the case $ {\underline{ x}}_R(t_{1}) \leq {\bar{ x}}_R(t_{1})$,  $ {\underline{ x}}_M(t_{1}) = {\bar{ x}}_M(t_{1})$ and ${\underline{ x}}_L(t_{1}) \leq {\bar{ x}}_L(t_{1})$ as the rest of the cases follow from an analogous argument. We have
\begin{align*}
    \dot{\bar{x}}_M(t_{1}) &= 
       a_{-}(\bar{x}_L(t_{1}) - \bar{x}_M(t_{1})) +  a_{+}(\bar{x}_R(t_{1}) - \bar{x}_M(t_{1})) + b( \sgn_{\epsilon,c}(\bar x_{M}(t_{1}) )-  \bar{x}_M(t_{1}))
      \\
      & >
       a_{+}(\underline{x}_L(t_{1}) - \underline{x}_M(t_{1})) +  a_{-}(\underline{x}_R(t_{1}) - \underline{x}_M(t_{1})) + b( \sgn_{\epsilon,c}(\underline x_{M}(t_{1}) )-  \underline{x}_M(t_{1}))   =  \dot{\underline{x}}_M(t_{1})
\end{align*}
so that $\underline{ x}_M(t) < {\bar{ x}}_M(t)$ for $t > t_{1}$ that is close enough to $t_{1}$. 

For $g \in \{L,R \}$, from continuity, if ${\underline{ x}}_g(t_{1}) < {\bar{ x}}_g(t_{1})$  then ${\underline{ x}}_g(t) < {\bar{ x}}_g(t)$ for $t > t_{1}$  that is close enough to $t_{1}$. If  ${\underline{ x}}_g(t_{1}) = {\bar{ x}}_g(t_{1})$ then we can continue exactly as the inequalities above, e.g., if  ${\underline{ x}}_L(t_{1}) = {\bar{ x}}_L(t_{1})$ then 
\begin{align*}
    \dot{\bar{x}}_L(t_{1}) &= 
       a_{+}(\bar{x}_M(t_{1}) - \bar{x}_L(t_{1}))  +   a_{+}(\bar{x}_R(t_{1}) - \bar{x}_L(t_{1})) 
    +
       b(\sgn_{\epsilon,c}(\bar x_{L}(t_{1}) ) - \bar{x}_L(t_{1})) \\
      & >
         a_{-}(\underline{x}_M(t_{1}) - \underline{x}_L(t_{1}))  +   a_{-}(\underline{x}_R(t_{1}) - \underline{x}_L(t_{1})) 
    +
       b(\sgn_{\epsilon,c}(\underline x_{L}(t_{1}) ) - \underline{x}_L(t_{1})) = \underline{x}_L(t_{1})
\end{align*}
so that $\underline{ x}_L(t) < {\bar{ x}}_L(t)$ for $t > t_{1}$ that is close enough to $t_{1}$. 
Hence, $\underline{ x}_R(t) < {\bar{ x}}_R(t)$, $\underline{ x}_M(t) < {\bar{ x}}_M(t)$, and ${\underline{ x}}_L(t) < {\bar{ x}}_L(t)$ for $t > t_{1}$ that is close enough to $t_{1}$, which is a contradiction to the definition of $t_{1}$.

(ii) Let $A_{n} \in C_{\delta,n}$ and $\boldsymbol{x}_{0} \in (X_{L} \times X_{M} \times X_{R} ) ^{n}$ . The claim of the lemma  holds for $t=0$ by construction. Suppose that the claim holds for some $t>0$. 

Consider the case where agent $i$ is a moderate agent that has the highest opinion at time $t$ among all moderate agents. We have
\begin{align*}
    \dot{x}_{i,n,A_{n}, \boldsymbol{x}_{0}}(t) &= \frac{a}{|N(i)|}\sum_{j\in N(i)} (x_{j,n,A_{n}, \boldsymbol{x}_{0}}(t)-x_{i,n,A_{n}, \boldsymbol{x}_{0}}(t)) + b(\sgn_{\epsilon,c} (x_{i,n,A_{n}, \boldsymbol{x}_{0}}(t) )-  x_{i,n,A_{n}, \boldsymbol{x}_{0}}(t) ) \\
    & \leq  \frac{a}{|N(i)|}\sum_{j\in N(i) \cap \mathcal{L}_{n} } (x_{j,n,A_{n}, \boldsymbol{x}_{0}}(t)-x_{i,n,A_{n}, \boldsymbol{x}_{0}}(t)) \\ 
    & +  \frac{a}{|N(i)|}\sum_{j\in N(i)  \cap \mathcal{R}_{n}} (x_{j,n,A_{n}, \boldsymbol{x}_{0}}(t)-x_{i,n,A_{n}, \boldsymbol{x}_{0}}(t))  + b(\sgn_{\epsilon,c} (x_{i,n,A_{n}, \boldsymbol{x}_{0}}(t) )-  x_{i,n,A_{n}, \boldsymbol{x}_{0}}(t) )\\
    & \leq  \frac{a|N(i)\cap \mathcal{L}_{n} | }{|N(i)|}(\bar{ x}_L(t)-x_{i,n,A_{n}, \boldsymbol{x}_{0}}(t))  + \frac{a|N(i)\cap \mathcal{R}_{n} | }{|N(i)|}(\bar{ x}_R(t)-x_{i,n,A_{n}, \boldsymbol{x}_{0}}(t)) \\
    & + b(\sgn_{\epsilon,c} (x_{i,n,A_{n}, \boldsymbol{x}_{0}}(t) )-  x_{i,n,A_{n}, \boldsymbol{x}_{0}}(t) ) \\
    & \leq a_{-} ({\bar{ x}}_L(t)-x_{i,n,A_{n}, \boldsymbol{x}_{0}}(t)) + a_{+} ({\bar{ x}}_R(t)-x_{i,n,A_{n}, \boldsymbol{x}_{0}}(t))  + b(\sgn_{\epsilon,c} (x_{i,n,A_{n}, \boldsymbol{x}_{0}}(t) )-  x_{i,n,A_{n}, \boldsymbol{x}_{0}}(t) ).
    \end{align*} 
    The first inequality follows because $x_{i',n,A_{n}, \boldsymbol{x}_{0}}(t)-x_{i,n,A_{n}, \boldsymbol{x}_{0}}(t) \leq 0 $ for every moderate agent $i' \in \mathcal{M}_{n}$. The second inequality follows because $x_{i,n,A_{n}, \boldsymbol{x}_{0}}(t) \leq \bar{x}_{L}(t)$ for every left agent $i \in \mathcal{L}_{n}$ and $x_{j,n,A_{n}, \boldsymbol{x}_{0}}(t) \leq \bar{x}_{R}(t)$ for every right agent $j \in \mathcal{R}_{n}$. The third inequality follows because
    $A_{n} \in C_{\delta,n}$ implies that the cardinality of $N(i) \cap \mathcal{L}_{n}$ is between $(1-\delta)q(n)n$ and $(1+\delta)q(n)n$, and the cardinality of $N(i)$ is between $(1-\delta)(p(n)+2q(n))n$  and $(1+\delta)(p(n)+2q(n))n$. 
    
    We conclude that $\dot{x}_{i,n,A_{n}, \boldsymbol{x}_{0} }(t) \leq \bar{U}(t,x_{i,n,A_{n}, \boldsymbol{x}_{0}}(t))$ where 
    $$  \bar{U}(t, y)= a_{-} (\bar{x}_L(t) - y) + a_{+}(\bar{x}_R(t) - y) +b(\sgn_{\epsilon,c} (y) -y) .$$
    
     Theorem 4.1 in \cite{Hartman2002} implies that $x_{i,n,A_{n}, \boldsymbol{x}_{0}}(t) \leq \bar{ x}_M(t)$ on $[t, t+\delta']$ for some $\delta' >0$ that does not depend on $t$. An analogous argument proves that the other inequalities in the statement of the lemma  hold on an interval $[t, t+\delta'']$ for some $\delta'' >0$. 
     We conclude that the inequalities in the statement of the Lemma hold for all $t  \geq 0$, and the proof is complete. 
\end{proof}

Now suppose that the three-agent system $(a \rho ,b, (x_{L},x_{M}, x_{R}) )$ satisfies the PD3 condition, has a solution $(x_{1}(t),x_{2}(t),x_{3}(t))$,  an equilibrium $\boldsymbol{e}=(e_{L},e_{M},e_{R})$ and that $X_{L} \times X_{M} \times X_{R} \subseteq X_{PD3}(a\rho , b,c)$. From the proof of Theorem \ref{Thm:3-agentsystem} we have $\boldsymbol{e}= (-\mu,0, \mu) $.
 
From the Theorem's assumption, for a small fixed $\delta>0$ we can find a large $N$ such that for all $n \geq N$  and all $\epsilon'' >0$ we have $a_{-} \geq a \rho - \epsilon''$ and $a_{+} \leq a \rho +\epsilon''$. Note that this implies that $X_{L} \times X_{M} \times X_{R} \subseteq X_{PD3}(a_{-} , b,c)$ and $X_{L} \times X_{M} \times X_{R} \subseteq X_{PD3}(a_{+} , b,c)$ for small enough $\delta$ and large $n$. 
 
 Consider the following differential equations:
 \begin{equation} \label{eq:bd:positive3}
     \begin{aligned}
       \dot {\bar{ x}}_L' (t) &=  
       a_{+}(\bar{x}_M '(t) - \bar{ x}_L'(t))  +  a_{+}(\bar{x}_R '(t) - \bar{ x}_L'(t)) + 
       b(-1 - \bar{ x}_L'(t))  \\
         \dot {\bar{ x}}_M' (t) &=  
       a_{-}(\bar{x}_L '(t) - \bar{ x}_M'(t))  +    a_{+}(\bar{x}_R '(t) - \bar{ x}_M'(t))  
       - b \bar{ x}_M'(t)  \\
\dot{\bar{x}}_R'(t) &= 
       a_{-}(\bar{x}_L'(t) - \bar{x}_R'(t))  +   a_{-}(\bar{x}_M'(t) - \bar{x}_R'(t))+ 
       b(1 - \bar{x}_R'(t))  
      \end{aligned}
 \end{equation}
 with the initial conditions $\bar{x}_L'(0)=\bar{x}_{L}(0)$, $\bar{x}_M'(0)=\bar{x}_{M}(0)$
 and
 $\bar{x}_R'(0)=\bar{x}_{R}(0)$.
 Now we can use the same argument from Theorem \ref{Thm:Main} to show that under the PD3 condition 
 to  conclude that  $\bar{ x}_L (t)$,  $\bar{ x}_M (t)$,  and $\bar{ x}_R (t)$ satisfy the  differential equations given in Equation (\ref{eq:bd:positive3}) for a small $\epsilon>0$ and $ |\bar{ x}_L (t) - e_{L} | \leq \delta' $, $ |\bar{ x}_M (t) - e_{M} | \leq \delta' $, $ |\bar{ x}_R (t) - e_{R} | \leq \delta' $ for all $t \geq T'$ and small enough $\epsilon''$.  A similar argument (using Equation (\ref{eq:bd:positive3}) with initial opinions $\underline{x}_{L}(0)$, $\underline{x}_{M}(0)$,  $\underline{x}_{R}(0)$) shows that  $ |\underline{ x}_L (t) - e_{L} | \leq \delta' $, $ |\underline{ x}_M (t) - e_{M} | \leq \delta' $ and $ |\underline{ x}_R (t) - e_{R} | \leq \delta' $ for all $t \geq T'$ and small enough $\epsilon''$.

We can now apply Lemma \ref{Lemma: bounds3}  to conclude that for all $\epsilon' >0$, we can find an $N$, $T'$, and $\delta>0$ such that for all $t \geq T'$ and all $A_{n} \in C_{\delta,n}$, $\boldsymbol{x}_{0} \in (X_{L} \times X_{M} \times X_{R})^{n}$ with $n \geq N$, we have $|x_{i,n,A_{n}, \boldsymbol{x}_{0}}(t) - e_{L}| \leq \epsilon'$ for every left agent $i$,  $|x_{k,n,A_{n}, \boldsymbol{x}_{0}}(t) - e_{M}| \leq \epsilon'$ for every moderate agent $k$, and $|x_{j,n,A_{n}, \boldsymbol{x}_{0}}(t) - e_{R}| \leq \epsilon'$ for every right agent $j$.

\end{document}